\numberwithin{equation}{section}
\newtheorem{theorem}{Theorem}[section]
\newtheorem{lemma}{Lemma}[section]
\newtheorem{proposition}{Proposition}[section]
\newtheorem{remark}{Remark}[section]
\newtheorem{corollary}{Corollary}[section]
\newcommand{\btheta}{\boldsymbol{\theta}}
\newcommand{\bbeta}{\boldsymbol{\beta}}
\newcommand{\bdelta}{\boldsymbol{\delta}}
\newcommand{\hDelta}{\widehat{\Delta}}
\newcommand{\bEta}{\boldsymbol{\eta}}
\newcommand{\bmu}{\boldsymbol{\mu}}
\newcommand{\hmu}{\widehat{\boldsymbol{\mu}}}
\newcommand{\tmu}{\widetilde{\boldsymbol{\mu}}}
\newcommand{\bgamma}{\boldsymbol{\gamma}}
\newcommand{\hgamma}{\widehat{\boldsymbol{\gamma}}}
\newcommand{\tgamma}{\widetilde{\boldsymbol{\gamma}}}
\newcommand{\cgamma}{\check{\boldsymbol{\gamma}}}
\newcommand{\balpha}{\boldsymbol{\alpha}}
\newcommand{\halpha}{\widehat{\boldsymbol{\alpha}}}
\newcommand{\talpha}{\widetilde{\boldsymbol{\alpha}}}
\newcommand{\bxi}{\boldsymbol{\xi}}
\newcommand{\txi}{\widetilde{\bxi}}
\newcommand{\bnu}{\boldsymbol{\nu}}
\newcommand{\tnu}{\widetilde{\boldsymbol{\nu}}}
\newcommand{\bX}{\boldsymbol{X}}
\newcommand{\bA}{\mathbf{A}}
\newcommand{\bQ}{\mathbf{Q}}
\newcommand{\bI}{\boldsymbol{\mathrm{I}}}
\newcommand{\dI}{\boldsymbol{\mathds{I}}}
\newcommand{\bu}{\boldsymbol{u}}
\newcommand{\bb}{\boldsymbol{b}}
\newcommand{\bB}{\boldsymbol{B}}
\newcommand{\tB}{\widetilde{\boldsymbol{B}}}
\newcommand{\bC}{\mathbf{C}}
\newcommand{\bD}{\mathbf{D}}
\newcommand{\bZ}{\boldsymbol{Z}}
\newcommand{\bU}{\boldsymbol{U}}
\newcommand{\bG}{\mathbf{G}}
\def \bdelta{\boldsymbol{\delta}}
\newcommand{\hbtheta}{\widehat{\boldsymbol{\theta}}}
\newcommand{\tbtheta}{\widetilde{\boldsymbol{\theta}}}
\newcommand{\R}{\mathbb{R}^p}
\newcommand{\E}{\mathbb{E}}
\newcommand{\Prob}{\mathbb{P}}
\newcommand{\bM}{\mathbf{M}}
\newcommand{\hbM}{\widehat{\mathbf{M}}}
\def \tbM{\widetilde{\mathbf{M}}}
\newcommand{\bSigma}{\boldsymbol{\Sigma}}
\newcommand{\hbSigma}{\widehat{\boldsymbol{\Sigma}}}
\newcommand{\Op}{O_{\mathbb{P}}}
\def \bx{\boldsymbol{x}}
\newcommand{\infnorm}[1]{|#1|_{\infty}}
\newcommand{\LRinfnorm}[1]{\left|#1\right|_{\infty}}
\newcommand{\twonorm}[1]{\|#1\|_{2}}
\newcommand{\LRtwonorm}[1]{\left\|#1\right\|_{2}}
\newcommand{\onenorm}[1]{\|#1\|_{1}}
\newcommand{\abs}[1]{|#1|}
\newcommand{\LRabs}[1]{\left|#1\right|}
\newcommand{\trans}[1]{#1^{\top}}
\newcommand{\cond}[1]{$(\mathbf{C}#1)$}
\newcommand{\LRs}[1]{\left(#1\right)}
\newcommand{\LRm}[1]{\left[#1\right]}
\newcommand{\LRl}[1]{\left\{#1\right\}}
\title{\bf Varying Coefficient Linear Discriminant Analysis for Dynamic Data}
\author[1]{Yajie Bao\thanks{Email: \texttt{baoyajie2019stat@sjtu.edu.cn}}}
\author[1]{Yuyang Liu\thanks{Email: \texttt{d0408x@sjtu.edu.cn}}}
\affil[1]{School of Mathematical Sciences\\Shanghai Jiao Tong University}
\begin{document}
\maketitle
\begin{abstract}
	Linear discriminant analysis (LDA) is an important classification tool in statistics and machine learning. This paper investigates the varying coefficient LDA model for dynamic data, with Bayes' discriminant direction being a function of some exposure variable to address the heterogeneity. We propose a new least-square estimation method based on the B-spline approximation. The data-driven discriminant procedure is more computationally efficient than the dynamic linear programming rule \citep{jiang2020dynamic}. We also establish the convergence rates for the corresponding estimation error bound and the excess misclassification risk. The estimation error in $L_2$ distance is optimal for the low-dimensional regime and is near optimal for the high-dimensional regime. Numerical experiments on synthetic data and real data both corroborate the superiority of our proposed classification method.
\end{abstract}

\section{Introduction}
Classification is one of the most essential topics in statistics and machine learning, and widely applied in many scientific and industrial fields. Consider a pair of random variables $(\bX, Y)$, where $\bX \in \R$ is the covariate and $Y\in \{0,1\}$ is the label. If $Y=1$, the covariate $\bX$ follows the $p$-dimensional multivariate normal distribution $\mathcal{N}(\bmu_1, \bSigma)$, otherwise $\bX$ is distributed as $\mathcal{N}(\bmu_2, \bSigma)$. We assume the prior probabilities of two classes are equal, that is $\Prob(Y = 1) = \Prob(Y=0) = 1/2$. For a new random covariate $\bX_{\text{new}}$, we aim to predict its unknown label $Y_{\text{new}}$ according to some discriminant rule. If we know the parameters $\bmu_{1}$, $\bmu_{2}$ and $\bSigma$ in advance, let $\bmu=(\bmu_1+\bmu_2)/2$, then the well-known Bayes' linear discriminant rule is given by
\begin{equation}\label{1.1}
	 \psi(\bX_{\text{new}}) = \dI\left((\bX_{\text{new}}-\bmu)^{\top} \bSigma^{-1}(\bmu_1-\bmu_2) \geq 0\right)
\end{equation}
where $\bSigma^{-1}(\bmu_1-\bmu_2)$ is called Bayes' discriminant direction. In real data analysis, a data-driven Bayes' classification rule is given by plugging sample means $\hmu_{1}$, $\hmu_{2}$ and pooled sample covariance matrix $\hbSigma$ in \eqref{1.1}, which is asymptotically optimal when the dimensionality $p$ is fixed \citep{anderson1958introduction}. 

Driven by contemporary measurement technologies, high-dimensional data sets have been broadly collected in classification problems. Classical LDA has been proved to perform poorly (no better than random guessing) in the high-dimensional setting, especially when the dimension is much larger than the sample size \citep{bickel2004some}. To address high-dimensional issue, the sparsity assumption is introduced to LDA. Several proposed methods assumed that both $\bSigma^{-1}$ and $\bmu_{1} - \bmu_{2}$ have sparse structures. For example, \cite{shao2011sparse} used the thresholding procedure to estimate $\bSigma^{-1}$ and $\bmu_{1} - \bmu_{2}$ separately, then constructed a plug-in sparse Bayes' linear discriminant rule. Similar regularized methods can also be found in \cite{guo2007regularized, wu2009sparse, witten2009covariance}, etc. In addition, some works only assumed Bayes' linear discriminant direction $\bSigma^{-1}(\bmu_{1} - \bmu_{2})$ is sparse. \cite{cai2011direct} proposed the linear programming discriminant (LPD) rule by directly estimating the product $\bSigma^{-1}(\bmu_{1} - \bmu_{2})$ through constrained $\ell_1$ minimization. Recently, \cite{tony2019high} proposed an adaptive LPD procedure that achieved the minimax optimal convergence rate of estimation error and excess misclassification risk in high-dimensional case. \cite{mai2012direct} estimated the sparse discriminant direction via a sparse penalized least squares formulation. \cite{MAI2015175} studied high-dimensional sparse semiparametric discriminant analysis and relaxed the Gaussian assumption. For multiclass problem, \cite{mai2019multiclass} proposed a sparse discriminant procedure by estimating all discriminant directions simultaneously.

Heterogeneous data is widespread in many modern scientific fields, such as finance, biology, and astronomy \citep{fan2014challenges}. The prevalent statistical approach to address the heterogeneity is imposing the dynamic or varying coefficient assumption, where the population means and covariance matrix may vary with some observable exposure variable. In specific, \cite{chen2019new,chen2016dynamic,wang2021nonparametric} investigated the dynamic covariance model in the high-dimensional regime. Under the dynamic setting, Bayes' discriminant direction is a function of the exposure variable. Consequently, classical plug-in Bayes' discriminant rule will deteriorate in analyzing non-static data, and thus leads to unsatisfactory performance. To address the dynamic data, \cite{jiang2020dynamic} proposed the dynamic linear programming discriminant (DLPD) rule by assuming $\bmu_1$, $\bmu_2$ and $\bSigma$ are functions of some $q$-dimensional random covariate $\boldsymbol{U}$. To estimate the sparse Fisher’s linear discriminant direction function $\bbeta^{*}(\bu) = \bSigma^{-1}(\bu)(\bmu_{1}(\bu) - \bmu_{2}(\bu))$ given $\bU = \bu$, they first used the Nadaraya-Watson method to obtain estimators $\hmu_{1}(\bu)$, $\hmu_{2}(\bu)$ and $\hbSigma(\bu)$. Then they estimated $\boldsymbol{\beta}^{*}(\bu)$ using the linear programming approach \cite{cai2011direct, candes2007}:
\begin{equation}\label{DLPD}
	\widehat{\bbeta}(\bu) = \arg\min_{\bbeta \in \R} \left\{\onenorm{\bbeta}\text{ subject to }\infnorm{\hbSigma(\bu)\bbeta-\left(\hmu_{1}(\bu)-\hmu_{2}(\bu)\right)}\leq \lambda_n\right\},
\end{equation}
where $\lambda_n$ is a tuning parameter. However, this classification procedure is computationally expensive for large scale prediction problem. For each new observation $(\bX_{\text{new}}, \boldsymbol{U}_{\text{new}})$, DLPD method needs to re-estimate $\hmu_{1}(\boldsymbol{U}_{\text{new}})$, $\hmu_{2}(\boldsymbol{U}_{\text{new}})$ and $\hbSigma(\boldsymbol{U}_{\text{new}})$ and re-solve the corresponding large scale linear programming (\ref{DLPD}). In addition, the support set of discriminant direction $\bbeta^{*}(\bu)$ decides which variable contributes to classification but \eqref{DLPD} can not provide a invariant support set since it is a point-wise estimator. In some real applications, the varying support set of discriminant direction in DLPD method may lack interpretability.

The dynamic discriminant analysis shares the same semi-parametric spirit with the classical varying coefficient model \citep{hastie1993varying}, where the unknown parameters are assumed to be a smooth function of the exposure variable. In the past decades, the varying coefficient method has been applied to a variety of statistical models, such as linear regression model \citep{hoover1998nonparametric,fan1999statistical}, generalized linear model \citep{cai2000efficient, fan2008statistical}, quantile regression \citep{honda2004quantile, wang2009quantile} and support vector machine \citep{LU2018107}, etc. Motivated by the least square form of Bayes' discriminant direction, we propose a new estimation method for the discriminant direction function based on B-spline approximation, which can be applied in the classification for dynamic data. In high-dimensional regime, we can estimate the approximation coefficient by solving a penalized least square problem. The computational drawback of the DLPD rule \citep{jiang2020dynamic} is circumvented in our developed varying coefficient discriminant procedure. For each new observation, we only need to re-compute the B-spline basis vector. Hence it has a significant computational advantage over the DLPD rule. In the high-dimensional case, the support set of our proposed estimator is irrelevant with the value of exposure variable, which is indeed helpful to select important features contributing to classification.

The remainder of this paper is organized as follows. In section \ref{sec_method}, we propose a new discriminant direction function and its varying coefficient estimators in both low-dimensional and high-dimensional regimes. In section \ref{sec_result}, we establish the upper bounds for the estimation error and uniform excess misclassification risk for our proposed varying coefficient LDA procedure. In section \ref{sec_simu} and \ref{sec_realdata}, we verify the performance of our method through simulations on synthetic data and real data respectively.

\paragraph{Notations.}
We define some notations that will be used throughout the paper. For two real positive sequences $a_n$ and $b_n$, we write $a_n\lesssim b_n$ if there exists some positive constant $m$ such that $a_n\leq m b_n$. And we write $a_{n} \asymp b_{n}$ if $a_n\lesssim b_n$ and $b_n\lesssim a_n$. For a real-valued vector $\bx \in \mathbb{R}^p$, we use $\onenorm{\bx} = \sum_{j=1}^p|x_j|$, $\twonorm{\bx} = (\sum_{j=1}^p|x_j|^2)^{1/2}$ and $\infnorm{\bx} = \max_{1\leq j\leq p}|x_j|$ to denote the $\ell_1$, $\ell_2$ and $\ell_{\infty}$ norm respectively. For a subset $S\subseteq \{1,2,...,p\}$, we use $\bx_{S}$ to denote the sub-vector $(x_j:j\in S)$. Specially, for a vector $\bb \in \mathbb{R}^{pq}$, we write sub-vector $\bb_{(j)} = (b_{(j-1)q+1},\cdots,b_{jq})^{\top}$ for $j=1,2,...,p$. And for a subset $S\subseteq \{1,2,...,p\}$, we use $\bb_{(S)}$ to denote the group sub-vector $(\bb_{(j)}:j\in S)$. For a real-valued matrix $\bA\in \mathbb{R}^{p\times q}$, $\ell_2$ (spectral) norm is defined by $\|\bA\|_2 = \sup_{\twonorm{\boldsymbol{x}}=1,\twonorm{\boldsymbol{y}}=1} |\boldsymbol{x}^{\top}\bA\boldsymbol{y}|$, the maximal entry in absolute value is denoted by $|\bA|_{\infty} = \max_{i,j}|A_{ij}|$. For two subsets $S\subseteq \{1,2,...,p\}$ and $T\subseteq \{1,2,...,q\}$, we write sub-matrix $\bA_{ST} = (A_{ij})$ for $i\in S$ and $j\in T$. We use $\bA \otimes \bB$ to denote the Kronecker product on two matrices $\bA$ and $\bB$ with proper sizes. Specially, for a matrix $\bA \in \mathbb{R}^{pq\times pq}$ and two subsets $S, T\subseteq \{1,2,...,p\}$, we write the group sub-matrix as $\bA_{(ST)} = (A_{ij})$ for $i\in \{(k-1)q+1,...,kq: k\in S\}$ and $j\in \{(k-1)q+1,...,kq: k\in T\}$. For a sequence of real random variables $X_n$, we write $X_n = \Op(a_n)$ if for any $\epsilon >0$, there exists some constant $C > 0$ such that $\mathbb{P}(|X_n| > Ca_n) < \epsilon$.

\section{Varying coefficient LDA via B-spline approximation}\label{sec_method}
In this section, we provide a detailed description of the varying coefficient linear discriminant rule. Given the univariate exposure variable $U=u \in [0,1]$, we assume $\bX \sim \mathcal{N}(\bmu_1(u),\bSigma(u))$ if $Y=1$ and $\bX \sim \mathcal{N}(\bmu_2(u),\bSigma(u))$ if $Y=0$, then Bayes' discriminant direction is $\bbeta^{*}(u) = \bSigma^{-1}(u)(\bmu_{1}(u)-\bmu_{2}(u))$. We also denote the pooled mean as $\bmu(u) = \pi_1\bmu_1(u)+\pi_2\bmu_2(u)$, where $\pi_1 = \Prob(Y=1)$ and $\pi_2 = \Prob(Y=0)$. To introduce our new discriminant direction function, we define a new response variable as $Z = \pi_2$ if $Y = 1$ and $Z=-\pi_1$ if $Y = 0$. In addition, the exposure variable $U$ is assumed to be independent with the label $Y$. Motivated by the least square form of the plug-in Bayes' discriminant direction $\hbSigma^{-1}(\hmu_1 - \hmu_2)$ in static setting \citep{anderson1958introduction,mai2012direct}, we propose a new discriminant direction function $\btheta^{*}(U) = (\theta_1^{*}(U),\cdots,\theta_p^{*}(U))^{\top}$ as the minimizer of the following population least square problem
\begin{equation}\label{pop}
	 \min_{\theta_j(U)\in \mathcal{L}^2(\mathcal{P})}\E\LRm{\LRs{Z-\sum_{j=1}^p\theta_{j}(U)(X_j-\mu_{j}(U))}^2\big|U},
\end{equation}
where $\mathcal{P}$ is the joint distribution of $(\bX,Z,U)$ and $\mathcal{L}^2(\mathcal{P})$ denotes the $L^2$ space under measure $\mathcal{P}$. It is worthwhile noting that the representation \eqref{pop} is similar to the approximation of coefficient function in the varying coefficient linear model. Then the discriminant direction function $\btheta^{*}(U)$ satisfies
\begin{equation*}
		\E\LRl{(\bX-\bmu(U))(Z-(\bX-\bmu(U))^{\top}\btheta^{*}(U))| U}= \boldsymbol{0}.
\end{equation*}
A further computation gives rise to the following closed form,
\begin{equation*}
	\btheta^{*}(U) = \pi_1\pi_2 \bSigma^{-1}(U)(\bmu_1(U)-\bmu_2(U))[1-(\bmu_1(U)-\bmu_2(U))^{\top}\btheta^{*}(U)].
\end{equation*}
If the population covariance matrix $\bSigma(U)$ is positive definite and $\bmu_1(U)-\bmu_2(U)\neq 0$, we are guaranteed that $(\bmu_1(U)-\bmu_2(U))^{\top}\btheta^{*}(U)\in (0,1)$. As a consequence, Bayes' discriminant direction function satisfies that
\begin{equation}\label{relation}
	\bbeta^{*}(U) =  \bSigma^{-1}(U)(\bmu_1(U)-\bmu_2(U)) = c^{*}(U)\btheta^{*}(U),
\end{equation}
where $c^{*}(U) = 1/[\pi_1\pi_2-\pi_1\pi_2(\bmu_1(U)-\bmu_2(U))^{\top}\btheta^{*}(U)]$. 
For the equal-prior case ($\pi_1 = \pi_2$), given any new observation $(\bX_{\text{new}}, U_{\text{new}})$, we define the oracle varying coefficient discriminant rule as
\begin{equation}\label{rule}
	 \psi(\bX_{\text{new}}, U_{\text{new}}) = \dI\left((\bX_{\text{new}}-\bmu(U_{\text{new}}))^{\top} \btheta^{*}(U_{\text{new}}) \geq 0\right).
\end{equation}
Recall that $c^{*}(u)>0$, the classification result of (\ref{rule}) is consistent with using Bayes' discriminant direction $\bbeta^{*}(U_{\text{new}})$. 

\subsection{Approximation of discriminant direction function}
 Let $\bB(\cdot) = (B_1(\cdot),...,B_{L_n}(\cdot))$ be the scaled B-spline basis of the polynomial splines space, which satisfies that $B_k(\cdot) \geq 0$ and $\sum_{k=1}^{L_n}B_k(\cdot) = \sqrt{L_n}$. According to the B-spline approximation theory \citep{de1978practical}, under some regular conditions, each coordinate of discriminant direction $\btheta^*(u)$ can be approximated by $\theta_{j}^*(u) \approx \bgamma_{(j)}^{\top}\bB(u)$, where $\bgamma_{(j)} \in \mathbb{R}^{L_n}$ is the approximation coefficient. If $\bmu_{1}(u)$ and $\bmu_{2}(u)$ are known, the ``best'' approximation coefficients in population form is defined as
\begin{equation}\label{spline_p1}
	(\tgamma_{(1)}, \cdots, \tgamma_{(p)}) = \arg\min_{\substack{\bgamma_{(j)} \in \mathbb{R}^{L_n},\\ 1\leq j \leq p}}\E\LRm{\LRs{Z-\sum_{j=1}^{p}(X_j-\mu_{j}(U))\bgamma_{j}^{\top}\bB(U)}^2}.
\end{equation}
Let $\tgamma= (\tgamma_{(1)}^{\top}, \cdots, \tgamma_{(p)}^{\top})^{\top}$ and $\tB(U) = (\bX-\bmu(U))\otimes \bB(U)$, it is easy to show that
\begin{equation*}
	\tgamma =\LRs{\E[\tB(U)\tB(U)^{\top}]}^{-1}\E[\tB(U) Z].
\end{equation*}
For any $u\in [0,1]$, we may write the approximated discriminant direction as
\begin{equation*}\label{dis_spline_pop}
	\tbtheta(u) = \left(\tgamma_{(1)}^{\top}\bB(u),\cdots,\tgamma_{(p)}^{\top}\bB(u)\right)^{\top}.
\end{equation*}
Therefore, the data-driven discriminant procedure boils down to estimate the approximation coefficient $\tgamma$ and the mean functions $\bmu_1(u)$, $\bmu_{2}(u)$ based on collected samples. In the following subsections, we consider the equal-prior case, that is $\pi_1 = \pi_2 = 1/2$. And we provide the extension of our method to unbalanced case in Section \ref{sec:unequal}.

\subsection{Data-driven discriminant procedure}
Let $\{(\bX_i, U_i, Y_i): i=1,2,...,2n\}$ be an i.i.d. sample set. We denote the pseudo response variable by $Z_i = \dI(Y_i = 1)-\frac{1}{2}$ for $i=1,2,...,2n$ and denote the value of B-spline basis taken at $U_i$ by $\bB_i = (B_{1}(U_i),...,B_{L_n}(U_i))^{\top}$. Without loss of generality, we assume the sample size of two classes are equal. The sample index sets of two classes are $\mathcal{I}_1 = \{i:Y_i = 1\}$ and $\mathcal{I}_2 = \{i:Y_i = 0\}$ with $|\mathcal{I}_1| = |\mathcal{I}_2| = n$.

\subsubsection{Classical low-dimensional regime.}
To construct the sample form of problem (\ref{spline_p1}), we start with estimating the mean functions $\bmu_{1}(u)$ and $\bmu_{2}(u)$. By the B-spline theory, we may estimate the mean functions by
$\hmu_{l}(u) = (\halpha_{l1}^{\top}\bB(u),\cdots,\halpha_{lp}^{\top}\bB(u))^{\top}$ for $l=1,2$, where
\begin{equation*}
    \halpha_{lj} = \LRs{\sum_{i \in \mathcal{I}_l} \bB_i\bB_i^{\top}}^{-1}\sum_{i \in \mathcal{I}_l} \bB_i X_{ij},\quad \text{for }j=1,2,...,p.
\end{equation*}
 Let $\hmu(u) = (\hmu_{1}(u)+\hmu_{2}(u))/2$, the estimators $(\hgamma_{(1)}, \cdots, \hgamma_{(p)})$ can be obtained by solving the following least-square problem
\begin{equation}\label{low_problem}
	\min_{\substack{\bgamma_{(j)} \in \mathbb{R}^{L_n},\\ 1\leq j \leq p}}\frac{1}{2n}\sum_{i=1}^{2n}\LRs{Z_i -\sum_{j=1}^p(X_{ij}-\widehat{\mu}_{j}(U_i))\bB_i^{\top}\bgamma_{(j)}}^2.
\end{equation}
With slightly abusing notations, we denote $\hgamma = (\hgamma_{(1)}^{\top}, \cdots, \hgamma_{(p)}^{\top})^{\top}$ and $\tB_i = (\bX_i - \hmu(U_i))\otimes \bB_i$. In low-dimensional regime, the problem (\ref{low_problem}) has a closed form solution
\begin{equation}\label{spline_s}
	\hgamma = \left(\sum_{i=1}^{2n}\tB_i\tB_i^{\top}\right)^{-1}\sum_{i=1}^{2n}\tB_iZ_i.
\end{equation}

\subsubsection{Sparse high-dimensional regime.} 
In the high-dimensional case, we assume Bayes' discriminant function $\bbeta^{*}(u)$ is sparse with the support set $S:=\{j: \E[|\beta_{j}^{*}(U)|^2]>0\}$ and $|S| = s$. Without loss of generality, let $S = \{1,...,s\}$.

Since $\btheta^*(u)$ has the same support set with $\bbeta^*(u)$, the ``best'' coefficients for approximating $\theta_j^*(u)$ for $j\in S$ are defined as
\begin{equation}\label{oracle_spline}
	(\tgamma_{(1)}, \cdots, \tgamma_{(s)}) = \arg\min_{\substack{\bgamma_{(j)} \in \mathbb{R}^{L_n},\\1\leq j\leq s}}\E\LRm{\LRs{Z-\sum_{j= 1}^s(X_j-\mu_{j}(U))\bgamma_{(j)}^{\top}\bB}^2}.
\end{equation}
Consequently, for any $u\in [0,1]$, we shall approximate the discriminant direction function $\btheta^{*}(u)$ by
\begin{equation*}
	\tbtheta(u) = \left(\tgamma_{1}^{\top} \bB(u),\cdots,\tgamma_{s}^{\top} \bB(u),0,\cdots,0\right)^{\top}.
\end{equation*}
Let $\bD = \E[\tB(U)\tB(U)^{\top}]$ and $\bb = \E[\tB(U) Z]$, the approximation coefficient vector can be equivalently written as $\tgamma = (\tgamma_1^{\top},\cdots,\tgamma_s^{\top}, \boldsymbol{0}^{\top},\cdots,\boldsymbol{0}^{\top})^{\top} = (\tgamma_{(S)}^{\top}, \tgamma_{(S^c)}^{\top})^{\top}$ where $\tgamma_{(S)} = \bD_{(SS)}^{-1}\bb_{(S)}$ and $\tgamma_{(S^c)} = \boldsymbol{0}_{(p-s)L_n}$. It means that the estimator of approximation coefficient $\tgamma$ should have \emph{group sparsity} structure. Therefore, we add the group lasso penalty \citep{yuan2006model} to the objective function in (\ref{low_problem}), and then obtain the estimators $(\hgamma_{(1)},\cdots,\hgamma_{(p)})$ by solving
\begin{equation}\label{lasso_1}
	\min_{\substack{\bgamma_{(j)} \in \mathbb{R}^{L_n},\\ 1\leq j \leq p}}\frac{1}{2n}\sum_{i=1}^{2n}\LRs{Z_i -\sum_{j=1}^p(X_{ij}-\widehat{\mu}_{j}(U_i))\bB(U_i)^{\top}\bgamma_{(j)}}^2 + \lambda_n\sum_{j=1}^p\twonorm{\bgamma_{(j)}}
\end{equation}
where $\lambda_n$ is a tuning parameter. After some simplifications, the problem (\ref{lasso_1}) is equivalent to the following quadratic programming form
\begin{equation}\label{lasso_2}
\hgamma = \arg\min_{\bgamma \in \mathbb{R}^{pL_n}} \LRl{\frac{1}{2}\bgamma^{\top}\bD_n\bgamma - \bb_n^{\top}\bgamma + \lambda_n\sum_{j=1}^p\twonorm{\bgamma_{(j)}}},
\end{equation}
where 
\begin{equation*}
\bD_n = \frac{1}{2n}\sum_{i=1}^{2n}\tB_i\tB_i^{\top},\quad \bb_n =  \frac{1}{2n}\sum_{i=1}^{2n}\tB_iZ_i.
\end{equation*}
The problem (\ref{lasso_2}) can be efficiently solved by several well studied optimization methods, such as group coordinate descent algorithm and iterative shrinkage thresholding algorithm (ISTA) \citep{beck2009fast}. We provide a detailed description about ISTA to solve \eqref{lasso_2} in Appendix \ref{appen:ISTA}.

\subsubsection{Discriminant rule and asymptotic optimality.}
After obtaining $\hmu(u)$ and $\hgamma$, the estimator of discriminant direction function is given by
\begin{equation}\label{dis_spline_sam}
	\hbtheta(u) = \left(\hgamma_{(1)}^{\top} \bB(u),\cdots,\hgamma_{(p)}^{\top} \bB(u)\right)^{\top}.
\end{equation}
For any new observation $(\bX_{\text{new}}, U_{\text{new}})$, the data-driven varying coefficient linear discriminant rule is
\begin{equation}\label{sample_rule}
	\widehat{\psi}(\bX_{\text{new}},U_{\text{new}}) = \dI\left((\bX_{\text{new}}-\hmu(U_{\text{new}}))^{\top} \hbtheta(U_{\text{new}}) \geq 0\right).
\end{equation}

For any $u\in [0,1]$, the optimal misclassification risk of oracle rule (\ref{rule}) is $R(u) = \Phi(-\Delta(u)/2)$, where $\Delta(u) = \sqrt{(\bmu_{1}(u)- \bmu_{2}(u))^{\top}\bSigma^{-1}(u)(\bmu_{1}(u)- \bmu_{2}(u))}$ and $\Phi(\cdot)$ is the cumulative distribution function of a standard normal random variable.
Given the samples and $u\in [0,1]$, the conditional misclassification risk of data-driven rule (\ref{sample_rule}) is
\begin{equation*}
	R_n(u) := \frac{1}{2}\Phi\left(\frac{(\hmu(u) - \bmu_{1}(u))^{\top}\hbtheta(u)}{ \sqrt{\widehat{\boldsymbol{\theta}}^{\top}(u) \boldsymbol{\Sigma}(u) \widehat{\boldsymbol{\theta}}(u)}}\right)+\frac{1}{2}\bar{\Phi}\left(\frac{(\hmu(u) - \bmu_{2}(u))^{\top}\hbtheta(u)}{\sqrt{\widehat{\boldsymbol{\theta}}^{\top}(u) \boldsymbol{\Sigma}(u) \widehat{\boldsymbol{\theta}}(u)}}\right),
\end{equation*}
where $\hmu(u) = \left(\hmu_{1}(u) + \hmu_{2}(u)\right)/2$ and $\bar{\Phi}(\cdot) = 1- \Phi(\cdot)$. 
Through utilizing the technique developed in \cite{tony2019high}, we have the following proposition to provide an upper bound for the excess misclassification risk.

\begin{proposition}\label{pro_mis_rate}
	Suppose that for any $u\in [0,1]$, $\twonorm{\bSigma(u)}$ is uniformly upper bounded from infinity and $\Delta(u)$ is uniformly lower bounded away from zero. In addition, if $\twonorm{\hmu_{1}(u) -\bmu_{1}(u)} = o(1)$, $\twonorm{\hmu_{2}(u) -\bmu_{2}(u)} = o(1)$ and $\twonorm{\hbtheta(u) -\btheta^*(u)} = o(1)$, we have for any $u \in [0,1]$
	\begin{equation}\label{mis_bound}
		|R_n(u) - R(u)|\lesssim \twonorm{\hbtheta(u) - \btheta^{*}(u)}^2+|\left(\hmu(u) - \bmu(u)\right)^{\top}\bbeta^{*}(u)|^2.
	\end{equation}
\end{proposition}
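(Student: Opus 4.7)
The plan is to compare $R_n(u)$ with $R(u)$ via second-order Taylor expansion of $\Phi$ and $\bar\Phi$, exploiting two symmetries: the evenness of the Gaussian density $\phi$ and the fact that the population direction $\btheta^{*}(u)$ is parallel to Fisher's direction $\bbeta^{*}(u)$ via \eqref{relation}. Suppressing the argument $u$, introduce
\[
\widehat{a}_1:=\frac{(\hmu-\bmu_1)^{\top}\hbtheta}{\sqrt{\hbtheta^{\top}\bSigma\hbtheta}},\qquad
\widehat{a}_2:=\frac{(\hmu-\bmu_2)^{\top}\hbtheta}{\sqrt{\hbtheta^{\top}\bSigma\hbtheta}},
\]
together with the intermediate quantities $\widetilde{a}_j$ obtained by replacing $\hmu$ with $\bmu$, and the oracle values $a_1^{*}=-\Delta/2$, $a_2^{*}=\Delta/2$ satisfying $R(u)=\tfrac12\Phi(a_1^{*})+\tfrac12\bar\Phi(a_2^{*})$. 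Then $R_n-R=(I)+(II)$, where $(I)=\tfrac12[\Phi(\widehat{a}_1)-\Phi(\widetilde{a}_1)]+\tfrac12[\bar\Phi(\widehat{a}_2)-\bar\Phi(\widetilde{a}_2)]$ isolates the mean-estimation error and $(II)=\tfrac12[\Phi(\widetilde{a}_1)-\Phi(a_1^{*})]+\tfrac12[\bar\Phi(\widetilde{a}_2)-\bar\Phi(a_2^{*})]$ isolates the direction-estimation error.

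For $(I)$, the key identity is $\widehat{a}_1-\widetilde{a}_1=\widehat{a}_2-\widetilde{a}_2=\eta$ with $\eta:=(\hmu-\bmu)^{\top}\hbtheta/\sqrt{\hbtheta^{\top}\bSigma\hbtheta}$, which allows a second-order Taylor expansion to collapse $(I)$ to $\tfrac12[\phi(\widetilde{a}_1)-\phi(\widetilde{a}_2)]\eta+O(\eta^2)$. Writing $h(\btheta):=\bbeta^{*\top}\bSigma\btheta/\sqrt{\btheta^{\top}\bSigma\btheta}$, one checks $\widetilde{a}_1=-\widetilde{a}_2=-h(\hbtheta)/2$ (using $\bmu_1-\bmu_2=\bSigma\bbeta^{*}$), so evenness of $\phi$ annihilates the linear term and $(I)=O(\eta^2)$. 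Using $\btheta^{*}=c^{*}\bbeta^{*}$ and $\hbtheta=\btheta^{*}+(\hbtheta-\btheta^{*})$, the numerator of $\eta$ splits as $c^{*}(\hmu-\bmu)^{\top}\bbeta^{*}+(\hmu-\bmu)^{\top}(\hbtheta-\btheta^{*})$, while $\hbtheta^{\top}\bSigma\hbtheta\gtrsim(c^{*})^2\Delta^2$ is bounded below by the lower bound on $\Delta$ and the consistency $\twonorm{\hbtheta-\btheta^{*}}=o(1)$. The factor $c^{*}$ cancels, yielding $\eta^2\lesssim|(\hmu-\bmu)^{\top}\bbeta^{*}|^2+\twonorm{\hmu-\bmu}^2\twonorm{\hbtheta-\btheta^{*}}^2$, and the cross term is $o(\twonorm{\hbtheta-\btheta^{*}}^2)$ by the consistency of $\hmu$.

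For $(II)$, Taylor expansion around $a_j^{*}$, together with $\phi(-\Delta/2)=\phi(\Delta/2)$ and the identity $\widetilde{a}_1-a_1^{*}=-(\widetilde{a}_2-a_2^{*})=-(h(\hbtheta)-\Delta)/2$, reduces $(II)$ to $-\tfrac12\phi(\Delta/2)(h(\hbtheta)-\Delta)+O((h(\hbtheta)-\Delta)^2)$. The main obstacle is the sharp bound $|h(\hbtheta)-\Delta|\lesssim\twonorm{\hbtheta-\btheta^{*}}^2$. The crucial structural observation is that $h(\btheta)/\Delta$ is the cosine of the $\bSigma$-angle between $\btheta$ and $\bbeta^{*}$, so the Cauchy--Schwarz inequality in the $\bSigma$-inner product yields $h(\btheta)\leq\Delta$ with equality precisely when $\btheta\propto\bbeta^{*}$; consequently $\btheta^{*}$ maximizes $h$ and $\nabla h(\btheta^{*})=\boldsymbol{0}$. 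A direct second-order expansion, parameterised by $\xi:=\bbeta^{*\top}\bSigma(\hbtheta-\btheta^{*})$ and $\zeta:=(\hbtheta-\btheta^{*})^{\top}\bSigma(\hbtheta-\btheta^{*})$, then gives $\Delta-h(\hbtheta)=(\zeta-\xi^2/\Delta^2)/[2(c^{*})^2\Delta]+\text{(higher order)}$, which is nonnegative by Cauchy--Schwarz ($\xi^2\leq\Delta^2\zeta$) and bounded by $\twonorm{\bSigma}\twonorm{\hbtheta-\btheta^{*}}^2$ times a constant depending only on the spectral-norm bound on $\bSigma$ and the lower bound on $\Delta$. Hence $(II)=O(\twonorm{\hbtheta-\btheta^{*}}^2)$, and summing with $(I)$ yields \eqref{mis_bound}.
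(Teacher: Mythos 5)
Your decomposition $R_n - R = (I) + (II)$ (with the intermediate point obtained by replacing $\hmu$ with $\bmu$ but keeping $\hbtheta$) is precisely the paper's $R_n - R = (R_n - \widetilde R) + (\widetilde R - R)$, and the first-order cancellation you get from $\widetilde a_1 = -\widetilde a_2$ together with the evenness of $\phi$ is the same mechanism the paper uses (see their display \eqref{eq:Rn_Rt_diff}). Where you diverge is in controlling $(II)$: the paper invokes Lemma~7 of \cite{tony2019high}, which packages the inequality $\|x\|_2\|y\|_2 - x^\top y \asymp \|x-y\|_2^2$, applied with $x = \bSigma^{-1/2}\bdelta$ and $y = \bSigma^{1/2}(c^*\hbtheta)$; you instead re-derive the same quadratic dependence from scratch by noting that $h(\btheta) = \bbeta^{*\top}\bSigma\btheta/\sqrt{\btheta^\top\bSigma\btheta}$ is, by Cauchy--Schwarz in the $\bSigma$-inner product, maximized over rays at $\btheta\propto\bbeta^{*}$, so $\nabla h(\btheta^*)=\boldsymbol{0}$ and a second-order Taylor expansion controls $\Delta - h(\hbtheta)$. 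This is a self-contained and geometrically transparent substitute for the cited lemma, and both routes yield the same final bound. One bookkeeping slip: with $\btheta^{*} = \bbeta^{*}/c^{*}$, your parameterization should give $\Delta - h(\hbtheta) \approx \dfrac{(c^{*})^{2}}{2\Delta}\left(\zeta - \dfrac{\xi^{2}}{\Delta^{2}}\right)$ (the $(c^*)^2$ belongs in the numerator, not the denominator), and likewise $\hbtheta^{\top}\bSigma\hbtheta$ is bounded below by (a constant times) $\Delta^2/(c^{*})^{2}$ rather than $(c^{*})^{2}\Delta^{2}$. Since $c^{*}(u) = (\pi_1\pi_2)^{-1} + \Delta(u)^2$ is bounded away from zero and infinity under the proposition's hypotheses, these slips wash out in the $\lesssim$ and do not affect the conclusion.
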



\section{Theoretical results}\label{sec_result}
In this section, we will present the estimation error bounds and the convergence rates of excess misclassification risk of our proposed varying coefficient LDA procedure in both low-dimensional regime and high-dimensional regime. Specially, for two function vectors $\bnu(\cdot) = (\nu_1(\cdot),...,\nu_m(\cdot))^{\top}$ and $\bxi(\cdot) = (\xi_1(\cdot),...,\xi_m(\cdot))$ mapping from $[0,1]$ to $\mathbb{R}^m$, we define the $L_2$ distance between $\bnu(\cdot)$ and $\bxi(\cdot)$ as
\begin{align*}
    \|\bnu - \bxi\|_{L_2} = \LRs{\int_0^1 \twonorm{\bnu(u) - \bxi(u)}^2 du}^{\frac{1}{2}}.
\end{align*}

\subsection{Classical low-dimensional regime}\label{sec_low}
Before presenting the convergence rates of our proposed estimator, we introduce the following necessary technical assumptions for the clarity of ensuing theoretical results.

\begin{enumerate}[($\mathbf{C}1$)]
	\item There exist two constant $0<\lambda_0\leq\lambda_1<\infty$ such that for any $u\in [0,1]$
	\begin{equation*}
		\lambda_0 \leq \lambda_{\min}\left(\bSigma(u)\right) \leq \lambda_{\max}\left(\bSigma(u)\right) \leq \lambda_1,
	\end{equation*}
	where $\lambda_{\min}\left(\bSigma(u)\right)$ and $\lambda_{\max}\left(\bSigma(u)\right)$ are respectively the minimum and maximum eigenvalues of $\bSigma(u)$.
	\item The density function $h$ of $U$ satisfies that $0<D_1\leq h(u)\leq D_2<\infty$ for two positive constants $D_1$ and $D_2$ and any $u\in [0,1]$.
	\item Each entry of functions $\bmu_{1}(u)$, $\bmu_{2}(u)$ and $\bSigma(u)^{-1}(\bmu_{1}(u)-\bmu_{2}(u))$ belongs to the following function space
	\begin{equation*}
	\begin{aligned}
		\mathcal{W}^{d}([0,1]) := \{f: [0,1]\to \mathbb{R},\ &\sup_{x}|f^{(\ell)}(x)|\leq D \text{ for }\ell = 0,1,...,t\text{ and }\\
		&\sup_{x,x^{'}}|f^{(t)}(x)-f^{(t)}(x^{'})|\leq L|x-x^{'}|^{r} \}
	\end{aligned}
	\end{equation*}
	where $d = r+t\geq 1$ and $f^{(s)}$ denotes the $s$-th derivative of function $f$ and $f^{(0)} = f$.
	\item Assume $\sup_{u\in [0,1]}\max\{\twonorm{\bmu_{1}(u)-\bmu_{2}(u)}, \twonorm{\btheta^*(u)}\}= \delta_p \leq M$ for some large constant $M$. In addition, $p = o(n^{(2d-1)/(2d+1)})$.
\end{enumerate}
Assumption \cond{1} is very common in high-dimensional linear discriminant analysis literature \citep{mai2012direct, cai2011direct, jiang2020dynamic}. Assumption \cond{2} and \cond{3} are regular conditions in B-spline approximation theory, similar assumptions also appeared in \citep{JMLR:v13:xue12a,fan2014nonparametric}. For the simplicity of convergence rates, we assume $\twonorm{\bmu_{1}(u)-\bmu_{2}(u)}$ and $\twonorm{\btheta^*(u)}$ are both uniformly bounded in \cond{4}. The condition on the dimensionality ensures that $L_n\sqrt{p \log n/n} = o(1)$ under the optimal length of B-spline basis $L_n \asymp n^{1/(2d+1)}$, which guarantees the optimality of our proposed estimator.

Note that $L_2$ error of our proposed estimator can be decomposed into two parts: the approximation error $\|\tbtheta - \btheta^{*}\|_{L_2}$ and the estimation error $\|\hbtheta - \tbtheta\|_{L_2}$. Our first result shows that the approximation error shrinks as the length of spline basis vector $L_n$ grows, which also attains the optimal convergence rate of classical B-spine approximation error (see \cite{huang2003local,schumaker2007spline}). The proof of Theorem \ref{thm_appro_error} is given in Appendix \ref{proof:thm_appro_error}.
\begin{theorem}\label{thm_appro_error}
	Assume the assumptions \cond{1}-\cond{4} hold, then the approximation error in $L_2$ distance is bounded by
	\begin{equation}\label{low_appro_error}
		\|\tbtheta - \btheta^{*}\|_{L_2} \lesssim \sqrt{p}L_n^{-d}.
	\end{equation}
\end{theorem}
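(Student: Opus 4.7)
The plan is to combine classical B-spline approximation bounds with the population-level optimality of $\tbtheta$. First, I would derive an explicit form of $\btheta^{*}$ by substituting $\bbeta^{*}(u)=c^{*}(u)\btheta^{*}(u)$ into its fixed-point equation and solving the resulting scalar equation for $(\bmu_{1}-\bmu_{2})^{\top}\btheta^{*}$; this yields
\begin{equation*}
    \btheta^{*}(u)=\frac{\pi_{1}\pi_{2}\,\bbeta^{*}(u)}{1+\pi_{1}\pi_{2}\Delta^{2}(u)},\qquad \Delta^{2}(u)=(\bmu_{1}(u)-\bmu_{2}(u))^{\top}\bbeta^{*}(u).
\end{equation*}
Under \cond{1} and \cond{4} the denominator is bounded above and away from zero, so combined with \cond{3} and the closure of $\mathcal{W}^{d}$ under products and ratios with non-vanishing smooth denominators, every coordinate $\theta_{j}^{*}(\cdot)$ belongs to $\mathcal{W}^{d}([0,1])$ with a constant uniform in $j$ and $p$. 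Standard B-spline approximation theory (de Boor) then furnishes coefficients $\bar\bgamma_{(j)}\in\mathbb{R}^{L_{n}}$ with $\sup_{u}|\theta_{j}^{*}(u)-\bar\bgamma_{(j)}^{\top}\bB(u)|\lesssim L_{n}^{-d}$ for every $j$, so that the assembled $\bar\btheta(u):=(\bar\bgamma_{(j)}^{\top}\bB(u))_{j=1}^{p}$ satisfies $\|\bar\btheta-\btheta^{*}\|_{L_{2}}^{2}\lesssim pL_{n}^{-2d}$ after summing over coordinates and integrating in $u$.

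Next I would transfer this bound to $\tbtheta$ using the fact that $\tbtheta$ is the population minimizer, over the spline class, of $L(\btheta):=\E[(Z-(\bX-\bmu(U))^{\top}\btheta(U))^{2}]$. Conditioning on $U$ and completing the square at the Bayes direction $\btheta^{*}(U)$ gives the excess-risk identity
\begin{equation*}
    L(\btheta)-L(\btheta^{*})=\E\left[(\btheta(U)-\btheta^{*}(U))^{\top}\bSigma(U)(\btheta(U)-\btheta^{*}(U))\right],
\end{equation*}
which by \cond{1} and \cond{2} is sandwiched between $\lambda_{0}D_{1}\|\btheta-\btheta^{*}\|_{L_{2}}^{2}$ and $\lambda_{1}D_{2}\|\btheta-\btheta^{*}\|_{L_{2}}^{2}$. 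Since $\tbtheta$ minimizes $L$ over the spline class, $L(\tbtheta)-L(\btheta^{*})\leq L(\bar\btheta)-L(\btheta^{*})$, and the sandwich produces
\begin{equation*}
    \|\tbtheta-\btheta^{*}\|_{L_{2}}^{2}\;\leq\;\frac{\lambda_{1}D_{2}}{\lambda_{0}D_{1}}\,\|\bar\btheta-\btheta^{*}\|_{L_{2}}^{2}\;\lesssim\; pL_{n}^{-2d},
\end{equation*}
which delivers (\ref{low_appro_error}).

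The main obstacle is the uniform-in-$p$ smoothness claim in the first step, since $\Delta^{2}(u)=\sum_{k=1}^{p}(\mu_{1k}(u)-\mu_{2k}(u))\beta_{k}^{*}(u)$ is a sum of $p$ smooth products whose derivatives could a priori pick up a factor of $p$ and thereby degrade the $\sqrt{p}L_n^{-d}$ rate into something larger. The remedy is to write each derivative of $\Delta^{2}$ as an $\ell_{2}$ inner product between differenced-mean and discriminant-direction vectors (and their derivatives) and to invoke Cauchy--Schwarz, exploiting the uniform $\ell_{2}$ bounds furnished by \cond{4} together with the spectral control of \cond{1}, so that $\Delta^{2}$ and its derivatives stay $O(1)$ in $p$. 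Once this uniform smoothness is secured, the remaining ingredients---the single-variable B-spline approximation bound and the convex excess-risk identity---are routine and contribute only $p$-free constants.
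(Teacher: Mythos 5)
Your proof is correct in its conclusion and takes a genuinely different route from the paper's. The paper proves Theorem~\ref{thm_appro_error} by first constructing a coordinate-wise B-spline approximant $\bar\btheta$ with $\sup_u|\theta_j^*(u)-\bar\theta_j(u)|\lesssim L_n^{-d}$, then explicitly computing $\tgamma-\bar\bgamma=\bD^{-1}\E[\tB(Z-\tB^\top\bar\bgamma)]$, invoking Lemma~\ref{lemma_eigen} to lower-bound $\lambda_{\min}(\bD)$, bounding $\LRtwonorm{\E[\tB(Z-\tB^\top\bar\bgamma)]}\lesssim\sqrt{p}L_n^{-d}$, and finally translating the coefficient error into $\|\tbtheta-\bar\btheta\|_{L_2}$ via \eqref{int_bound}. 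You instead exploit the variational definition of $\tbtheta$ directly: the excess-risk identity $L(\btheta)-L(\btheta^*)=\E[(\btheta(U)-\btheta^*(U))^\top\bC(U)(\btheta(U)-\btheta^*(U))]$ combined with $L(\tbtheta)\le L(\bar\btheta)$ gives the bound in two lines, needing only the spectral control on $\bC(u)$ and the density bounds on $h$. This avoids the explicit computation of $\E[\tB(Z-\tB^\top\bar\bgamma)]$ and makes Lemma~\ref{lemma_eigen} unnecessary at the $\bD$-level --- the pointwise eigenvalue bound on $\bC(u)$ suffices. It is a cleaner and more portable argument.

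Two caveats. First, a small correction: the matrix in the excess-risk identity should be the conditional second moment $\bC(U)=\E[(\bX-\bmu(U))(\bX-\bmu(U))^\top\mid U]=\bSigma(U)+\tfrac14(\bmu_1(U)-\bmu_2(U))(\bmu_1(U)-\bmu_2(U))^\top$, not $\bSigma(U)$ itself; since $\lambda_0\le\lambda_{\min}(\bC)\le\lambda_{\max}(\bC)\le\lambda_1+M^2/4$ under \cond{1} and \cond{4}, the sandwich constants remain bounded and the conclusion is unchanged. Second, your proposed Cauchy--Schwarz remedy for the uniform-in-$p$ smoothness of $\theta_j^*$ does not actually go through as stated: \cond{4} only furnishes uniform $\ell_2$ bounds on $\bmu_1-\bmu_2$ and $\btheta^*$ themselves, not on their derivative vectors, so the $\ell$-th derivative of $\Delta^2(u)=\sum_k\delta_k(u)\beta_k^*(u)$ can still pick up a factor of $\sqrt{p}$ from terms such as $\sum_k\delta_k^{(m)}\beta_k^{*(\ell-m)}$. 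That said, this is a gap shared with the paper, which simply asserts the existence of $\bar\theta_j$ with $\sup_u|\theta_j^*-\bar\theta_j|\le M_0L_n^{-d}$ (implicitly assuming $\theta_j^*\in\mathcal{W}^d$ with a constant uniform in $p$) without verifying that \cond{3}--\cond{4} suffice; so you are not worse off, and your attempt to justify it is laudable even though the stated remedy is incomplete.
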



The following theorem provides the upper bound of estimation error for the discriminant direction function estimator (\ref{dis_spline_sam}). Compared with the analysis in the varying coefficient linear model, the theoretical development in this paper is more challenging. The reason is two-fold:
\begin{itemize}
    \item There is no direct relation between the pseudo response variable $Z_i$ and the covariate $\bX_i$. The empirical processes in the proof are established upon fine-grained decomposition to $\bD_n - \bD$ (see Appendix \ref{proof:lemma:tB_matrix_concentration}).
    
    \item The estimator for approximation coefficient $\hgamma$ in (\ref{low_problem}) involves the mean function estimators $\hmu_1$ and $\hmu_2$ computed from the same samples. We utilize chaining technique to establish several concentration inequalities on the operator norm of matrices and $\ell_2$ norm of matrix-vector-products (see Lemma \ref{lemma_matrix_bound_4}-\ref{lemma_matrix_bound_5}).
\end{itemize}
The proof of Theorem \ref{thm_low_dim_ell2} is deferred to Appendix \ref{proof:thm_ld_ell2}.
\begin{theorem}\label{thm_low_dim_ell2}
	Assume conditions \cond{1}-\cond{4} hold. Let $a_n = \sqrt{L_n\log n/n}+L_n^{-d}$, the estimation error in $L_2$ distance is bounded by
	\begin{equation}\label{sup_low_bound}
		\|\hbtheta - \tbtheta\|_{L_2}  =\Op\LRs{\sqrt{\frac{pL_n\log n}{n}} + a_n pL_n \sqrt{\frac{\log n}{n}}+\sqrt{p} L_n^{-d}}.
	\end{equation}
\end{theorem}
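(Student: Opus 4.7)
The plan is to reduce the $L_2$ estimation error to a Euclidean bound on the B-spline coefficients and then to control that bound through a plus-minus decomposition that isolates the plug-in effect of $\hmu$. For the scaled B-spline basis one has $\int_0^1 \bB(u)\bB(u)^{\top}\,du \asymp \bI_{L_n}$, hence
\begin{equation*}
	\|\hbtheta - \tbtheta\|_{L_2}^{2} = \sum_{j=1}^{p}(\hgamma_{(j)}-\tgamma_{(j)})^{\top}\LRs{\int_0^1 \bB(u)\bB(u)^{\top}du}(\hgamma_{(j)}-\tgamma_{(j)}) \asymp \|\hgamma - \tgamma\|_2^{2},
\end{equation*}
so it suffices to control $\|\hgamma-\tgamma\|_2$ at the claimed rate.

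Since $\tgamma$ is determined by the population normal equation $\bD\tgamma = \bb$ with $\bD = \E[\tB(U)\tB(U)^{\top}]$ and $\bb = \E[\tB(U)Z]$, and $\bD_n\hgamma = \bb_n$ in the low-dimensional case, the central identity is
\begin{equation*}
	\hgamma - \tgamma = \bD_n^{-1}\bigl[(\bb_n - \bb) - (\bD_n - \bD)\tgamma\bigr].
\end{equation*}
To separate purely statistical fluctuation from the plug-in effect I would introduce oracle quantities $\bD_n^{*} = \frac{1}{2n}\sum_i \tB_i^{*}(\tB_i^{*})^{\top}$ and $\bb_n^{*} = \frac{1}{2n}\sum_i \tB_i^{*} Z_i$ with $\tB_i^{*} = (\bX_i-\bmu(U_i))\otimes \bB_i$, and split $\bD_n - \bD = (\bD_n-\bD_n^{*})+(\bD_n^{*}-\bD)$ and the analogous identity for $\bb_n-\bb$.

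The three resulting pieces are handled in turn. First, matrix concentration on $\bD_n^{*}-\bD$ together with Assumption \cond{1} and $\int \bB\bB^{\top}\asymp \bI_{L_n}$ yields $\lambda_{\min}(\bD)\gtrsim \lambda_0$ and $\|\bD_n-\bD\|_2 = o_P(1)$, so by Weyl's inequality $\|\bD_n^{-1}\|_2 = \Op(1)$. Second, Bernstein-type concentration on the oracle pieces $\bb_n^{*}-\bb$ and $(\bD_n^{*}-\bD)\tgamma$ produces the fluctuation term $\Op(\sqrt{pL_n\log n/n})$, which is the first summand of the stated rate. Third, with $\bdelta_i := \bmu(U_i)-\hmu(U_i)$ one expands
\begin{equation*}
	\bD_n - \bD_n^{*} = \tfrac{1}{2n}\sum_{i=1}^{2n}\bigl[(\bX_i-\bmu(U_i))\bdelta_i^{\top} + \bdelta_i(\bX_i-\bmu(U_i))^{\top} + \bdelta_i\bdelta_i^{\top}\bigr]\otimes \bB_i\bB_i^{\top},
\end{equation*}
with a similar expansion for $\bb_n-\bb_n^{*}$, and uses the coordinatewise B-spline mean rate $\|\hmu_j - \mu_j\|_{L_2} = \Op(a_n)$ aggregated over the $p$ coordinates, together with operator-norm bounds on the weighted matrix $(2n)^{-1}\sum_i \bB_i\bB_i^{\top}$, to obtain the plug-in contribution $\Op(a_n pL_n \sqrt{\log n/n}+\sqrt{p}L_n^{-d})$.

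The main obstacle is this third step. Because $\hmu$ is built from the same sample that defines $\bD_n$ and $\bb_n$, the residuals $\bdelta_i$ are not independent of $(\bX_i,\bB_i)$, so straightforward Bernstein arguments fail on the cross terms. This forces a uniform-over-coefficients analysis: a chaining / $\epsilon$-net argument on the B-spline mean coefficient vector combined with uniform $L_\infty$ control of the random smoother $\hmu$, producing the operator-norm and matrix-vector concentration inequalities that will appear as Lemmas \ref{lemma_matrix_bound_4}--\ref{lemma_matrix_bound_5}. The bias component $\sqrt{p}L_n^{-d}$ in the final rate enters precisely here, as the deterministic B-spline approximation bias of $\hmu$ propagating through the plug-in step, rather than being absorbed into the approximation error of Theorem \ref{thm_appro_error}.
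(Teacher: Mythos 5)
Your proposal follows essentially the same route as the paper: the same central identity $\hgamma-\tgamma=\bD_n^{-1}\bigl[(\bb_n-\bb)-(\bD_n-\bD)\tgamma\bigr]$, the same oracle decomposition (your $\bD_n-\bD_n^{*}$ equals the paper's $\sum_{k=1,2}(\bI_2^k+\bI_3^k+\bI_4^k)$ with $\bdelta_i=\bmu(U_i)-\hmu(U_i)$), the same reduction $\|\hbtheta-\tbtheta\|_{L_2}\lesssim\|\hgamma-\tgamma\|_2$ via the B-spline Gram bound, and the same identification that the sample-dependence of $\hmu$ forces a chaining argument leading to Lemmas \ref{lemma_matrix_bound_4}--\ref{lemma_matrix_bound_5}. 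The only small imprecision is the phrase "$\|\hmu_j-\mu_j\|_{L_2}=\Op(a_n)$ aggregated over coordinates" — the argument actually needs the uniform sup-norm rate $\sup_u|\hmu_j(u)-\mu_j(u)|\lesssim a_n$ of Proposition \ref{pro_mean_bound} (which you do invoke implicitly later as "uniform $L_\infty$ control"), since the $\bdelta_i$ enter at the data points $U_i$ and the quadratic term $\bdelta_i\bdelta_i^{\top}$ requires pointwise control.
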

\begin{remark}
	Together with the approximation error in Theorem \ref{thm_appro_error} and assumption \cond{4}, if we take the length of B-spline vector as $L_n \asymp \LRs{n/\log n}^{\frac{1}{2(d+1)}}$, it is easy to see that the $L_2$ error can be bounded by
	\begin{equation}\label{eq:low_dim_theta_error}
		\|\hbtheta - \btheta^*\|_{L_2} = \Op\left(\sqrt{p}\LRs{\frac{\log n}{n}}^{\frac{d}{2d+1}}\right).
	\end{equation}
	According to \cite{stone1982optimal}, the minimax convergence rate for one-dimensional function in function space $\mathcal{W}^{d}([0,1])$ is $n^{-d/(2d+1)}$. Apparently, our proposed estimation procedure is optimal up to a logarithmic factor.
\end{remark}
From assumptions \cond{1} and \cond{4}, we know $|\left(\hmu(u) - \bmu(u)\right)^{\top}\bbeta^{*}(u)|^2 \lesssim \twonorm{\hmu(u) - \bmu(u)}^2$. In Proposition \ref{pro_mean_bound}, we establish the uniform bound for the mean function estimator, that is
\begin{align*}
    \sup_{u\in [0,1]}\twonorm{\hmu(u) - \bmu(u)} = \Op\LRs{\sqrt{\frac{p L_n \log n}{n}} + \sqrt{p}L_n^{-d}}.
\end{align*}
In addition, we also have $\sup_{u\in [0,1]}\twonorm{\hbtheta(u) - \tbtheta(u)} = O_{\mathbb{P}}(L_n\sqrt{p\log n/n})$ since $\twonorm{\hgamma - \tgamma}$ shares the same bound with \eqref{sup_low_bound} (see Appendix \ref{proof:thm_ld_ell2}) and $\twonorm{\bB(u)} \leq \sqrt{L_n}$. In conjunction with \eqref{eq:low_dim_theta_error}, we can obatin the $L_2$ bound of the excess misclassification risk in the following corollary.
\begin{corollary}
    Under the same settings of Theorem \ref{thm_low_dim_ell2}, we assume $\Delta(u)\geq c>0$ for some constant $c$ and take $L_n \asymp \LRs{n/\log n}^{\frac{1}{2(d+1)}}$, then it holds
\begin{equation*}
	\|R_n - R\|_{L_2} = \Op\LRs{p\LRs{\frac{\log n}{n}}^{\frac{2d}{2d+1}}}.
\end{equation*}
\end{corollary}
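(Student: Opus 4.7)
The plan is to derive the corollary from Proposition \ref{pro_mis_rate} by integrating its pointwise bound and invoking the rates from Theorem \ref{thm_low_dim_ell2} together with the uniform mean-function bound stated just before the corollary. First I would verify that Proposition \ref{pro_mis_rate} applies for every $u\in[0,1]$: \cond{1} yields $\sup_u\twonorm{\bSigma(u)}\leq \lambda_1$, the corollary hypothesis gives $\Delta(u)\geq c$, and the three $o(1)$ side-conditions on $\hmu_1-\bmu_1$, $\hmu_2-\bmu_2$ and $\hbtheta-\btheta^*$ all reduce to uniform-in-$u$ statements. The mean conditions follow from the cited $\sup_u\twonorm{\hmu_l(u)-\bmu_l(u)}=\Op\bigl(\sqrt{pL_n\log n/n}+\sqrt{p}L_n^{-d}\bigr)$; the condition on $\hbtheta-\btheta^*$ follows from $\sup_u\twonorm{\bB(u)}\leq \sqrt{L_n}$ together with the bound on $\twonorm{\hgamma-\tgamma}$ implicit in Theorem \ref{thm_low_dim_ell2}, plus the standard uniform B-spline approximation error $\sup_u\twonorm{\tbtheta(u)-\btheta^*(u)} = O(\sqrt{p}\,L_n^{-d})$. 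Under $L_n\asymp(n/\log n)^{1/(2(d+1))}$ and the dimensionality part of \cond{4}, all three quantities are $o_{\Prob}(1)$.

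The crucial step that turns the pointwise quadratic-in-errors bound of Proposition \ref{pro_mis_rate} into the stated $L_2$ rate is the observation that both $R_n(u)$ and $R(u)$ lie in $[0,1]$, so $(R_n(u)-R(u))^2\leq |R_n(u)-R(u)|$. Integrating over $[0,1]$ and plugging in Proposition \ref{pro_mis_rate},
\begin{equation*}
\|R_n-R\|_{L_2}^2 \;\leq\; \int_0^1 |R_n(u)-R(u)|\,du \;\lesssim\; \int_0^1 \twonorm{\hbtheta(u)-\btheta^*(u)}^2\,du \;+\; \int_0^1 \bigl|(\hmu(u)-\bmu(u))^{\top}\bbeta^*(u)\bigr|^2\,du.
\end{equation*}
The first integral is exactly $\|\hbtheta-\btheta^*\|_{L_2}^2$, which the remark following Theorem \ref{thm_low_dim_ell2} identifies as $\Op\bigl(p(\log n/n)^{2d/(2d+1)}\bigr)$.

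For the second integral I would use a uniform bound on $\bbeta^*$: \cond{1} and \cond{4} give $\sup_u\twonorm{\bbeta^*(u)} \leq \sup_u\twonorm{\bSigma^{-1}(u)}\twonorm{\bmu_1(u)-\bmu_2(u)} \leq M/\lambda_0$, so by Cauchy--Schwarz $|(\hmu-\bmu)^{\top}\bbeta^*|^2 \lesssim \twonorm{\hmu-\bmu}^2$. Hence the second integral is at most a constant multiple of $\|\hmu-\bmu\|_{L_2}^2 \leq \sup_{u}\twonorm{\hmu(u)-\bmu(u)}^2$, and the cited uniform mean bound with $L_n\asymp(n/\log n)^{1/(2(d+1))}$ again yields $\Op\bigl(p(\log n/n)^{2d/(2d+1)}\bigr)$. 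Combining the two contributions gives the stated rate.

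\textbf{Main obstacle.} The substantive analytic work is already packaged in Theorem \ref{thm_low_dim_ell2} and in the uniform bound on $\hmu-\bmu$; the corollary is essentially a bookkeeping exercise that stitches them together through Proposition \ref{pro_mis_rate}. The one non-routine step is the $(R_n-R)^2\leq |R_n-R|$ reduction: integrating the squared pointwise bound directly would require $L_4$-style control on $\hbtheta-\btheta^*$ and $\hmu-\bmu$, and while crude uniform versions of such bounds do exist (and in fact are what verify the $o(1)$ hypotheses of Proposition \ref{pro_mis_rate}), they are not tight enough to reproduce the optimal rate without first downgrading the squared-$L_2$ norm to an $L_1$ norm via the boundedness of the risks.
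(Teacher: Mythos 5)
Your reduction $(R_n-R)^2\leq|R_n-R|$ is valid, and the chain
\begin{equation*}
\|R_n-R\|_{L_2}^2=\int_0^1|R_n-R|^2\,du\;\leq\;\int_0^1|R_n-R|\,du\;\lesssim\;\|\hbtheta-\btheta^{*}\|_{L_2}^2+\int_0^1\bigl|(\hmu-\bmu)^{\top}\bbeta^{*}\bigr|^2\,du
\end{equation*}
is correct. But this bounds $\|R_n-R\|_{L_2}^{2}$, not $\|R_n-R\|_{L_2}$. Taking the square root gives
\begin{equation*}
\|R_n-R\|_{L_2}=\Op\!\left(\sqrt{p}\,\Bigl(\tfrac{\log n}{n}\Bigr)^{\frac{d}{2d+1}}\right),
\end{equation*}
which is the square root of the rate the corollary asserts, not the rate itself. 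Under \cond{4} one has $p=o(n^{(2d-1)/(2d+1)})$, hence $\sqrt{p}\,(\log n/n)^{d/(2d+1)}\gg p\,(\log n/n)^{2d/(2d+1)}$; the two expressions would only coincide in order when $p\gtrsim(n/\log n)^{2d/(2d+1)}$, which is ruled out. So your final sentence "Combining the two contributions gives the stated rate" is wrong: you have established a strictly weaker bound, and the proposal does not prove the corollary as stated.

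The root cause is exactly the step you flag in your "Main obstacle" paragraph, but you draw the opposite conclusion from the one needed. Downgrading $|R_n-R|^2$ to $|R_n-R|$ via boundedness of the risks sacrifices one full power of the error, which is precisely the power the corollary is claiming to retain. To reach the quadratic rate one must integrate the \emph{square} of the pointwise bound from Proposition~\ref{pro_mis_rate}; this is why the paper, in the paragraph immediately preceding the corollary, explicitly records the uniform bounds $\sup_{u}\twonorm{\hmu(u)-\bmu(u)}$ and $\sup_{u}\twonorm{\hbtheta(u)-\tbtheta(u)}$ in addition to the $L_2$ bound of Theorem~\ref{thm_low_dim_ell2}. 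Those sup-norm bounds are exactly the $L_4$-style control you worried about: using $\int f^4\leq\bigl(\sup_u f^2\bigr)\int f^2$ with $f=\twonorm{\hbtheta-\btheta^{*}}$ (and similarly for the mean term) combines the uniform and $L_2$ bounds multiplicatively and preserves the quadratic behavior, rather than collapsing it via the $L_1$ shortcut. (One should also note that the sup-norm rates for spline estimates carry an extra $\sqrt{L_n}$ over the $L_2$ rates, so even this route requires a careful bookkeeping of logarithmic and $L_n$ factors to match the displayed exponent; but the $L_1$ downgrade discards the square outright and cannot recover it.)
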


\subsection{Sparse high-dimensional regime}\label{sec_high}
The following assumption plays a similar role as condition \cond{4} in low-dimensional regime.
\begin{enumerate}[($\mathbf{C}1$)]
	\setcounter{enumi}{4}
	\item Assume $\sup_{u\in [0,1]}\max\{\twonorm{(\bmu_{1}(u)-\bmu_{2}(u)_{S}}, \twonorm{\btheta^*(u)}\}= \delta_s \leq M$ for some large constant $M$. In addition, $s = o(n^{(2d-1)/4(d+1)})$.
\end{enumerate} 

The approximation error bound under sparse setting is presented in the following theorem, which can be easily obtained by tracing the proof of Theorem \ref{thm_appro_error} since $\theta^{*}_{j}(\cdot) = \widetilde{\theta}_{j}(\cdot) = 0$ for $j \in S^c$.
\begin{theorem}\label{thm_high_appro_error}
	Assume the assumptions \cond{1}-\cond{3} and \cond{5} hold, then the approximation error in high-dimensional case is
	\begin{equation*}
		\|\tbtheta - \btheta^{*}\|_{L_2} \lesssim \sqrt{s}L_n^{-d}.
	\end{equation*}
\end{theorem}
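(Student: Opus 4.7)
The plan is to mimic the proof of Theorem~\ref{thm_appro_error} with $p$ replaced by $s$, exploiting the common support of $\btheta^*$ and $\tbtheta$. First I would invoke the relation (\ref{relation}): since $c^{*}(u) > 0$, the support of $\btheta^{*}(u)$ coincides with that of $\bbeta^{*}(u)$, namely $S$ with $|S| = s$. The sparse population problem (\ref{oracle_spline}) forces $\tgamma_{(j)} = \boldsymbol{0}$ for $j \in S^c$, so $\tilde{\theta}_j \equiv 0$ on $S^c$, and therefore
\begin{align*}
\|\tbtheta - \btheta^{*}\|_{L_2}^2 = \int_0^1 \sum_{j \in S}\bigl(\tilde{\theta}_j(u) - \theta_j^{*}(u)\bigr)^2 \, du,
\end{align*}
collapsing the sum from $p$ to $s$ terms.

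Next I would invoke standard B-spline approximation theory under \cond{3}: each $\theta_j^{*} \in \mathcal{W}^{d}([0,1])$ admits coefficients $\bar{\gamma}_{(j)} \in \mathbb{R}^{L_n}$ such that $\sup_u |\bar{\gamma}_{(j)}^{\top}\bB(u) - \theta_j^{*}(u)| \lesssim L_n^{-d}$. Assemble the candidate $\bar{\btheta}(u)$ whose $j$-th coordinate equals $\bar{\gamma}_{(j)}^{\top}\bB(u)$ for $j \in S$ and $0$ otherwise, so that $\|\bar{\btheta} - \btheta^{*}\|_{L_2}^2 \lesssim s L_n^{-2d}$. Since $\bar{\btheta}$ is feasible in (\ref{oracle_spline}), the optimality of $\tgamma_{(S)} = \bD_{(SS)}^{-1}\bb_{(S)}$ in the quadratic population loss, combined with \cond{1} and \cond{2} (which ensure that the loss-induced norm on the spline subspace is equivalent, up to constants, to the $L_2(du)$ norm of the resulting function vectors), yields $\|\tbtheta - \btheta^{*}\|_{L_2}^2 \lesssim \|\bar{\btheta} - \btheta^{*}\|_{L_2}^2 \lesssim s L_n^{-2d}$. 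Taking square roots gives the claim.

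The only real obstacle is the norm-equivalence step, and it is handled exactly as in Theorem~\ref{thm_appro_error}: \cond{1} gives uniform spectral bounds on $\bSigma(u)$, \cond{2} gives two-sided bounds on the density of $U$, and together they imply that on the finite-dimensional subspace spanned by $\{(X_j - \mu_j(U))\bB(U) : j \in S\}$ the quadratic form generated by $\E[\tB(U)\tB(U)^{\top}]_{(SS)}$ is comparable to $\int_0^1 \twonorm{\cdot}^2 du$ on the corresponding spline coefficients. No probabilistic machinery is required since this is a purely deterministic population statement; sparsity enters only in replacing the factor $\sqrt{p}$ from Theorem~\ref{thm_appro_error} by $\sqrt{s}$.
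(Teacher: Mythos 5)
Your argument is correct and reaches the right conclusion, but it takes a genuinely different route from the paper. The paper's proof of Theorem~\ref{thm_high_appro_error} simply refers back to the proof of Theorem~\ref{thm_appro_error} and replaces $p$ by $s$ using the common support; that proof is an explicit coefficient-level calculation: it writes $\tgamma - \bar{\bgamma} = [\E(\tB\tB^{\top})]^{-1}\E[\tB(Z - \tB^{\top}\bar{\bgamma})]$, uses the first-order condition of \eqref{pop} to simplify $\E[\tB(Z - \tB^{\top}\bar{\bgamma})]$ to $\E[\tB(\bX - \bmu)^{\top}(\btheta^* - \bar{\btheta})]$, bounds its norm by $\sqrt{s}L_n^{-d}$, invokes Lemma~\ref{lemma_eigen} for $\|\bD^{-1}\|_2 = O(1)$, and then applies the triangle inequality. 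You replace this with a projection (Pythagoras) argument: the orthogonality of the residual $Z - (\bX - \bmu(U))^{\top}\btheta^*(U)$ to all vectors $(\bX - \bmu(U))^{\top}g(U)$ turns the excess population loss of any spline candidate $g$ into $\E\bigl[((\bX - \bmu(U))^{\top}(\btheta^* - g)(U))^2\bigr]$, so the minimizer $\tbtheta$ of the loss is also the $\bD$-optimal spline approximant of $\btheta^*$; the $\bD$-norm is two-sided comparable to $\|\cdot\|_{L_2}$ by \cond{1}--\cond{2}, and the bound follows from the B-spline comparison object $\bar{\btheta}$. This buys you a single inequality instead of a triangle inequality plus explicit pseudo-inverse estimate, and avoids the coefficient-space bookkeeping entirely. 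Two small points worth tightening: (i) the Pythagoras step hiding inside ``optimality of $\tgamma$ in the population loss'' deserves to be stated explicitly --- it is the same orthogonality identity the paper exploits, and without it the loss inequality does not by itself imply the norm inequality; (ii) the norm equivalence you actually invoke must hold for arbitrary function vectors (since $\btheta^* - \tbtheta$ and $\btheta^* - \bar{\btheta}$ are not splines), not only on the spline subspace, but this is fine because \cond{1}--\cond{2} give $\E[((\bX - \bmu(U))^{\top}\bnu(U))^2] \asymp \|\bnu\|_{L_2}^2$ for every $\bnu \in L^2$.
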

Below we provide the estimation error bound for the group-sparse estimator in \eqref{lasso_2}, and the proof is deferred to Appendix \ref{proof:thm_hd_ell2}.
\begin{theorem}\label{thm_hd_ell2}
	Assume conditions \cond{1}, \cond{2}, \cond{3} and \cond{5} hold, let $a_n = \sqrt{L_n\log n/n}+L_n^{-d}$, for any $\vartheta>0$, if we take
	\begin{equation}\label{lambda}
		\lambda_n\geq C\left(\sqrt{\frac{L_n\log p}{n}} + a_nL_ns\sqrt{\frac{\log p}{n}} + \sqrt{s} L_n^{-d}\right)
	\end{equation}
	for some sufficiently large positive constant $C$, then
	\begin{equation*}
		\|\hbtheta - \tbtheta\|_{L_2} \lesssim \sqrt{s}\lambda_n
	\end{equation*}
	holds with probability at least $1 - L_np^{-\vartheta} - L_np^{-\vartheta s}$.
\end{theorem}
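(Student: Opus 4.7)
The estimator $\hgamma$ solves a group-Lasso program, so I will follow the standard group-Lasso oracle-inequality strategy, adapting it to the two complications present here: the ``design'' matrix $\bD_n$ and response vector $\bb_n$ are built from the plug-in mean estimate $\hmu$ rather than $\bmu$, and we must finish by translating an $\ell_2$ coefficient bound into an $L_2$ function bound. First I would start from the KKT/basic inequality: letting $\bDelta=\hgamma-\tgamma$, the optimality of $\hgamma$ together with $\tgamma_{(S^c)}=\boldsymbol 0$ yields
\begin{equation*}
\tfrac{1}{2}\bDelta^{\top}\bD_n\bDelta \;\le\; \bDelta^{\top}(\bb_n-\bD_n\tgamma) \;+\; \lambda_n\Big(\sum_{j\in S}\twonorm{\bDelta_{(j)}}-\sum_{j\in S^c}\twonorm{\bDelta_{(j)}}\Big).
\end{equation*}
Bounding $|\bDelta^{\top}(\bb_n-\bD_n\tgamma)|\le \max_j\twonorm{(\bb_n-\bD_n\tgamma)_{(j)}}\sum_j\twonorm{\bDelta_{(j)}}$, the standard argument shows that on the good event
$$\mathcal{G}:=\big\{\max_{1\le j\le p}\twonorm{(\bb_n-\bD_n\tgamma)_{(j)}}\le \lambda_n/2\big\},$$
$\bDelta$ lies in the group cone $\sum_{j\in S^c}\twonorm{\bDelta_{(j)}}\le 3\sum_{j\in S}\twonorm{\bDelta_{(j)}}$ and moreover $\bDelta^{\top}\bD_n\bDelta\le 3\lambda_n\sqrt{s}\twonorm{\bDelta_{(S)}}$.

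Second, I would establish a restricted eigenvalue condition on $\bD_n$: for all $\bDelta$ in the cone, $\bDelta^{\top}\bD_n\bDelta\ge \kappa\twonorm{\bDelta}^2$ for some constant $\kappa>0$ with high probability. The population matrix $\bD=\E[\tB(U)\tB(U)^{\top}]$ has eigenvalues bounded below by combining \cond{1}, \cond{2} and the B-spline Gram-matrix lower bound $\lambda_{\min}(\int_0^1\bB(u)\bB(u)^{\top}du)\gtrsim 1$. I would then port this to $\bD_n$ by invoking the operator-norm concentration results developed for the low-dimensional case (Lemma \ref{lemma_matrix_bound_4}--\ref{lemma_matrix_bound_5}) restricted to sparse supports of size $O(s)$, together with the uniform mean-estimation rate $\sup_u\twonorm{\hmu(u)-\bmu(u)}$ of Proposition \ref{pro_mean_bound} to absorb the perturbation caused by plugging in $\hmu$ instead of $\bmu$. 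This is where \cond{5} and the assumption $s=o(n^{(2d-1)/4(d+1)})$ enter, to ensure the perturbation is $o(1)$.

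Third, the main technical burden is the group-$\ell_\infty$ noise bound defining $\mathcal{G}$. I would decompose
\begin{equation*}
\bb_n-\bD_n\tgamma \;=\;(\bb_n-\bb)\;-\;(\bD_n-\bD)\tgamma\;+\;(\bb-\bD\tgamma)
\end{equation*}
and bound each group block separately. The bias $(\bb-\bD\tgamma)_{(j)}$ is controlled by the B-spline approximation error on the $s$-sized support, contributing the $\sqrt{s}L_n^{-d}$ term in \eqref{lambda}. The stochastic term $(\bb_n-\bb)_{(j)}$ is an empirical mean of sub-Gaussian $L_n$-dimensional vectors; a Bernstein/Hanson--Wright argument combined with a union bound over $p$ groups gives the $\sqrt{L_n\log p/n}$ rate, plus a correction of order $a_n L_n s\sqrt{\log p/n}$ coming from replacing $\bmu$ by $\hmu$ in the definition of $\tB_i$. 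The term $(\bD_n-\bD)\tgamma$ is handled analogously: write it as an empirical mean of $\tB_i\tB_i^{\top}\tgamma$, use $\twonorm{\tgamma}\lesssim\sqrt{s}$ (via \cond{5} and Theorem \ref{thm_high_appro_error}), and use the same chaining-based concentration lemmas. These three pieces match the three summands in \eqref{lambda}, so on $\mathcal{G}$ and under the restricted eigenvalue event, the basic inequality of Step 1 gives $\kappa\twonorm{\bDelta}^2\le 3\lambda_n\sqrt{s}\twonorm{\bDelta}$, i.e.\ $\twonorm{\bDelta}\lesssim\sqrt{s}\lambda_n$.

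Finally, I would translate to $L_2$: since $\int_0^1 \bB(u)\bB(u)^{\top}du$ has eigenvalues of constant order, $\|\hbtheta-\tbtheta\|_{L_2}^2=\bDelta^{\top}(\bI_p\otimes\int_0^1\bB\bB^{\top})\bDelta\asymp\twonorm{\bDelta}^2$, yielding the claimed $\sqrt{s}\lambda_n$ bound. The probability accounting $1-L_np^{-\vartheta}-L_np^{-\vartheta s}$ comes from union bounds over the $p$ groups and $L_n$-dimensional Bernstein events at the stated deviation level. The hardest step is verifying the group-$\ell_\infty$ noise bound with the estimated mean $\hmu$ inserted: direct i.i.d.\ concentration fails because $\hmu(U_i)$ depends on all data, so the argument must lean on the previously proved chaining-type matrix concentration results (Lemmas \ref{lemma_matrix_bound_4}--\ref{lemma_matrix_bound_5}) rather than naive Bernstein, and it is this step that is responsible for the non-standard $a_n L_n s\sqrt{\log p/n}$ cross-term in the choice of $\lambda_n$.
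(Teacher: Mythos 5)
Your overall strategy matches the paper's: reduce to a restricted-eigenvalue (RE) condition on $\bD_n$ together with a group-$\ell_\infty$ bound on $\bD_n\tgamma-\bb_n$, apply a group-Lasso oracle inequality, and convert $\ell_2$ coefficient error to $L_2$ function error via the spline Gram bound. The three-term decomposition of the noise block and its matching with the three summands in \eqref{lambda} is exactly what the paper records in its remark, including the key fact that the bias block $\bD_{(jS)}\tgamma_{(S)}-\bb_{(j)}$ vanishes for $j\in S$ and is $\lesssim\sqrt{s}L_n^{-d}$ for $j\in S^c$ via the population optimality of $\btheta^*$.

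The one place where your route diverges, and where the sketch as written would not go through, is the RE verification. You propose porting the chaining-based operator-norm lemmas to ``sparse supports of size $O(s)$''. But Lemmas \ref{lemma_matrix_bound_4}--\ref{lemma_matrix_bound_5} control $\twonorm{\bD_n-\bD}$ with a $\sqrt{p}$ factor, which is useless in high dimensions, and restricting them to size-$s$ submatrices would require a separate union bound over $\binom{p}{O(s)}$ supports that those lemmas do not provide. The paper instead uses the elementary $\ell_1$--$\ell_\infty$ duality on the cone: for $\bxi$ with $\onenorm{\bxi}\le 4\sqrt{sL_n}\twonorm{\bxi}$,
\begin{equation*}
\bxi^{\top}\bD_n\bxi\ \ge\ \lambda_{\min}(\bD)\twonorm{\bxi}^2 - \onenorm{\bxi}^2\,|\bD_n-\bD|_{\infty}\ \ge\ \bigl(M_1\lambda_0-4sL_n|\bD_n-\bD|_{\infty}\bigr)\twonorm{\bxi}^2,
\end{equation*}
so the whole RE verification reduces to showing $sL_n|\bD_n-\bD|_{\infty}=o_{\Prob}(1)$, which follows from (the entrywise version of) Lemma \ref{lemma_hd_inf} and is the cleanest route probability-wise. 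A minor additional slip: you attribute the $a_nL_ns\sqrt{\log p/n}$ cross-term to the $\bb_n-\bb$ block; in the paper the $\bb_n-\bb$ block only contributes $\sqrt{L_n\log p/n}+a_n$ (Lemma \ref{lemma:bn_concentration}), and the $a_nL_ns\sqrt{\log p/n}$ term comes entirely from $(\bD_n-\bD)\tgamma$ (Lemma \ref{lemma_hd_inf}), though you do also treat the latter term, so the accounting balances.
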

\begin{remark}
    To interpret the orders in (\ref{lambda}), we introduce the crucial quantity in the proof of Theorem \ref{thm_hd_ell2}: $\max_{1\leq j\leq p}\twonorm{(\bD_{n})_{(jS)}\tgamma_{(S)} - (\bb_{n})_{(j)}}$, which can be bounded by
    \begin{equation}\label{bound_decom}
    	\begin{aligned}
    	\twonorm{(\bD_{n})_{(jS)}\tgamma_{(S)} - (\bb_{n})_{(j)}} &\leq \twonorm{(\bD_{n} - \bD)_{(jS)}\tgamma_{(S)}} + \twonorm{(\bb_{n} - \bb)_{(j)}}\\
    	&+ \twonorm{\bD_{(jS)}\tgamma_{(S)} - \bb_{(j)}}.
    	\end{aligned}
    \end{equation}
    For any $1 \leq j\leq p$, the first two terms in (\ref{bound_decom}) can be bounded by $\sqrt{L_n\log p/n} + a_nL_ns\sqrt{\log p/n}$ through concentration. For $j\in S$, $\bD_{(jS)}\tgamma_{(S)} - \bb_{(j)} = \boldsymbol{0}$ holds due to the definition of $\tgamma$ in \eqref{oracle_spline}. For $j \not\in S$, despite the fact $\bD_{(S^cS)}\tgamma_{(S)} - \bb_{(S^c)}\neq \boldsymbol{0}$, we can still show that it is bounded by $\twonorm{(\btheta^{*}(u) - \tbtheta(u))_{S}}$ (see Appendix \ref{proof:thm_hd_ell2}), which is exactly the last term $\sqrt{s}L_n^{-d}$ in (\ref{lambda}).
\end{remark}
If we set the length of B-spline basis vector as $L_n \asymp \LRs{ns/\log p}^{\frac{1}{2d+1}}$, the $L_2$ error of the group-sparse estimator $\hbtheta(\cdot)$ will be
	\begin{equation}\label{opt_rate_hd}
	\|\hbtheta - \btheta^{*}\|_{L_2} = \Op\left(s^{\frac{d+1}{2d+1}}\LRs{\frac{\log p}{n}}^{\frac{d}{2d+1}}\right).
	\end{equation}
Compared with the oracle minimax rate $\sqrt{s}n^{-\frac{2d}{2d+1}}$, there is an additional factor $s^{1/(2d+1)}$ in \eqref{opt_rate_hd} due to the bias $\bD_{(S^cS)}\tgamma_{(S)} - \bb_{(S^c)}$. To obtain the convergence rate for the excess misclassification risk, it suffices to control the upper bound of $|(\hmu(u) - \bmu(u))^{\top}\bbeta^{*}(u)|^2$. Recall the fact $\bSigma(u)\bbeta^{*}(u) = \bmu_1(u)-\bmu_2(u)$, then simple algebra shows that $\bbeta_{S}^{*}(u) = (\bSigma_{SS}(u))^{-1}(\bmu_{1}(u)-\bmu_{2}(u))_S$. Combining with condition \cond{5}, we have $|(\hmu(u) - \bmu(u))^{\top}\bbeta^{*}(u)|^2\lesssim \twonorm{(\hmu(u) - \bmu(u)_S}^2$. Then the following corollary is a direct result of \eqref{opt_rate_hd} and Proposition \ref{pro_mean_bound}.

\begin{corollary}
    With the same conditions and choice of $\lambda_n$ in Theorem \ref{thm_hd_ell2}, if we take $L_n \asymp \LRs{ns/\log p}^{\frac{1}{2d+1}}$, the excess misclassification risk of $\hbtheta$ satisfies that
    \begin{equation*}
        \|R_n - R\|_{L_2} = \Op\LRs{s^{\frac{2d+2}{2d+1}}\LRs{\frac{\log p}{n}}^{\frac{2d}{2d+1}}}.
    \end{equation*}
\end{corollary}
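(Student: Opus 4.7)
The strategy is a direct combination of Proposition \ref{pro_mis_rate} with the bounds already proved for the high-dimensional estimator and the mean functions. First, I would check that the preconditions of Proposition \ref{pro_mis_rate} hold on a high-probability event: since $\twonorm{\bB(u)} \le \sqrt{L_n}$ for every $u$, we have $\sup_u \twonorm{\hbtheta(u) - \tbtheta(u)} \le \sqrt{L_n}\,\twonorm{\hgamma - \tgamma}$, and combined with Theorems \ref{thm_hd_ell2} and \ref{thm_high_appro_error} this yields $\sup_u\twonorm{\hbtheta(u) - \btheta^{*}(u)} = o_{\mathbb{P}}(1)$ under the (C5) sparsity restriction; similarly, the $S$-restricted analog of Proposition \ref{pro_mean_bound} gives $\sup_u \twonorm{(\hmu_l(u) - \bmu_l(u))_S} = o_{\mathbb{P}}(1)$ for $l = 1,2$, which is all that is needed because the second term of Proposition \ref{pro_mis_rate}'s pointwise bound only involves $S$-coordinates of $\hmu - \bmu$.

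On this event, I would integrate the pointwise inequality
\begin{equation*}
|R_n(u) - R(u)| \lesssim \twonorm{\hbtheta(u) - \btheta^{*}(u)}^2 + \LRabs{(\hmu(u) - \bmu(u))^{\top}\bbeta^{*}(u)}^2
\end{equation*}
over $u \in [0,1]$, consistent with the authors' convention (visible in the low-dimensional corollary) that $\|R_n - R\|_{L_2}$ tracks the integrated pointwise bound. The first summand contributes exactly $\|\hbtheta - \btheta^{*}\|_{L_2}^2$, and by \eqref{opt_rate_hd} this equals $\Op(s^{(2d+2)/(2d+1)}(\log p/n)^{2d/(2d+1)})$ at the chosen $L_n \asymp (ns/\log p)^{1/(2d+1)}$. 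For the second summand, $\bbeta^{*}$ is supported on $S$ with $\twonorm{\bbeta^{*}_S(u)} = \twonorm{\bSigma_{SS}^{-1}(u)(\bmu_1(u) - \bmu_2(u))_S} \le \lambda_0^{-1}\delta_s$ uniformly (by (C1) and (C5)), so Cauchy--Schwarz reduces the integrand to $\twonorm{(\hmu(u) - \bmu(u))_S}^2$. The sparse analog of Proposition \ref{pro_mean_bound} then delivers the uniform bound $\sup_u\twonorm{(\hmu(u) - \bmu(u))_S}^2 = \Op(sL_n\log n/n + sL_n^{-2d})$.

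Substituting $L_n \asymp (ns/\log p)^{1/(2d+1)}$ in the standard regime $\log n \lesssim \log p$, a short exponent calculation shows $sL_n\log n/n \lesssim s^{(2d+2)/(2d+1)}(\log p/n)^{2d/(2d+1)}$ and $sL_n^{-2d} = s^{1/(2d+1)}(\log p/n)^{2d/(2d+1)}$, so both contributions collapse into the dominant rate claimed in the corollary. The only real (but routine) obstacle is the $S$-restricted version of Proposition \ref{pro_mean_bound}: one re-runs its proof with $p$ replaced by $|S| = s$, shrinking the union bound over coordinates while preserving the uniform-in-$u$ control. Ensuring the $o_{\mathbb{P}}(1)$ hypotheses of Proposition \ref{pro_mis_rate} then follows from (C5)'s restriction $s = o(n^{(2d-1)/4(d+1)})$, which was precisely designed to make such cross terms negligible.
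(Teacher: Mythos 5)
Your proposal is correct and follows the same route as the paper: apply Proposition \ref{pro_mis_rate}, feed in the $L_2$ bound \eqref{opt_rate_hd} for the $\hbtheta$ term, use $\bbeta^*_S(u) = \bSigma_{SS}^{-1}(u)(\bmu_1(u)-\bmu_2(u))_S$ together with \cond{1}, \cond{5} to reduce the mean term to $\twonorm{(\hmu(u)-\bmu(u))_S}^2$, and then invoke Proposition \ref{pro_mean_bound} to bound the latter. Your exponent calculation at $L_n\asymp(ns/\log p)^{1/(2d+1)}$ matches: $sL_n^{-2d}=s^{1/(2d+1)}(\log p/n)^{2d/(2d+1)}$ is dominated, and $sL_n\log n/n$ is absorbed into the leading term when $\log n\lesssim\log p$. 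Your remark about needing the $S$-restricted version of Proposition \ref{pro_mean_bound} (so the union-bound probability scales with $s$ rather than $p$, and the rate carries a $\sqrt{s}$ rather than $\sqrt{p}$ factor) is a fair observation about a detail the paper treats implicitly; note, however, that the cross term inside the proof of Proposition \ref{pro_mis_rate} actually involves $(\hmu-\bmu)^{\top}(\hbtheta-\btheta^*)$ over the joint support of $\hbtheta-\btheta^*$, not just $S$, so a fully rigorous verification would also control $\infnorm{\hmu-\bmu}$ (uniformly over all $p$ coordinates, via Proposition \ref{pro_mean_bound} with $\vartheta$ chosen so that $pL_n n^{-\vartheta}\to 0$) against $\onenorm{\hbtheta-\btheta^*}$, rather than passing through $\twonorm{\hmu-\bmu}$. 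This is a gloss shared by the paper, not a defect unique to your write-up.
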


\section{Extensions}\label{sec_extension}
This section will generalize our approach to more general classification problems in dynamic data.

\subsection{Binary classification with unequal prior}\label{sec:unequal}
For general static binary classification problem, Bayes' discriminant rule is given by
\begin{equation}\label{gen_rule}
	\psi(\bX) = \dI\left((\bX-\bmu)^{\top} \bSigma^{-1}(\bmu_1-\bmu_2) + \log\frac{\pi_1}{\pi_2} \geq 0\right)
\end{equation}
where $\pi_1 = \Prob(Y = 1), \pi_2 = \Prob(Y = 0)$ and $\bmu = \pi_1 \bmu_1 + \pi_2 \bmu_2$. In varying coefficient regime, according to (\ref{relation}), (\ref{gen_rule}) can be generalized to the following form
\begin{equation}\label{gen_vc_rule}
	\psi(\bX, U) = \dI\left((\bX-\bmu(U))^{\top}c^{*}(U)\btheta^{*}(U) + \log\frac{\pi_1}{\pi_2} \geq 0\right),
\end{equation}
where $c^{*}(U) = 1/[\pi_1\pi_2-\pi_1\pi_2(\bmu_1(U)-\bmu_2(U))^{\top}\btheta^{*}(U)]$. The prior probabilities can be estimated by $\widehat{\pi}_1 = \sum_{i=1}^{N}\dI(Y_i=1)/N$ and $\widehat{\pi}_2 = \sum_{i=1}^{N}\dI(Y_i=0)/N$, where $N$ is the total sample size. To estimate $\btheta^{*}(u)$, we only need to set $Z_i = \widehat{\pi}_2$ if $Y_i = 1$ and $Z_i = -\widehat{\pi}_1$ if $Y_i = 0$ in (\ref{lasso_1}). As a consequence, for any new observation $(\bX_{\text{new}},U_{\text{new}})$, we can perform varying coefficient discriminant rule by plugging in corresponding estimators into (\ref{gen_vc_rule}).

\subsection{Multivariate Exposure Variable}
For multivariate $\bU \in \mathbb{R}^m$, we may consider the following single-index extension. Specially, given $\bU = \boldsymbol{u}$, we assume the covariate $\bX \sim \mathcal{N}(\bmu_1(\boldsymbol{u}^{\top}\boldsymbol{\varphi}^{*}), \bSigma(\boldsymbol{u}^{\top}\boldsymbol{\varphi}^{*}))$ if $Y = 1$ and $\bX \sim \mathcal{N}(\bmu_2(\boldsymbol{u}^{\top}\boldsymbol{\varphi}^{*}), \bSigma(\boldsymbol{u}^{\top}\boldsymbol{\varphi}^{*}))$ if $Y = 0$. Then Bayes's discriminant direction is also a function of $\boldsymbol{u}^{\top}\boldsymbol{\varphi}^{*}$, that is 
$\theta_j^{*}(\boldsymbol{u}) = g_j^{*}(\boldsymbol{u}^{\top}\boldsymbol{\varphi}^{*})$ for $j=1,...,p$, where $g_j^{*}(\cdot)$ is a smooth univariate function. 
Similar to \eqref{pop}, $g_j^{*}$s are defined as the solution of the following least-square problem
\begin{equation*}
    \min_{g_j \in \mathcal{L}^2(\mathcal{P}), j=1,...,p}\E\LRm{\LRs{Z - \sum_{j=1}^p g_j(\mathbf{U}^{\top}\boldsymbol{\varphi}^{*})(X_j - \mu_j(\mathbf{U}^{\top}\boldsymbol{\varphi}^{*}))}^2 \Big| \mathbf{U}}.
\end{equation*}
If $\boldsymbol{\varphi}^*$ is known, we can approximate the function $g_j^{*}(\mathbf{U}^{\top}\boldsymbol{\varphi}^{*})$ by $\bgamma_j^{\top}\bB(\mathbf{U}^{\top}\boldsymbol{\varphi}^{*})$. And the optimal approximation coefficients are defined as
\begin{equation*}
    \min_{\bgamma_j \in \mathbb{R}^{L_n}, j=1,...,p}\E\LRm{\LRs{Z - \sum_{j=1}^p (X_j - \mu_j(\mathbf{U}^{\top}\boldsymbol{\varphi}^{*})) \bgamma_j^{\top}\bB(\mathbf{U}^{\top}\boldsymbol{\varphi}^{*})}^2}.
\end{equation*}

As for the initial estimator of $\boldsymbol{\varphi}$, according to our assumption, we can equivalently write the covariate $\bX_i$ as the form of the standard single index model
\begin{align*}
    \bX_{i} &= \bmu_{1}(\bU_i^\top \boldsymbol{\varphi}^{*}) + \left(\bSigma(\bU_i^\top \boldsymbol{\varphi}^{*})\right)^{1/2} \boldsymbol{\epsilon}_i\quad \text{if} \quad Y_i = 1,\\
    \bX_{i} &= \bmu_{2}(\bU_i^\top \boldsymbol{\varphi}^{*}) + \left(\bSigma(\bU_i^\top \boldsymbol{\varphi}^{*})\right)^{1/2} \boldsymbol{\epsilon}_i\quad \text{if} \quad Y_i = 0,
\end{align*}
where $\boldsymbol{\epsilon}_i \sim \mathcal{N}(\boldsymbol{0}, \bI_p)$.
We may utilize the method proposed in \cite{Xia2006} to obtain the estimator of $\boldsymbol{\varphi}^{*}$, denoted by $\widehat{\boldsymbol{\varphi}}$. By plugging in $\widehat{\boldsymbol{\varphi}}$, the estimators of univariate functions $g_j^{*}$s can be estimated by the B-spline procedure in our paper.

\section{Numerical experiments}\label{sec_simu}
This section investigates the numerical performance of the proposed varying coefficient discriminant procedure. In our simulation study, we only consider the balanced case where the sample sizes of the two classes are equal.

The exposure variable $U_i$ for $i = 1,2,...,2n$ are generated independently from uniform distribution on $[0,1]$ in the following experiments. After generating $U_i$, we sample the covariate $\bX_i$ with $Y_i = 1$ from $\mathcal{N}(\bmu_1(U_i), \bSigma(U_i))$ for $i = 1,...,n$ and sample $\bX_i$ with $Y_i = 0$ from $\mathcal{N}(\bmu_2(U_i), \bSigma(U_i)) $ for $i = n+1,...,2n$, where $\bmu_1(u) = \boldsymbol{0}$ and $\bmu_2(u) = \bSigma(u) \bbeta(u)$. Several combinations of $\bbeta(u)$ and $\bSigma(u)$ are considered in our simulation. Each entry of Bayes' discriminant direction function take values as:
\begin{itemize}
	\item Direction 1: $\beta_{j}^{(1)}(u) = 1$ for $1\leq j \leq p$ (or $s$);
	\item Direction 2: $\beta_{j}^{(2)}(u) = u$ for $1\leq j \leq p$ (or $s$);
	\item Direction 3: $\beta_{j}^{(2)}(u) = \sin(4u)$ for $1\leq j \leq p$ (or $s$);
	\item Direction 4: $\beta_{j}^{(4)}(u) = e^{u}$ for $1\leq j \leq p$ (or $s$).
\end{itemize}
In high-dimensional case, we set $\beta_{j}(\cdot) = 0$ for $s+1\leq j \leq p$. Three covariance matrices are considered in our simulations:
\begin{itemize}
    \item Covariance matrix 1, $\sigma_{i,j}^{(1)}(u) = 0.5^{\abs{i-j}},\text{ for } 1\leq i,j\leq p$;
    \item Covariance matrix 2. $\sigma_{i,j}^{(2)}(u) = u^{\abs{i-j}}, \text{ for } 1\leq i,j\leq p$;
    \item Covariance matrix 3. $\sigma_{i,j}^{(3)}(u) = u\dI(i\neq j) + \dI(i = j) \text{ for } 1\leq i,j\leq p$.
\end{itemize}
The combination of \textit{Direction 1} and \textit{Covariance matrix 1} is a classical static setting, where each entry of the mean vector and covariance matrix is a constant value. The other combinations are dynamic settings. We use the cubic spline in our simulation, and select the number of spline basis functions by 5-fold cross-validation. We compute the misclassification risk based on an independently generated test set with size 200.

\begin{table}[tb]
\centering
\caption{Misclassification risk and its standard error (in parentheses) of each method in low-dimensional case.}\label{Tab0}
\begin{adjustbox}{width=0.8\textwidth}
\begin{tabular}{@{}cccccccc@{}}
\toprule
$p$  & $\bSigma$ & Orcale & VCLDA        & LDA              & Orcale & VCLDA        & LDA                  \\ \midrule
    &       & \multicolumn{3}{c}{$\bbeta^{(1)}$} & \multicolumn{3}{c}{$\bbeta^{(2)}$}  \\ \cmidrule(lr){3-5} \cmidrule(lr){6-8}
\multirow{3}{*}{5}  & 1     & 0.048  & 0.075(0.021) & 0.050(0.016)   & 0.227  & 0.259(0.039) & 0.255(0.032) \\
                    & 2     & 0.055  & 0.078(0.021) & 0.119(0.028) & 0.221  & 0.249(0.034) & 0.272(0.032) \\
                    & 3     & 0.039  & 0.058(0.020)  & 0.093(0.023)  & 0.202  & 0.221(0.032) & 0.245(0.030) \\ \cmidrule(lr){3-5} \cmidrule(lr){6-8}
\multirow{3}{*}{10} & 1     & 0.005  & 0.028(0.014) & 0.006(0.006) & 0.155  & 0.193(0.033) & 0.197(0.029) \\
                    & 2     & 0.014  & 0.038(0.015) & 0.152(0.025) & 0.163  & 0.198(0.029) & 0.270(0.035)   \\
                    & 3     & 0.004  & 0.027(0.012) & 0.104(0.023)  & 0.126  & 0.155(0.031) & 0.212(0.026)  \\ \cmidrule(lr){3-5} \cmidrule(lr){6-8}
\multirow{3}{*}{20} & 1     & 0.000      & 0.020(0.012)  & 0.000(0.001)  & 0.108  & 0.157(0.032) & 0.166(0.028)  \\
                    & 2     & 0.002  & 0.041(0.019) & 0.215(0.034)  & 0.125  & 0.182(0.034) & 0.312(0.035) \\ 
                    & 3     & 0.000      & 0.042(0.017) & 0.117(0.025)  & 0.081  & 0.128(0.024) & 0.222(0.029) \\ \midrule
 &       & \multicolumn{3}{c}{$\bbeta^{(3)}$} & \multicolumn{3}{c}{$\bbeta^{(4)}$} \\ \cmidrule(lr){3-5} \cmidrule(lr){6-8}
\multirow{3}{*}{5}  & 1     & 0.194  & 0.234(0.034) & 0.317(0.034)  & 0.010   & 0.024(0.012) & 0.029(0.013) \\
                    & 2     & 0.192  & 0.234(0.033) & 0.382(0.047) & 0.025  & 0.040(0.015)  & 0.156(0.025) \\
                    & 3     & 0.173  & 0.209(0.028) & 0.351(0.047) & 0.018  & 0.033(0.014) & 0.137(0.023) \\ \cmidrule(lr){3-5} \cmidrule(lr){6-8}
\multirow{3}{*}{10} & 1     & 0.125  & 0.186(0.027) & 0.281(0.038) & 0.001  & 0.011(0.008) & 0.016(0.009)\\
                    & 2     & 0.117  & 0.183(0.030)  & 0.437(0.046) & 0.007  & 0.027(0.014) & 0.195(0.029)\\
                    & 3     & 0.092  & 0.154(0.031) & 0.353(0.051) & 0.003  & 0.022(0.013) & 0.157(0.023) \\ \cmidrule(lr){3-5} \cmidrule(lr){6-8}
\multirow{3}{*}{20} & 1     & 0.083  & 0.189(0.031) & 0.286(0.039) & 0.000      & 0.014(0.009) & 0.011(0.008) \\
                    & 2     & 0.077  & 0.200(0.037)   & 0.476(0.041) & 0.001  & 0.041(0.023) & 0.246(0.035) \\
                    & 3     & 0.054  & 0.176(0.037) & 0.391(0.059) & 0.000      & 0.044(0.022) & 0.172(0.029) \\ \bottomrule 
\end{tabular}
\end{adjustbox}
\end{table}

\subsection{Low-dimensional case}
 For low-dimensional case, the sample size of each class is fixed as $n = 100$ and the dimensionality $p$ is varying from $\{5, 10, 20\}$. The proposed method in this paper (abbreviated as VCLDA) is deployed to the generated data. For comparison, we also conduct the following two classification rules:
\begin{enumerate}
	\item Oracle: use the population Bayes' discriminant direction $\bSigma(u)^{-1}(\bmu_1(u) - \bmu_2(u))$ to conduct classification.
	\item LDA: use the \emph{static} estimators of mean vectors and covariance matrix, i.e., the sample means and sample covariance matrix, to compute discriminant direction.
\end{enumerate}

We report the averaged misclassification risks computed from the test set in Table \ref{Tab0}. The oracle classification rule is the most accurate among all procedures. In a static setting, we can see that LDA achieves nearly oracle performance. As we expected, the performance of LDA procedure degrades drastically in the dynamic case. Meanwhile, the misclassification risk of VCLDA is significantly lower than LDA, and very close to the oracle procedure in all dynamic settings.

\subsection{High-dimensional case}
In high-dimensional simulation, we fix the sample size of each class as $n = 100$ and consider the dimensionality $p = 100$ and $p = 200$. Moreover, the sparsity under each dimensionality varies in $\{5, 10, 20\}$. For comparison, we also conduct the oracle rule, the static LPD rule \citep{cai2011direct} and DLPD rule \citep{jiang2020dynamic} in the test set. The misclassification risks and their standard errors under four discriminant direction functions are summarized in Table \ref{Tab1}-\ref{Tab4} respectively. Undoubtedly, the oracle classification rule is the most accurate among all procedures. In a static setting, it can be seen that the DLPD rule almost achieves the same performance as the LPD rule in static settings (see Table \ref{Tab1}). As we expected, the performance of the classical LDA procedure degrades drastically in the dynamic case, which performs like random guessing in a highly dynamic setting. Except for the static setting, we can see that the misclassification risk of our proposed VCLDA rule is significantly lower than the DLPD rule, especially for the setting with \textit{Covariance matrix 2}. In addition, the results indicate that the performance of VCLDA is most close to the oracle procedure.

In fact, VCLDA fully uses the information that the discrimination direction varies with different values of $U$ while the active set of the discrimination coefficient will not change in our simulation settings. The former leads to a lower misclassification risk than the static LPD rule, and the latter leads to better performance over the DLPD rule.

\begin{table}[tb]
\centering
\caption{Misclassification risk and its standard error (in parenthesis) of each method under Direction 1 in high-dimensional case.}\label{Tab1}
\begin{adjustbox}{width=0.9\textwidth}
\begin{tabular}{@{}cccccccccc@{}}
\toprule
$s$   & $\bSigma$ & Orcale & VCLDA        & LPD          & DLPD         & Orcale & VCLDA        & LPD          & DLPD         \\ \midrule
 &       & \multicolumn{4}{c}{$p = 100$} & \multicolumn{4}{c}{$p = 200$}              \\ \cmidrule(lr){3-6} \cmidrule(lr){7-10}
\multirow{3}{*}{5}  & 1     & 0.048  & 0.076(0.019) & 0.053(0.012) & 0.053(0.015) & 0.048  & 0.071(0.018) & 0.057(0.016) & 0.057(0.016) \\
                    & 2     & 0.055  & 0.070(0.017)  & 0.195(0.154) & 0.202(0.035) & 0.056  & 0.067(0.017) & 0.153(0.106) & 0.134(0.040) \\
                    & 3     & 0.039  & 0.060(0.017)  & 0.332(0.192) & 0.085(0.018) & 0.039  & 0.060(0.017) & 0.150(0.058) & 0.100(0.029) \\ \cmidrule(lr){3-6} \cmidrule(lr){7-10}
\multirow{3}{*}{10} & 1     & 0.005  & 0.015(0.009) & 0.103(0.193) & 0.101(0.187) & 0.005  & 0.025(0.012) & 0.007(0.006) & 0.007(0.006) \\
                    & 2     & 0.014  & 0.041(0.017) & 0.152(0.028) & 0.168(0.034) & 0.014  & 0.043(0.019) & 0.148(0.023) & 0.160(0.030) \\
                    & 3     & 0.004  & 0.012(0.009) & 0.123(0.026) & 0.040(0.019)  & 0.004  & 0.016(0.011) & 0.128(0.021) & 0.040(0.015) \\ \cmidrule(lr){3-6} \cmidrule(lr){7-10}
\multirow{3}{*}{20} & 1     & 0.000      & 0.009(0.006) & 0.055(0.157) & 0.055(0.157) & 0.000      & 0.004(0.005) & 0.000(0.001) & 0.000(0.001) \\
                    & 2     & 0.002  & 0.018(0.010)  & 0.208(0.067) & 0.176(0.030)  & 0.002  & 0.014(0.009) & 0.198(0.031) & 0.164(0.026) \\
                    & 3     & 0.000      & 0.009(0.007) & 0.123(0.022) & 0.026(0.013) & 0.000      & 0.009(0.008) & 0.125(0.021) & 0.029(0.018) \\ \bottomrule 
\end{tabular}
\end{adjustbox}
\end{table}

\begin{table}[tb]
\centering
\caption{Misclassification risk and its standard error (in parentheses) of each method under Direction 2 in high-dimensional case.}\label{Tab2}
\begin{adjustbox}{width=0.9\textwidth}
\begin{tabular}{@{}cccccccccc@{}}
\toprule
$s$     & $\bSigma$ & \multicolumn{1}{c}{Orcale} & \multicolumn{1}{c}{VCLDA} & \multicolumn{1}{c}{LPD} & \multicolumn{1}{c}{DLPD} & \multicolumn{1}{c}{Orcale} & \multicolumn{1}{c}{VCLDA} & \multicolumn{1}{c}{LPD} & \multicolumn{1}{c}{DLPD} \\ \midrule
&    & \multicolumn{4}{c}{$p = 100$}                                                                        & \multicolumn{4}{c}{$p = 200$}                                                                       \\ \cmidrule(lr){3-6} \cmidrule(lr){7-10}
\multirow{3}{*}{5}  & 1     & 0.225                      & 0.243(0.031)              & 0.371(0.109)            & 0.248(0.031)             & 0.227                      & 0.244(0.032)              & 0.281(0.045)            & 0.300(0.042)             \\
                    & 2     & 0.217                      & 0.252(0.028)              & 0.274(0.032)            & 0.338(0.046)             & 0.220                       & 0.237(0.031)              & 0.280(0.042)            & 0.370(0.043)             \\
                    & 3     & 0.199                      & 0.241(0.031)              & 0.268(0.030)             & 0.251(0.027)             & 0.204                      & 0.232(0.031)              & 0.272(0.054)            & 0.239(0.032)             \\ \cmidrule(lr){3-6} \cmidrule(lr){7-10}
\multirow{3}{*}{10} & 1     & 0.158                      & 0.173(0.027)              & 0.204(0.029)            & 0.210(0.030)               & 0.160                       & 0.189(0.028)              & 0.206(0.031)            & 0.210(0.031)             \\
                    & 2     & 0.164                      & 0.208(0.031)              & 0.260(0.050)              & 0.347(0.033)             & 0.165                      & 0.185(0.026)              & 0.258(0.026)            & 0.345(0.034)             \\
                    & 3     & 0.126                      & 0.156(0.026)              & 0.212(0.025)            & 0.166(0.026)             & 0.127                      & 0.140(0.024)              & 0.212(0.030)            & 0.174(0.028)             \\ \cmidrule(lr){3-6} \cmidrule(lr){7-10}
\multirow{3}{*}{20} & 1     & 0.107                      & 0.126(0.024)              & 0.182(0.032)            & 0.146(0.030)              & 0.108                      & 0.132(0.025)              & 0.166(0.025)            & 0.146(0.027)             \\
                    & 2     & 0.126                      & 0.184(0.028)              & 0.280(0.064)             & 0.336(0.028)             & 0.124                      & 0.179(0.028)              & 0.270(0.027)            & 0.329(0.029)             \\
                    & 3     & 0.081                      & 0.099(0.021)              & 0.207(0.056)            & 0.116(0.020)              & 0.081                      & 0.093(0.020)              & 0.202(0.026)            & 0.124(0.023)             \\ \bottomrule
\end{tabular}
\end{adjustbox}
\end{table}

\begin{table}[tb]
\centering
\caption{Misclassification risk and its standard error (in parenthesis) of each method under Direction 3 in high-dimensional case.}\label{Tab3}
\begin{adjustbox}{width=0.9\textwidth}
\begin{tabular}{@{}cccccccccc@{}}
\toprule
$s$                 & $\bSigma$ & Orcale & VCLDA        & LPD          & DLPD         & Orcale & VCLDA        & LPD          & DLPD         \\ \midrule
                    &           & \multicolumn{4}{c}{$p = 100$}                       & \multicolumn{4}{c}{$p = 200$}                       \\ \cmidrule(lr){3-6} \cmidrule(lr){7-10}
\multirow{3}{*}{5}  & 1         & 0.193  & 0.244(0.033) & 0.337(0.055) & 0.291(0.04)  & 0.194  & 0.221(0.030) & 0.335(0.053) & 0.286(0.031) \\
                    & 2         & 0.193  & 0.211(0.029) & 0.412(0.068) & 0.302(0.032) & 0.192  & 0.214(0.028) & 0.395(0.068) & 0.298(0.031) \\
                    & 3         & 0.172  & 0.203(0.033) & 0.334(0.048) & 0.298(0.029) & 0.172  & 0.209(0.028) & 0.341(0.051) & 0.291(0.032) \\ \cmidrule(lr){3-6} \cmidrule(lr){7-10}
\multirow{3}{*}{10} & 1         & 0.123  & 0.147(0.024) & 0.280(0.038)  & 0.232(0.028) & 0.125  & 0.152(0.025) & 0.278(0.039) & 0.237(0.027) \\
                    & 2         & 0.119  & 0.148(0.026) & 0.450(0.075)  & 0.293(0.05)  & 0.122  & 0.145(0.028) & 0.442(0.075) & 0.270(0.032) \\
                    & 3         & 0.090   & 0.115(0.025) & 0.282(0.046) & 0.214(0.03)  & 0.091  & 0.126(0.025) & 0.274(0.036) & 0.226(0.035) \\ \cmidrule(lr){3-6} \cmidrule(lr){7-10}
\multirow{3}{*}{20} & 1         & 0.080   & 0.110(0.022)  & 0.249(0.036) & 0.201(0.027) & 0.084  & 0.109(0.021) & 0.260(0.042) & 0.201(0.031) \\
                    & 2         & 0.080   & 0.114(0.023) & 0.487(0.045) & 0.234(0.027) & 0.079  & 0.114(0.023) & 0.498(0.020) & 0.229(0.027) \\
                    & 3         & 0.050   & 0.084(0.023) & 0.218(0.037) & 0.184(0.026) & 0.052  & 0.071(0.019) & 0.218(0.031) & 0.221(0.033) \\ \bottomrule
\end{tabular}
\end{adjustbox}
\end{table}

\begin{table}[tb]
\centering
\caption{Misclassification risk and its standard error (in parentheses) of each method under Direction 4 in high-dimensional case.}\label{Tab4}
\begin{adjustbox}{width=0.9\textwidth}
\begin{tabular}{@{}cccccccccc@{}}
\toprule
$s$                 & $\bSigma$ & Orcale & VCLDA        & LPD          & DLPD         & Orcale & VCLDA        & LPD          & DLPD         \\ \midrule
                    &           & \multicolumn{4}{c}{$p = 100$}                       & \multicolumn{4}{c}{$p = 200$}                       \\ \cmidrule(lr){3-6} \cmidrule(lr){7-10}
\multirow{3}{*}{5}  & 1         & 0.010   & 0.018(0.009) & 0.032(0.014) & 0.015(0.010)  & 0.010   & 0.022(0.012) & 0.033(0.012) & 0.019(0.010) \\
                    & 2         & 0.025  & 0.049(0.016) & 0.174(0.087) & 0.207(0.035) & 0.025  & 0.035(0.013) & 0.271(0.163) & 0.177(0.029) \\
                    & 3         & 0.018  & 0.037(0.014) & 0.299(0.166) & 0.056(0.021) & 0.018  & 0.031(0.014) & 0.175(0.021) & 0.060(0.020) \\ \cmidrule(lr){3-6} \cmidrule(lr){7-10}
\multirow{3}{*}{10} & 1         & 0.001  & 0.010(0.007)  & 0.020(0.011)  & 0.005(0.009) & 0.001  & 0.006(0.006) & 0.020(0.011) & 0.003(0.004) \\
                    & 2         & 0.007  & 0.025(0.011) & 0.408(0.141) & 0.195(0.031) & 0.006  & 0.020(0.011) & 0.319(0.144) & 0.174(0.028) \\
                    & 3         & 0.003  & 0.022(0.012) & 0.179(0.041) & 0.033(0.014) & 0.003  & 0.016(0.010) & 0.186(0.060) & 0.037(0.014) \\ \cmidrule(lr){3-6} \cmidrule(lr){7-10}
\multirow{3}{*}{20} & 1         & 0.000     & 0.007(0.006) & 0.012(0.008) & 0.000(0.001)     & 0.000      & 0.007(0.007) & 0.011(0.008) & 0.000(0.003) \\
                    & 2         & 0.001  & 0.011(0.009) & 0.426(0.116) & 0.189(0.029) & 0.001  & 0.015(0.009) & 0.479(0.063) & 0.179(0.027) \\
                    & 3         & 0.000     & 0.008(0.005) & 0.170(0.026)  & 0.027(0.016) & 0.000      & 0.009(0.008) & 0.178(0.022) & 0.027(0.015) \\ \bottomrule
\end{tabular}
\end{adjustbox}
\end{table}

\section{Real Data Analysis}\label{sec_realdata}

Diffuse large B-cell lymphoma (DLBCL) is a heterogeneous disease with recognized variability in clinical outcome, genetic features, and cells of origin. It is of vital importance for precision medicine if we can predict DLBCL in advance. Using the data provided in \cite{Monti2005Molecular}, we establish the model to predict DLBCL according to the gene expression. It is mentioned in \cite{Monti2005Molecular} that tumors had less frequent genetic abnormalities in younger patients. Thus, our proposed method VCLDA seems suitable for setting up the prediction model by setting \emph{the age} as the exposure variable $U$.

The original data has 124 patients and 44972 gene expression levels. The binary response means whether a germinal center B-cell is normal or not, which is the significant signal of DLBCL. We screen out 150 gene expression levels to build a model according to the $t$ test on the binary response. We conduct the following four procedures: LPD (exclude age as a covariate), LPD (include age as a covariate), DLPD (regard age as $U$), and VCLDA (regard age as $U$). We randomly choose ten patients as the test sample in each trial and regard the remaining samples as the training set to run the classification procedure. The average results of misclassification risks on the test sample over 100 trials are reported in Table \ref{Tab_real}. It shows that the contribution of $U$ is negligible as a covariate in the static LPD rule. In contrast, it improves the classification accuracy tremendously as an exposure variable in the dynamic model.

\begin{table}[tb]
\centering
\caption{The average misclassification risk and its standard error of each method in DLBCL dataset.}\label{Tab_real}
\begin{adjustbox}{width=1\textwidth}
\begin{tabular}{@{}ccccc@{}}
\toprule
Method & LPD (exclude age)   & LPD (include age) & DLPD ($U =$ age)  & VCLDA ($U =$ age) \\ \midrule
Avg    & 0.432 & 0.432      & 0.192 & 0.171 \\
SE     & 0.211 & 0.211      & 0.167 & 0.122\\ \bottomrule
\end{tabular}
\end{adjustbox}
\end{table}

Additionally, the active sets selected by the DLPD method under different ages are highly coincident. It means that the genes influencing DLBCL will not change significantly with age, which is also reasonable in the gene analysis. Two genes are excluded from the active set by the DLPD method during a very short age interval, which may be confusing and misleading to the relative researchers. Nearly all active genes selected by the DLPD method are also selected by the VCLDA method. Besides, as we find in coefficients estimated by VCLDA, most of the genes have a weak influence on DLBCL when U is small, which collaborates with the conclusion in \cite{Monti2005Molecular} that tumors have less frequent genetic abnormalities in younger patients.

\section{Discussion}
This paper investigates the LDA model for dynamic data and proposes a new varying coefficient discriminant rule. The proposed classification procedure is more efficient than the dynamic linear programming rule \citep{jiang2020dynamic}. We also establish the upper bounds for estimation error and uniform excess misclassification risk. The synthetic and real data experiments also demonstrate a better classification performance of our varying coefficient LDA method.

The Gaussian graphical model (GGM) is an essential formalism to infer dependence structures of contemporary data sets, whose structure is equivalent to the support of the precision matrix. Recently, \cite{qiao2020} proposed the functional graphical model and assumed the covariate is a $p$-dimensional functional data. The authors proposed an estimator of the precision matrix function based on kernel smoothing and CLIME \citep{cai2011constrained}. Therefore, studying the high-dimensional, varying coefficient GGM under a dynamic setting will be of great interest.

\bibliography{docref}

\begin{appendices}
\section{Preliminaries}\label{appen:pre}
\subsection{Background of B-spline approximation}\label{appen:Bspline}
	From now on, we will omit the argument in random vector $\bB(U)$ and $\tB(U)$ and write $\bB$ and $\tB$ respectively whenever the context is clear. We introduce the following facts about standard B-spline basis $\bB^{*}(u) = (B_1^{*}(u),...,B_{L_n}^{*}(u))^{\top}$ (see \citep{de1978practical,fan2014nonparametric}), which will be used in our proof:
	\begin{enumerate}
		\item For any $u\in [0,1]$, both $0\leq \max_{1\leq k\leq L_n}B_k^{*}(u)\leq 1$ and $\sum_{k=1}^{L_n}B_k^{*}(u) = 1$ hold.
		\item For any $\eta_k\in \mathbb{R}$, $k=1,2,...,L_n$, we have
			\begin{equation}\label{int_bound}
				 L_n^{-1}\sum_{k=1}^{L_{n}} \eta_{k}^{2} \lesssim \int\left(\sum_{k=1}^{L_{n}} \eta_{k} B_{k}^{*}(w)\right)^{2} d w \lesssim L_n^{-1} \sum_{k=1}^{L_{n}} \eta_{k}^{2}.
			\end{equation}
	\end{enumerate}
	From the facts displayed above, for any $r\geq 1$, we also have
		\begin{equation*}
			\E\LRm{|B_k^{*}(U)|^r}\asymp L_n^{-1},
		\end{equation*}
		and
		\begin{equation*}
			\twonorm{\E[\bB^{*}]} = \sup_{\twonorm{\bnu} = 1}|\E[\bnu^{\top}\bB^{*}]|\leq \sup_{\twonorm{\bnu} = 1}\LRs{\E[\bnu^{\top}\bB^{*}]^2}^{1/2} = O(L_n^{-1/2}).
		\end{equation*}
	Similarly, we can also obtain that
		\begin{equation*}
			L_n^{-1} \lesssim \lambda_{\min}(\E[\bB^{*}\bB^{*\top}])\leq \lambda_{\max}(\E[\bB^{*}\bB^{*\top}]) \lesssim L_n^{-1}.
		\end{equation*}
	Writing the scaled B-spline basis as $\bB(u) = \sqrt{L_n}\bB^{*}(u)$, we have
	\begin{equation*}
			\twonorm{\E[\bB]}  = O(1),
		\end{equation*}
	and
	\begin{equation*}
			\lambda_{\min}(\E[\bB\bB^{\top}]) = O(1),\quad \lambda_{\max}(\E[\bB\bB^{\top}])  = O(1).
	\end{equation*}
	
	\subsection{Concentration inequality}
	The following concentration inequality will be used throughout the proof.
	\begin{lemma}[Lemma 1, \cite{cai2011direct}]\label{bern_ineq}
	Let $\xi_1,...,\xi_n$ be independent random variables with mean 0. Suppose that there exists some $\phi>0$ and $s_n$ such that $\sum_{i=1}^n\E[\xi_i^2e^{\phi|\xi_i|}]\leq s_n^2$. Then for $0<x<s_n^2$,
	\begin{equation*}
		\Prob\left(\sum_{i=1}^n\xi_i\geq C_{\phi}s_n x\right)\leq \exp(-x^2)
	\end{equation*}
	where $C_{\phi} = \phi + \phi^{-1}$.
\end{lemma}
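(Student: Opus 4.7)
The plan is to derive this Bernstein-type tail bound through the standard Chernoff exponential-moment method. First I would apply the exponential Markov inequality: for any $\lambda > 0$,
\[
\Prob\left(\sum_{i=1}^n \xi_i \geq t\right) \leq e^{-\lambda t} \prod_{i=1}^n \E\bigl[e^{\lambda \xi_i}\bigr],
\]
so the problem reduces to a uniform bound on each moment generating function. Using $\E[\xi_i] = 0$ and Taylor expanding, $\E[e^{\lambda \xi_i}] = 1 + \sum_{k \geq 2} \frac{\lambda^k}{k!}\E[\xi_i^k]$, so the task becomes to control the polynomial moments via the single exponential-moment hypothesis.

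Next, I would use the elementary inequality $|y|^{k-2} \leq \frac{(k-2)!}{\phi^{k-2}} e^{\phi |y|}$, valid for all $y \in \mathbb{R}$ and $k \geq 2$ (immediate from dropping all but one term in the Taylor series of $e^{\phi|y|}$). Multiplying by $\xi_i^2$, taking expectation, and summing across $i$ gives
\[
\sum_{i=1}^n \E[|\xi_i|^k] \leq \frac{(k-2)!}{\phi^{k-2}} \sum_{i=1}^n \E\bigl[\xi_i^2 e^{\phi |\xi_i|}\bigr] \leq \frac{(k-2)!}{\phi^{k-2}} s_n^2.
\]
Combining this with $\log(1+u) \leq u$ then yields, for $0 < \lambda < \phi$,
\[
\sum_{i=1}^n \log \E[e^{\lambda \xi_i}] \leq s_n^2 \sum_{k \geq 2} \frac{\lambda^k}{k(k-1)\phi^{k-2}} \leq \frac{s_n^2 \lambda^2}{2(1-\lambda/\phi)},
\]
using $k(k-1) \geq 2$ and summing the resulting geometric series in $\lambda/\phi$.

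Finally, plugging the MGF bound into the Chernoff inequality and tuning $\lambda$ proportional to $x/s_n$ to match the target threshold $t = C_\phi s_n x$, the linear-in-$\lambda$ saving produces a $-C_\phi x^2$ term while the quadratic-in-$\lambda$ cost produces an $x^2$-scale term with a bounded prefactor; the constraint $x < s_n^2$ is what keeps $\lambda$ safely inside the convergence regime $\lambda < \phi$ and keeps the correction factor $1/(1-\lambda/\phi)$ controlled. The main obstacle is the bookkeeping of constants: one must tune $\lambda$ so that the linear gain beats the quadratic loss by exactly $C_\phi = \phi + \phi^{-1}$, and one must verify that the ``sub-Gaussian'' regime is precisely the range $0 < x < s_n^2$ (outside of which the tail would necessarily exhibit the weaker sub-exponential decay of a Bennett-type inequality rather than the $e^{-x^2}$ stated here).
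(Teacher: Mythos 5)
This lemma is cited by the paper from Cai and Liu (2011) and is not proved in the paper, so there is no in-paper proof to compare against. Your Chernoff-plus-moment-bound strategy is indeed the standard route and matches the method of the cited source. However, your sketch has two genuine gaps.

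First, the moment estimate $|y|^{k-2} \le (k-2)!\,\phi^{-(k-2)}e^{\phi|y|}$ combined with the crude step $k(k-1)\ge 2$ and the geometric series gives $\sum_i\log\E[e^{\lambda\xi_i}] \le \tfrac{s_n^2\lambda^2}{2(1-\lambda/\phi)}$, which blows up as $\lambda\uparrow\phi$. Tuning $\lambda$ proportionally to $x/s_n$ pushes $\lambda$ toward the boundary of $[0,\phi)$ precisely when $x$ approaches its upper limit, so the $(1-\lambda/\phi)^{-1}$ factor is not just bookkeeping: with that loose MGF bound the tuning cannot deliver the stated constant uniformly over the claimed $x$-range without further ideas. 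A tighter route is needed, for example the elementary inequality $e^t-1-t\le\tfrac{t^2}{2}e^{|t|}$, which for $\lambda\le\phi$ gives $\E[e^{\lambda\xi_i}]\le 1+\tfrac{\lambda^2}{2}\E[\xi_i^2e^{\phi|\xi_i|}]$, hence $\sum_i\log\E[e^{\lambda\xi_i}]\le\tfrac{\lambda^2 s_n^2}{2}$ with no blowup at $\lambda=\phi$; alternatively one can use the exact series value $\sum_{k\ge 2}\tfrac{r^k}{k(k-1)}=(1-r)\log(1-r)+r$ with $r=\lambda/\phi$, which is finite at $r=1$.

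Second, your explanation of why the stated range keeps $\lambda$ admissible is incorrect. With $\lambda\propto x/s_n$, the constraint $\lambda<\phi$ translates to $x\lesssim\phi\,s_n$, which is not the form $x<s_n^2$. In fact the version of this lemma in the cited source reads $0<x\le B_n$, i.e.\ $x\le s_n$ in the present notation, consistent with the tuning just described; the exponent in the paper's restatement appears to be a typo, and your proposed heuristic for the range would not salvage it.
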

Next lemma gives the moment inequalities for normal random variable.
\begin{lemma}\label{lemma:normal_moment_bound}
    Let $X \sim \mathcal{N}(0, \sigma^2)$, then for any $0\leq \phi \leq \frac{1}{\sqrt{2}\sigma}$,
    \begin{align*}
        \E[X^2 e^{\phi |X|}] \leq \frac{e}{\phi^2} \frac{1}{\sqrt{1 - 2\phi^2 \sigma^2}};
    \end{align*}
    and for any $\phi \geq 0$ and $k\geq 1$
    \begin{align*}
        \E[X^k e^{\phi |X|}] \leq e^{\frac{\phi^2\sigma^2}{2}}\LRs{\E[X_{-\phi\sigma^2}^k] + \E[X_{\phi\sigma^2}^k]}\quad \text{holds for any } \phi \geq 0,
    \end{align*}
    where $X_{-\phi \sigma^2} \sim \mathcal{N}(-\phi \sigma^2, \sigma^2)$ and $X_{\phi \sigma^2} \sim \mathcal{N}(\phi \sigma^2, \sigma^2)$.
\end{lemma}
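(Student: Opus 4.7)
The plan is to prove the two bounds via the same basic trick: decompose $e^{\phi|X|} \le e^{\phi X}+e^{-\phi X}$ (since $e^{\phi|X|}$ equals whichever of the two exponentials is larger) and reduce each piece to a Gaussian integral handled by completing the square.

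For the second inequality, multiply the pointwise bound $e^{\phi|X|}\le e^{\phi X}+e^{-\phi X}$ by $X^k$ and take expectations. Completing the square inside the Gaussian density gives
\[
 -\frac{x^2}{2\sigma^2}+\phi x = -\frac{(x-\phi\sigma^2)^2}{2\sigma^2}+\frac{\phi^2\sigma^2}{2},
\]
so $\E[X^k e^{\phi X}] = e^{\phi^2\sigma^2/2}\E[X_{\phi\sigma^2}^k]$ with $X_{\phi\sigma^2}\sim\mathcal{N}(\phi\sigma^2,\sigma^2)$. The analogous identity with $-\phi$ handles the other term and yields the stated inequality. For odd $k$ both sides vanish by symmetry (so the bound is trivial), while for even $k$ the multiplication preserves the inequality since $X^k\ge 0$.

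For the first inequality, specialise the above to $k=2$ to get the explicit evaluation $\E[X^2 e^{\phi|X|}] \le 2\sigma^2(1+\phi^2\sigma^2)e^{\phi^2\sigma^2/2}$ via $\E[X_{\pm\phi\sigma^2}^2] = \sigma^2+\phi^2\sigma^4$. To recast this in the stated form, it then suffices to verify the one-variable inequality
\[
 2u(1+u)\sqrt{1-2u}\,e^{u/2}\le e \quad \text{for } u=\phi^2\sigma^2\in(0,1/2],
\]
which is elementary: the left side vanishes at both endpoints and has a unique interior maximum (found by setting the log-derivative to zero) whose value is well below $e$. An alternative route, which produces the denominator $\sqrt{1-2\phi^2\sigma^2}$ more transparently, combines the pointwise bound $t\le e^{t-1}$ applied to $t=\phi^2 X^2$ (giving $X^2\le(e\phi^2)^{-1}e^{\phi^2 X^2}$) with the Gaussian identity $\E[e^{aX^2+bX}]=(1-2a\sigma^2)^{-1/2}\exp(b^2\sigma^2/(2(1-2a\sigma^2)))$, which is finite precisely when $a\sigma^2<1/2$, matching the constraint $\phi\le 1/(\sqrt{2}\,\sigma)$.

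The main obstacle is bookkeeping the constants so the final bound takes exactly the stated form. The simpler ``shift-and-evaluate'' route leaves a polynomial factor $1+\phi^2\sigma^2$ that must be absorbed into the denominator $\sqrt{1-2\phi^2\sigma^2}$ via the scalar inequality above, whereas the alternative ``$t\le e^{t-1}$'' route produces the denominator directly but leaves an extraneous exponential $\exp(\phi^2\sigma^2/(2(1-2\phi^2\sigma^2)))$ that must be controlled over the admissible range of $\phi$.
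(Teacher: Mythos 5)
Your handling of the second inequality is essentially the paper's argument restated: the paper splits the integral over $\{x\le 0\}$ and $\{x\ge 0\}$, completes the square, and extends each half-line integral to $\mathbb{R}$, while you use the equivalent pointwise bound $e^{\phi|X|}\le e^{\phi X}+e^{-\phi X}$, multiply by $X^k$, and complete the square. Your explicit discussion of odd $k$ (both sides vanish by symmetry) is in fact a necessary observation, since the multiplication by $X^k$ only preserves the pointwise inequality when $k$ is even; correspondingly, the paper's extension of the second half-line integral to $\mathbb{R}$ adds a nonpositive term for odd $k$ that must be seen to cancel against the first, a subtlety the paper leaves implicit.

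For the first inequality you take a genuinely different route. The paper applies $s^2 e^s\le e^{2s}$ with $s=\phi|X|$ together with $2\phi|X|\le 1+\phi^2X^2$ to obtain $X^2e^{\phi|X|}\le e\phi^{-2}e^{\phi^2X^2}$ pointwise, then evaluates $\E[e^{\phi^2X^2}]=(1-2\phi^2\sigma^2)^{-1/2}$; this produces the stated constant and denominator from a single Gaussian integral with no leftover polynomial factor. Your route (specialize the second inequality to $k=2$, then establish the scalar inequality $2u(1+u)\sqrt{1-2u}\,e^{u/2}\le e$ on $(0,1/2]$) is correct — the left side peaks near $0.62$, well below $e$ — but it introduces an extra scalar optimization step the paper sidesteps, and your verification of that scalar inequality is only a sketch.

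The ``alternative route'' you mention at the end does not close. Applying $t\le e^{t-1}$ to $t=\phi^2X^2$ leaves $e^{\phi|X|}$ untouched in the expectation, so after the Gaussian identity you must absorb the factor $\exp\bigl(\phi^2\sigma^2/(2(1-2\phi^2\sigma^2))\bigr)$ into the constant $e$. That exponent exceeds $2-\log 2$ once $\phi^2\sigma^2$ exceeds roughly $0.42$ and diverges as $\phi^2\sigma^2\to 1/2$, so the bound fails near the admissible boundary. The crucial feature of the paper's two pointwise bounds is that together they eliminate the linear term $\phi|X|$ from the exponent before the Gaussian is evaluated; your $t\le e^{t-1}$ step leaves it in place, and its cross-term in the completed square is exactly what blows up.
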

\begin{proof}[Proof of Lemma \ref{lemma:normal_moment_bound}]
    Using the basic inequality $s^2e^s \leq e^{2s}$ for any $s \geq 0$, we have
    \begin{align*}
        \E[X^2 e^{\phi |X|}] &\leq \phi^{-2} \E[e^{2\phi |X|}]\\
        &\leq \phi^{-2} \E\LRm{e^{1 + \phi^2 X^2}}\\
        &= \frac{e}{\phi^2} \frac{1}{\sqrt{2\pi}\sigma}\int_{-\infty}^{+\infty}e^{-\frac{x^2}{2\sigma^2} + \phi^2 x^2} dx\\
        & = \frac{e}{\phi^2} \frac{1}{\sqrt{1 - 2\phi^2 \sigma^2}}.
    \end{align*}
    In addition, we also have
    \begin{align*}
        \E[X^k e^{\phi |X|}] &= \frac{1}{\sqrt{2\pi}\sigma}\int_{-\infty}^{+\infty}x^k e^{-\frac{x^2}{2\sigma^2} + \phi |x|} dx\\
        &=\frac{1}{\sqrt{2\pi}\sigma} \LRs{\int_{-\infty}^{0}x^k e^{-\frac{x^2}{2\sigma^2} - \phi x} dx + \int_{0}^{+\infty}x^k e^{-\frac{x^2}{2\sigma^2} + \phi x} dx}\\
        &\leq \frac{e^{\frac{\phi^2\sigma^2}{2}}}{\sqrt{2\pi}\sigma} \LRs{\int_{-\infty}^{+\infty}x^k e^{-\frac{(x + \phi \sigma^2)^2}{2\sigma^2}} dx + \int_{-\infty}^{+\infty}x^k e^{-\frac{(x - \phi\sigma^2)^2}{2\sigma^2}} dx}.
    \end{align*}
\end{proof}

\subsection{Estimation error bound for mean functions}
The estimation error bounds for mean function vectors in the following proposition contribute to establish the convergence rates of discriminant direction estimator and the excess misclassification risk.
\begin{proposition}\label{pro_mean_bound}
	Denote the estimator of the mean functions by $\hmu_{1}(u) = (\halpha_{11}^{\top} \bB(u),\cdots, \halpha_{1p}^{\top} \bB(u))^{\top}$ and $\hmu_{2}(u) = (\halpha_{21}^{\top} \bB(u),\cdots, \halpha_{2p}^{\top} \bB(u))^{\top}$, then under condition \cond{1}-\cond{4}, for any $\vartheta > 0$ we have
	\begin{equation*}
		\sup_{u\in [0,1]}\infnorm{\hmu_{1}(u)-\bmu_{1}(u)}\lesssim \sqrt{\frac{L_n\log n}{n}}+L_n^{-d},
	\end{equation*}
	and
	\begin{equation*}
		\sup_{u\in [0,1]}\infnorm{\hmu_{2}(u)-\bmu_{2}(u)}\lesssim \sqrt{\frac{L_n\log n}{n}}+L_n^{-d},
	\end{equation*}
	hold with probability at least $1-3pL_nn^{-\vartheta}$ respectively.
\end{proposition}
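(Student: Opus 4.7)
\textbf{Proof plan for Proposition \ref{pro_mean_bound}.} The plan is to employ a standard bias--variance decomposition for B-spline regression. Fix $l\in\{1,2\}$ (the two cases are symmetric). Let $\alpha_{lj}^*$ denote the best $L^2$ spline approximation coefficient of $\mu_{lj}$ (with respect to the density of $U$) and decompose
$$\hat\mu_{lj}(u) - \mu_{lj}(u) = B(u)^\top(\hat\alpha_{lj} - \alpha_{lj}^*) + \bigl[B(u)^\top\alpha_{lj}^* - \mu_{lj}(u)\bigr].$$
By (C3) and classical B-spline approximation theory, the deterministic second term is uniformly $O(L_n^{-d})$ for every $j$. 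It remains to control the stochastic first term. From the normal equations for $\hat\alpha_{lj}$,
$$\hat\alpha_{lj} - \alpha_{lj}^* \;=\; M_l^{-1}\Bigl(\tfrac{1}{n}\sum_{i\in\mathcal I_l}B_i\epsilon_{ij} \;+\; \tfrac{1}{n}\sum_{i\in\mathcal I_l}B_i r_{ij}\Bigr),$$
where $M_l=n^{-1}\sum_{i\in\mathcal I_l}B_iB_i^\top$, $\epsilon_{ij}:=X_{ij}-\mu_{lj}(U_i)$ is mean-zero Gaussian given $U_i$ with $\mathrm{Var}\leq\lambda_1$ by (C1), and $r_{ij}:=\mu_{lj}(U_i)-B_i^\top\alpha_{lj}^*$ is a residual satisfying $|r_{ij}|\lesssim L_n^{-d}$ together with the orthogonality $\E[B(U)r(U)]=\boldsymbol 0$ coming from the $L^2$-optimality of $\alpha_{lj}^*$.

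Three concentration steps then suffice. First, entry-wise Bernstein (Lemma \ref{bern_ineq}) applied to the $O(L_n)$ nonzero entries of $M_l-\E[BB^\top]$ gives $|M_l-\E[BB^\top]|_\infty\lesssim\sqrt{\log n/n}$ with probability at least $1-L_n n^{-\vartheta}$; combined with $\lambda_{\min}(\E[BB^\top])\asymp 1$ from Appendix \ref{appen:Bspline} and the exponential decay of inverses of banded positive-definite matrices, this yields $\|M_l^{-1}\|_\infty=O(1)$ on that event. Second, for each $(k,j)$, $\tfrac{1}{n}\sum_i B_{ki}\epsilon_{ij}$ is a sum of independent mean-zero variables whose exponential moments are controlled by Lemma \ref{lemma:normal_moment_bound} (using $\sigma_{jj}(U_i)\leq\lambda_1$ and $\E[B_k^2]=O(1)$); Lemma \ref{bern_ineq} gives a per-coordinate bound $\lesssim\sqrt{\log n/n}$ with probability $\geq 1-n^{-\vartheta}$, and a union bound over $(k,j)\in[L_n]\times[p]$ yields $\|n^{-1}\sum_i B_i\epsilon_{ij}\|_\infty\lesssim\sqrt{\log n/n}$ uniformly in $j$ with probability $\geq 1-pL_n n^{-\vartheta}$. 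Third, exploiting $\E[B(U)r(U)]=\boldsymbol 0$ and $|r|\lesssim L_n^{-d}$, an analogous Bernstein argument yields $\|n^{-1}\sum_i B_i r_{ij}\|_\infty\lesssim L_n^{-d}\sqrt{\log n/n}$ uniformly with probability $\geq 1-pL_n n^{-\vartheta}$.

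Assembling on the intersection event (probability $\geq 1-3pL_n n^{-\vartheta}$), we obtain $\|\hat\alpha_{lj}-\alpha_{lj}^*\|_\infty\lesssim\sqrt{\log n/n}$ uniformly in $j$. Using the normalization $\sum_k B_k(u)=\sqrt{L_n}$ with $B_k\geq 0$, hence $\|B(u)\|_1=\sqrt{L_n}$, we get
$$\sup_u\bigl|B(u)^\top(\hat\alpha_{lj}-\alpha_{lj}^*)\bigr| \;\leq\; \sup_u\|B(u)\|_1\cdot\|\hat\alpha_{lj}-\alpha_{lj}^*\|_\infty \;\lesssim\; \sqrt{L_n\log n/n},$$
uniformly in $j$. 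Combining with the $O(L_n^{-d})$ approximation bias gives the announced bound for $\hat\mu_1$, and the argument for $\hat\mu_2$ is identical.

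\textbf{Main obstacle.} The delicate ingredient is the $\ell_\infty$ bound $\|M_l^{-1}\|_\infty=O(1)$: the operator-norm concentration $\|M_l^{-1}\|_{\mathrm{op}}=O(1)$ alone would only produce an $\ell_2$ bound $\|\hat\alpha_{lj}-\alpha_{lj}^*\|_2\lesssim\sqrt{L_n/n}$, yielding after Cauchy--Schwarz with $\|B(u)\|_2\lesssim\sqrt{L_n}$ a suboptimal $L_n\sqrt{\log n/n}$ rate. The stronger $\ell_\infty$ operator bound relies essentially on the banded structure of $M_l$ inherited from the local support of the B-spline basis, and pairing this with the Bernstein-type concentration for Gaussian-weighted sums (via Lemma \ref{lemma:normal_moment_bound}) is the technical core of the proof.
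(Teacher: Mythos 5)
Your proof is correct but uses a genuinely different mechanism from the paper's. You both start from the same bias--variance decomposition with the best $L^2$ spline coefficient, and you both recognize the key obstacle: a naive $\ell_2$ chain ($\twonorm{\hat\alpha_{lj}-\alpha^*_{lj}}\lesssim\sqrt{L_n\log n/n}$ followed by Cauchy--Schwarz with $\twonorm{\bB(u)}\leq\sqrt{L_n}$) loses a full $\sqrt{L_n}$. Where you diverge is in how that loss is avoided. You control the \emph{coefficient} vector in $\ell_\infty$: entry-wise Bernstein on $n^{-1}\sum_i\bB_i[\cdots]$ together with $\|M_l^{-1}\|_{\infty\to\infty}=O(1)$ (via the banded structure of the B-spline Gram matrix and exponential decay of inverses of well-conditioned banded matrices, i.e.\ a Demko--Moss--Smith type result) yields $\|\hat\alpha_{lj}-\alpha^*_{lj}\|_\infty\lesssim\sqrt{\log n/n}$; then $\onenorm{\bB(u)}=\sqrt{L_n}$ makes the $\sup_u$ bound immediate, with \emph{no chaining} over $u$. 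The paper instead fixes $u$, absorbs the Gram inverse into a fixed coefficient vector $\bEta(u)=(\E[\bB\bB^\top])^{-1}\bB^*(u)$ of bounded $\ell_2$ norm, applies scalar Bernstein to $\bEta(u)^\top\bB_i^*[\cdots]$ using $\E[(\bEta(u)^\top\bB^*)^2]\lesssim L_n^{-1}$, and then performs a chaining argument over a grid of $u$-values to get the uniform-in-$u$ bound. Your route buys elegance (uniformity over $u$ is automatic and deterministic once the $\ell_\infty$ coefficient bound is in hand) at the cost of importing a nontrivial external fact about inverses of banded matrices that the paper never proves or cites; the paper's route is self-contained modulo its own Lemma~\ref{lemma_Bnorm_bound}, but needs the extra chaining step. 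Two minor points of care in your sketch: the per-coordinate Bernstein bound $\lesssim\sqrt{\log n/n}$ only comes out if you apply Lemma~\ref{bern_ineq} to the \emph{normalized} quantity $B_k^*(U_i)\epsilon_{ij}$ (with $|B_k^*|\le 1$, so $\phi$ can stay $O(1)$) and then multiply by $\sqrt{L_n}$ — applying it directly to $B_k(U_i)\epsilon_{ij}$ forces $\phi\asymp L_n^{-1/2}$ and degrades the rate; and the entry-wise concentration $|M_l-\E[\bB\bB^\top]|_\infty\lesssim\sqrt{\log n/n}$ you mention is not actually needed for the Demko--Moss--Smith argument, which only requires bandwidth and spectral bounds (the operator-norm convergence in Lemma~\ref{lemma_Bnorm_bound} already supplies those).
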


\begin{lemma}\label{lemma_Bnorm_bound}
	For any $\vartheta>0$, there exists some positive constant $C$ such that
	\begin{equation*}
		\Prob\left(\LRtwonorm{\frac{1}{n}\sum_{i = 1}^n\bB_i\bB_i^{\top} - \E[\bB\bB^{\top}]}\geq CL_n\sqrt{\frac{\log n}{n}}\right)\leq n^{-\vartheta L_n}.
	\end{equation*}
\end{lemma}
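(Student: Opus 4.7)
The plan is to combine the variational characterization of the spectral norm via an $\epsilon$-net with the Bernstein-type inequality in Lemma \ref{bern_ineq}. Let $\bD_n = \frac{1}{n}\sum_{i=1}^n \bB_i \bB_i^{\top}$ and $\bD = \E[\bB\bB^{\top}]$; both are symmetric. Standard covering theory gives $\|\bD_n - \bD\|_2 \leq 2\max_{\bv \in \mathcal{N}} |\bv^{\top} (\bD_n - \bD)\bv|$ for any $1/2$-net $\mathcal{N}$ of the unit sphere $S^{L_n - 1}$, and one can choose $\mathcal{N}$ with $|\mathcal{N}| \leq 5^{L_n}$.

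For a fixed $\bv \in S^{L_n - 1}$, I would apply Lemma \ref{bern_ineq} to the i.i.d.\ mean-zero random variables $\xi_i = (\bv^{\top} \bB_i)^2 - \E[(\bv^{\top} \bB)^2]$. Since each scaled B-spline $B_k$ is nonnegative and bounded by $\sqrt{L_n}$, and since at any fixed point only a bounded number (depending on the spline order) of $B_k(U_i)$ are non-zero, one gets $\|\bB_i\|_2^2 \leq C L_n$; hence the deterministic bound $|\xi_i| \leq C L_n$. The variance satisfies $\E[\xi_i^2] \leq \E[(\bv^{\top} \bB)^4] \leq C L_n\, \E[(\bv^{\top} \bB)^2] \leq C L_n$ using the eigenvalue bounds on $\E[\bB\bB^{\top}]$ from Appendix \ref{appen:Bspline}. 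Choosing the Bernstein parameter $\phi$ of order $1/L_n$ makes $e^{\phi C L_n}$ an absolute constant, so that $s_n^2 := \sum_{i=1}^n \E[\xi_i^2 e^{\phi|\xi_i|}] \lesssim n L_n$.

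With $x = C_1 \sqrt{L_n \log n}$, which falls below $s_n^2$ under the mild regime $L_n \log n = o(n)$, Lemma \ref{bern_ineq} gives $\Prob\LRs{\sum_{i=1}^n \xi_i \geq C_2\, L_n \sqrt{n \log n}} \leq \exp(-C_1^2 L_n \log n) = n^{-C_1^2 L_n}$. Applying the same argument to $-\xi_i$, dividing by $n$, and union-bounding over $\mathcal{N}$ yields $\Prob\LRs{\|\bD_n - \bD\|_2 \geq 2 C_2\, L_n \sqrt{\log n / n}} \leq 2 \cdot 5^{L_n}\, n^{-C_1^2 L_n}$, which is at most $n^{-\vartheta L_n}$ provided $C_1$ (and hence the final constant $C$) is chosen large enough relative to $\vartheta$.

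The main obstacle is calibrating $\phi$ against the deterministic bound $C L_n$: one needs $\phi \cdot C L_n$ to remain of order one so that the factor $e^{\phi |\xi_i|}$ does not inflate the variance proxy, yet still have $\phi$ large enough that the resulting $s_n \asymp \sqrt{n L_n}$ is compatible with a deviation threshold of order $L_n \sqrt{\log n/n}$. The scaling $\phi \asymp 1/L_n$ is dictated naturally by the magnitude $\sqrt{L_n}$ of the scaled B-spline basis, and everything else is a routine combination of the bounded-spline facts in Appendix \ref{appen:Bspline} with Lemma \ref{bern_ineq}.
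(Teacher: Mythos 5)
Your net-and-Bernstein strategy is the right skeleton (and essentially what the paper does), but you have dropped the factor $C_{\phi}=\phi+\phi^{-1}$ from Lemma~\ref{bern_ineq}, and this is fatal to the claimed rate. Working with the scaled basis $\bB = \sqrt{L_n}\bB^*$, you are forced to take $\phi\asymp 1/L_n$ so that $e^{\phi|\xi_i|}$ stays bounded, but then $C_{\phi}\asymp L_n$. With $s_n\lesssim\sqrt{nL_n}$ this gives a deviation scale $C_{\phi}s_n x\asymp L_n^{3/2}\sqrt{n}\,x$, not $L_n\sqrt{n}\,x$ as you implicitly use. Chasing the target $\frac{1}{n}\sum_i\xi_i\lesssim L_n\sqrt{\log n/n}$ then forces $x\asymp\sqrt{\log n/L_n}$, giving a tail bound $\exp(-x^2)=n^{-O(1/L_n)}$, which is swamped by the $17^{L_n}$ (or even $5^{L_n}$) covering-number factor. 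There is no choice of $\phi$ that rescues this: for any $\phi\lesssim 1/L_n$ you lose an $L_n$ through $C_{\phi}$, and for $\phi\gtrsim 1/L_n$ you lose it through $e^{\phi|\xi_i|}$ blowing up $s_n^2$.

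The paper avoids this by applying the concentration to the \emph{normalized} quadratic form $(\bnu^{\top}\bB_i^{*})^2$ rather than $(\bnu^{\top}\bB_i)^2$. Since $\twonorm{\bB_i^{*}}^2 = \sum_k (B_k^{*}(U_i))^2 \leq \sum_k B_k^{*}(U_i) = 1$, the summands are bounded by an absolute constant, so $\phi$ and hence $C_{\phi}$ can be taken $O(1)$, while $\E[(\bnu^{\top}\bB^{*})^4]\lesssim \E[(\bnu^{\top}\bB^{*})^2]\lesssim L_n^{-1}$ gives $s_n\lesssim\sqrt{n/L_n}$. Then $x\asymp\sqrt{L_n\log n}$ yields a tail $n^{-\Theta(L_n)}$ that beats the net, and multiplying back by $\bB=\sqrt{L_n}\bB^{*}$ produces precisely the $L_n\sqrt{\log n/n}$ rate. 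You should rewrite your variance-and-range calculations in terms of $\bB^{*}$ before invoking Lemma~\ref{bern_ineq}. As a secondary point, a $1/2$-net does not control a symmetric quadratic form via the factor-$2$ comparison you use; one needs $\epsilon<1/2$ (the paper takes $\epsilon=1/8$, with $K\leq 17^{L_n}$), though this only affects constants.
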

The proof of Lemma \ref{lemma_Bnorm_bound} is deferred to Appendix \ref{proof::lemma_Bnorm_bound}.
\begin{remark}
    It is worthwhile noting that Lemma \ref{lemma_Bnorm_bound} is more tight than the results Lemma A.7 in \cite{fan2014nonparametric}, where the authors established the following bound
\begin{align*}
    \LRtwonorm{\frac{1}{n}\sum_{i = 1}^n\bB_i^{*}(\bB_i^{*})^{\top} - \E[\bB^{*}(\bB^{*})^{\top}]} = \Op\LRs{\sqrt{\frac{L_n \log n}{n}}}.
\end{align*}
By the fact that $\bB_i = \sqrt{L_n}\bB_i^{*}$ and $\bB = \sqrt{L_n}\bB^{*}$, using the relation above, we can only obtain the following worse bound
\begin{align*}
    \LRtwonorm{\frac{1}{n}\sum_{i = 1}^n\bB_i\bB_i^{\top} - \E[\bB\bB^{\top}]} = \Op\LRs{L_n^{3/2}\sqrt{\frac{ \log n}{n}}}.
\end{align*}
\end{remark}

\begin{proof}[Proof of Proposition \ref{pro_mean_bound}]
	First we introduce the population form of the approximation coefficient,
	\begin{equation}\label{pop_mean}
		\talpha_{1j} = \arg\min_{\balpha \in \mathbb{R}^{L_n}}\E\left(X_j-\balpha^{\top}\bB(U)\big|Y = 1\right)^2,
	\end{equation}
	and denote $\tmu_1(u) = (\talpha_{11}^{\top}\bB,\cdots,\talpha_{1p}^{\top}\bB)^{\top}$. According to the splines’ approximation property \citep{de1978practical,huang2003local}, we are guaranteed that
	\begin{equation*}
		\sup_{u\in [0,1]}\infnorm{\tmu_1(u) - \bmu_1(u)} \lesssim L_n^{-d}.
	\end{equation*}
	In addition, we have
	\begin{align}
	    \halpha_{1j}- \talpha_{1j}&= \left(\frac{1}{n}\sum_{i\in \mathcal{I}_1}\bB_i\bB_i^{\top}\right)^{-1}\left(\frac{1}{n}\sum_{i\in \mathcal{I}_1}\bB_i(X_{ij} - \bB_i^{\top}\talpha_{1j})\right)\nonumber\\
			&=\left(\frac{1}{n}\sum_{i\in \mathcal{I}_1}\bB_i\bB_i^{\top}\right)^{-1}\left(\frac{1}{n}\sum_{i\in \mathcal{I}_1}\bB_i[X_{ij}  - \bB_i^{\top}\talpha_{1j}]\right).
			\label{eq:alpha_decompose}
	\end{align}
	For any positive definite matrix $\bA\in \mathbb{R}^{p\times p}$ and $\twonorm{\Delta \bA} = o(1)$,
	\begin{equation}\label{inverse_fact}
		\begin{aligned}
		        \twonorm{(\bA+\Delta \bA)^{-1}-\bA^{-1}}&\leq \twonorm{\bA^{-1}}\twonorm{(\bI + \bA^{-1}\Delta \bA)^{-1}-\bI}\\
		        &\leq \twonorm{\bA^{-1}}\LRs{\twonorm{\bA^{-1}\Delta \bA} + o(1)}\\
		        &\leq 2\twonorm{\bA^{-1}}^2\twonorm{\Delta \bA}.
		\end{aligned}
	\end{equation}
	Now let $\bA = \E[\bB \bB^{\top}]$ and $\Delta \bA = \sum_{i=1}^n\bB_i\bB_i^{\top}/n-\E[\bB \bB^{\top}]$. Since Lemma \ref{lemma_Bnorm_bound} claims that $\twonorm{\Delta \bA} = o(1)$ almost surely, \eqref{inverse_fact} results in
	\begin{equation*}
		\begin{aligned}
			&\LRtwonorm{\left(\frac{1}{n}\sum_{i\in \mathcal{I}_1}\bB_i\bB_i^{\top}\right)^{-1} - \E[\bB\bB^{\top}]^{-1}}\\
			&\qquad \leq 2\LRtwonorm{ \E[\bB \bB^{\top}]^{-1}}^2\LRtwonorm{\frac{1}{n}\sum_{i=1}^n\bB_i\bB_i^{\top}-\E[\bB \bB^{\top}]}.
		\end{aligned}
	\end{equation*}
	In conjunction with Lemma \ref{lemma_Bnorm_bound} and $\lambda_{\min}(\E[\bB\bB^{\top}])\geq M_1$ we have
	\begin{equation}\label{ell_1}
		\LRtwonorm{\left(\frac{1}{n}\sum_{i\in \mathcal{I}_1}\bB_i\bB_i^{\top}\right)^{-1}}\leq \frac{1}{M_1}+ C\sqrt{\frac{L_n\log n}{n}}
	\end{equation}
	holds with probability at least $1-L_nn^{-\vartheta}$. Substituting \eqref{ell_1} into \eqref{eq:alpha_decompose} yields
	\begin{align}\label{eq:alpha_bound}
	    \twonorm{\halpha_{1j} - \talpha_{1j}} \lesssim \LRtwonorm{\frac{1}{n}\sum_{i\in \mathcal{I}_1}\bB_i[X_{ij} - \mu_{1j}(U_i) + \mu_{1j}(U_i) - \bB_i^{\top}\talpha_{1j}]}.
	\end{align}
	On the other hand, we note that given $Y_i = 1$ and $U_i$, $X_{ij}\sim \mathcal{N}(\mu_{1j}(U_i), \bSigma_{jj}(U_i))$. By choosing any $\eta > 0$ and using Lemma \ref{lemma:normal_moment_bound}, we have
	\begin{equation*}
		\begin{aligned}
			&\E\left[(B_k^{*}(U_i))^2\left(X_{ij} -\mu_{1j}(U_i) \right)^2\exp\left(\eta|B_k^{*}(U_i)||X_{ij} -\mu_{1j}(U_i)|\right)\right] \\
			\leq &\E\left\{(B_k^{*}(U_i))^2\E\left[\LRs{X_{ij} -\mu_{1j}(U_i)}^2\exp\left(\eta|X_{ij} -\mu_{1j}(U_i)|\right)\big| U_i\right]\right\}\\
			\leq &2\E\LRm{(B_k^{*}(U_i))^2 e^{\frac{\eta^2 \bSigma_{jj}(U_i)}{2}} \LRs{\eta^2 \bSigma_{jj}^2(U_i) + \bSigma_{jj}(U_i)}}\\
			\lesssim & \E[(B_k^{*}(U_i))^2] \leq L_n^{-1},
		\end{aligned}
	\end{equation*}
	where the first inequality follows from $0\leq B_k^{*}(U_i)\leq 1$ and $\Sigma_{jj}(U_i)$ is the conditional variance of $X_{ij}$ given $U_i$. Applying Lemma \ref{bern_ineq} and uniform bound, we can guarantee
	\begin{equation*}
		\Prob\left(\max_{1\leq k\leq L_n}\left|\frac{1}{n}\sum_{i\in \mathcal{I}_1}(X_{ij}-\mu_{1j}(U_i))B_k(U_i)\right|\lesssim \sqrt{\frac{\log n}{n}}\right) \leq L_n n^{-\vartheta}.
	\end{equation*}
	It yields that
	\begin{equation}\label{ell_2}
		\LRtwonorm{\frac{1}{n}\sum_{i\in \mathcal{I}_1}\bB_i(X_{ij} - \mu_{1j}(U_i))} \lesssim \sqrt{\frac{L_n\log n}{n}},
	\end{equation}
	holds with probability at least $1-2L_nn^{-\vartheta}$. In addition, we note that
	\begin{equation*}
		\begin{aligned}
			&\LRtwonorm{\frac{1}{n}\sum_{i\in \mathcal{I}_1}\bB_i[\mu_{1j}(U_i) - \bB_i^{\top}\talpha_{1j}]}\leq \LRtwonorm{\E[\bB[\mu_{1j}(U) - \bB^{\top}\talpha_{1j}]]}\\
			&\qquad +\LRtwonorm{\frac{1}{n}\sum_{i\in \mathcal{I}_1}\bB_i[\mu_{1j}(U_i) - \bB_i^{\top}\talpha_{1j}] - \E[\bB_i[\mu_{1j}(U_i) - \bB_i^{\top}\talpha_{1j}]]}.
		\end{aligned}
	\end{equation*}
	Using Lemma \ref{bern_ineq} again, we may show that
	\begin{equation*}
		\LRtwonorm{\frac{1}{n}\sum_{i\in \mathcal{I}_1}\bB_i[\mu_{1j}(U_i) - \bB_i^{\top}\talpha_{1j}] - \E[\bB_i[\mu_{1j}(U_i) - \bB_i^{\top}\talpha_{1j}]]} \lesssim L_n^{-d}\sqrt{\frac{L_n\log n}{n}},
	\end{equation*}
	holds with probability at least $1 - L_n n^{-\vartheta}$. Combining (\ref{ell_2}) and the following fact
	\begin{align*}
	    \twonorm{\E[\bB[\mu_{1j}(U) - \bB^{\top}\talpha_{1j}]]} \lesssim L_n^{-d} \twonorm{\E[\bB]} \lesssim L_n^{-d},
	\end{align*}
	we have with probability at least $1 - 3L_n n^{-\vartheta}$
	\begin{align}\label{eq:alpha_vec_bound}
	    \LRtwonorm{\frac{1}{n}\sum_{i\in \mathcal{I}_1}\bB_i[X_{ij} - \mu_{1j}(U_i) + \mu_{1j}(U_i) - \bB_i^{\top}\talpha_{1j}]} \lesssim \sqrt{\frac{L_n \log n}{n}}.
	\end{align}
	Together with \eqref{eq:alpha_bound}, we have proved the first assertion. 
	
    For each fixed $u \in [0,1]$, we denote $\bEta(u) = \LRs{\E[\bB\bB^{\top}]}^{-1} \bB^*(u)$.
	It holds that
	\begin{equation*}
		\begin{aligned}
			\LRinfnorm{\hmu_{1}(u)-\tmu_1(u)} &= \max_{j}|\bB(u)^{\top}(\hat{\balpha}_{1j}-\talpha_{1j})|\\
			&= \max_{j}\Bigg|\bB(u)^{\top}\LRs{\frac{1}{n}\sum_{i\in \mathcal{I}_1}\bB_i\bB_i^{\top}}^{-1}
			\frac{1}{n}\sum_{i\in \mathcal{I}_1}\bB_i[X_{ij} - \bB_i^{\top}\talpha_{1j}]\Bigg|\\
			&\lesssim L_n\max_{j}\LRabs{\frac{1}{n}\sum_{i\in \mathcal{I}_1}\bEta(u)^{\top}\bB_i^*[X_{ij} - \bB_i^{\top}\talpha_{1j}]},
		\end{aligned}
	\end{equation*}
	where the last inequality comes from Lemma \ref{lemma_Bnorm_bound} and \eqref{eq:alpha_vec_bound}. It follows from \eqref{int_bound} that
	\begin{align*}
	    \E\LRm{\LRs{\bEta(u)^{\top} \bB_i^*}^2} = \E\LRm{\LRs{\sum_{k=1}^{L_n} \eta_k(u) B^*(U)}^2} \lesssim L_n^{-1}\sum_{k=1}^{L_n} \eta_k^2(u) \lesssim L_n^{-1}.
	\end{align*}
	Further we have
	\begin{align*}
	    \E\LRm{\LRs{\bEta(u)^{\top} \bB_i^*}^2 (X_{ij} - \mu_{1j}(U_i))^2e^{\eta |\bEta(u)^{\top} \bB_i^*|X_{ij} - \mu_{1j}(U_i)||}} \lesssim L_n^{-1}.
	\end{align*}
	Then applying Lemma \ref{bern_ineq}, we can show that
	\begin{align*}
		&\max_{j}\LRabs{\frac{1}{n}\sum_{i\in \mathcal{I}_1}\bEta(u)^{\top} \bB_i^*[X_{ij} - \mu_{1j}(U_i) + \mu_{1j}(U_i) - \bB_i^{\top}\talpha_{1j}]}\lesssim \sqrt{\frac{\log n}{n L_n}}
	\end{align*}
	holds with probability at least $1 - p n^{-\vartheta}$. With the same probability, for any fixed $u$, it holds that
	\begin{align}\label{eq:fix_hmu_bound}
	    \LRinfnorm{\hmu_{1}(u)-\tmu_1(u)} \lesssim \sqrt{\frac{L_n \log n}{n}}.
	\end{align}
	Next we use chaining technique to prove the uniform result. Notice that, we may divide the interval $[0,1]$ to $n^{M}$ sub-intervals with end points $0=u_0\leq u_1 \leq \cdots u_{n^M} = 1$. Then for any $u \in [0,1]$, there exists some $0\leq \ell\leq n^M$ such that $|u - u_{\ell}|\leq n^{-M}$. Thus we have
	\begin{align*}
	    \LRinfnorm{\hmu_{1}(u)-\tmu_1(u)} &\leq \LRinfnorm{\hmu_{1}(u_{\ell})-\tmu_1(u_{\ell})}\\
	    &+ \LRl{\LRinfnorm{\hmu_{1}(u)-\hmu_1(u_{\ell})}+\LRinfnorm{\tmu_{1}(u)-\tmu_1(u_{\ell})}}\\
	    &\lesssim \LRinfnorm{\hmu_{1}(u_{\ell})-\tmu_1(u_{\ell})} + n^{-M},
	\end{align*}
	where we used both $\hmu_{1}$ and $\tmu$ are Lipschitz continuous. It follows that
	\begin{align*}
	    \sup_{u\in [0,1]}\LRinfnorm{\hmu_{1}(u)-\tmu_1(u)} \lesssim \max_{0\leq \ell \leq n^{M}}\LRinfnorm{\hmu_{1}(u_{\ell})-\tmu_1(u_{\ell})} + n^{-M}.
	\end{align*}
	By choosing $M = 1$ and large $\vartheta$, the second assertion follows from \eqref{eq:fix_hmu_bound} immediately.
\end{proof}

\section{Proofs of main results}\label{sec::main_proof}
\subsection{Proof of Proposition \ref{pro_mis_rate}}
    The proof is adapted from \cite{tony2019high}, here we provide it for completeness.
	\begin{lemma}[Lemma 7, \cite{tony2019high}]\label{lemma 4}
		For two vectors $\btheta_n$ and $\hat{\btheta}_n$, if $\|\btheta_n - \hat{\btheta}_n\|_2=o(1)$ as $n\to \infty$, and $\|\btheta\|_2\geq c$ for some constant $c$, then when $n\to \infty$,
		$$
		\|\btheta_n\|_2\|\hat{\btheta}_n\|_2-\btheta_n^{\top}\hat{\btheta}_n \asymp \|\btheta_n - \hat{\btheta}_n\|_2^2.
		$$
	\end{lemma}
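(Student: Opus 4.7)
The plan is to reduce the entire lemma to a single algebraic identity and then analyze its two sides separately. Writing $a = \btheta_n$ and $b = \hat{\btheta}_n$ for brevity, I would expand the two squares
$\|a - b\|_2^2 = \|a\|_2^2 + \|b\|_2^2 - 2 a^\top b$ and $(\|a\|_2 - \|b\|_2)^2 = \|a\|_2^2 + \|b\|_2^2 - 2\|a\|_2 \|b\|_2$, and subtract them to obtain the workhorse identity
$$\|a\|_2 \|b\|_2 - a^\top b = \tfrac{1}{2}\bigl[\|a - b\|_2^2 - (\|a\|_2 - \|b\|_2)^2\bigr].$$
Equivalently, this identity may be read off from the AM--GM step $\|a\|_2^2 + \|b\|_2^2 \geq 2\|a\|_2\|b\|_2$, with the defect term being exactly $(\|a\|_2 - \|b\|_2)^2$.

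The upper bound $\|a\|_2 \|b\|_2 - a^\top b \leq \tfrac{1}{2}\|a - b\|_2^2$ is immediate from $(\|a\|_2 - \|b\|_2)^2 \geq 0$ and requires no hypothesis at all. For the matching lower bound I would substitute $b = a + \epsilon$ with $\|\epsilon\|_2 = o(1)$ and Taylor-expand $\|b\|_2 = \|a\|_2\sqrt{1 + 2 a^\top\epsilon/\|a\|_2^2 + \|\epsilon\|_2^2/\|a\|_2^2}$; the hypothesis $\|a\|_2 \geq c > 0$ is precisely what makes the denominators safe and the expansion valid. This yields $\|a\|_2 - \|b\|_2 = -a^\top\epsilon/\|a\|_2 + O(\|\epsilon\|_2^2/\|a\|_2)$, and plugging into the identity gives
$$\|a\|_2 \|b\|_2 - a^\top b = \tfrac{1}{2}\|\epsilon\|_2^2 - \tfrac{(a^\top\epsilon)^2}{2\|a\|_2^2} + O(\|\epsilon\|_2^3),$$
which equals $\tfrac{1}{2}$ times the squared norm of the projection of $\epsilon$ onto the orthogonal complement of $a$, plus a lower-order correction.

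The main obstacle is the lower bound, not the upper one: the displayed leading term can degenerate when $\epsilon$ is nearly parallel to $a$, because then $(a^\top\epsilon)^2/\|a\|_2^2 \approx \|\epsilon\|_2^2$ and the two leading terms cancel. The two-sided equivalence $\asymp$ therefore implicitly relies on the residual $\hat{\btheta}_n - \btheta_n$ retaining a nonnegligible component orthogonal to $\btheta_n$; in the intended usage inside Proposition \ref{pro_mis_rate} this is ensured because the discriminant direction $\btheta^{*}(u)$ is coupled to $\bmu_1(u)-\bmu_2(u)$ through the nondegenerate covariance $\bSigma(u)$ (condition \cond{1}) and the lower bound $\Delta(u) \geq c$, which together prevent the estimation error from concentrating along the one-dimensional subspace spanned by $\btheta^{*}(u)$.
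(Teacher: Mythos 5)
The paper does not prove this lemma; it is imported as a cited result (Lemma~7 of \cite{tony2019high}), so there is no paper-side argument to compare against and your proof has to stand on its own. Your identity $\|a\|_2\|b\|_2 - a^\top b = \tfrac{1}{2}\bigl[\|a-b\|_2^2 - (\|a\|_2 - \|b\|_2)^2\bigr]$ is correct and gives the one-sided bound $\|a\|_2\|b\|_2 - a^\top b \leq \tfrac{1}{2}\|a-b\|_2^2$ in one line, without any expansion and without even invoking the hypotheses; that part is complete and cleaner than a Taylor argument.

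Your suspicion about the lower bound is right, and you should have pressed it harder: the two-sided $\asymp$ is false as stated. Take $\btheta_n = e_1$ and $\hat{\btheta}_n = (1+1/n)e_1$; both hypotheses hold, yet the left side is identically zero while $\|\btheta_n-\hat{\btheta}_n\|_2^2 = 1/n^2 > 0$. Your identity explains exactly why: the defect $(\|a\|_2-\|b\|_2)^2$ absorbs all of $\|a-b\|_2^2$ when the increment is radial. The genuine error in your proposal is the final paragraph, where you try to rescue the lower bound by invoking \cond{1} and $\Delta(u)\geq c$. Those conditions constrain $\bSigma(u)$ and the Mahalanobis margin, not the direction of the estimation error $\hbtheta(u)-\btheta^{*}(u)$; nothing in the model prevents that error from lying (nearly) parallel to $\btheta^{*}(u)$, so this line of reasoning cannot close the gap. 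The actual resolution is elsewhere: in \eqref{eq:dist_expan_1} of the proof of Proposition~\ref{pro_mis_rate}, the only direction of the lemma that is ever used is the upper bound $\lesssim$. The $\asymp$ in the statement is an overclaim; the correct, usable assertion is the one-sided $\lesssim$, which your identity already establishes.
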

	\begin{proof}
		Let $\bdelta(u) = \bmu_1(u) - \bmu_{2}(u)$ and $\hDelta(u) = c^{*}(u)(\hbtheta(u)^{\top}\bSigma(u)\hbtheta(u))^{1/2}$ for $u\in [0,1]$. Recall the relation $\bbeta^{*}(u) = \bSigma^{-1}(u)\bdelta(u) = c^{*}(u) \btheta^{*}(u)$ for $c^{*}(u)\in (0,1)$. To simplify notations, we will use $f$ to denote $f(u)$ for any function of $u$. Following the proof technique in \cite{tony2019high}, we define a intermediate quantity
		$$
		\widetilde{R}(u)=\frac{1}{2}\Phi\left(-\frac{c^{*}\bdelta^{\top}\hbtheta/2}{\widehat{\Delta}}\right)+\frac{1}{2}\bar{\Phi}\left(\frac{c^{*}\bdelta^{\top}\hbtheta/2}{\widehat{\Delta}}\right).
		$$
		Note that, $c^{*}\bdelta^{\top}\hbtheta$=$\bdelta^{\top}\bSigma^{-1/2}\bSigma^{1/2}(c^{*}\hbtheta)$, then using Lemma \ref{lemma 4}
		\begin{align}
		    \left|\Delta-\frac{c^{*}\bdelta^{\top}\hbtheta}{\widehat{\Delta}}\right|&=\left|\|\bdelta^{\top}\bSigma^{-1/2}\|_2-\frac{\bdelta^{\top}\bSigma^{-1/2}\bSigma^{1/2}(c^{*}\hbtheta)}{\left\|\bSigma^{1/2}(c^{*}\hbtheta)\right\|_2}\right|\nonumber\\
			&\lesssim \frac{\left\|\bSigma^{-1/2}\bdelta-\bSigma^{1/2}(c^{*}\hbtheta)\right\|_2^2}{\Delta}.\label{eq:dist_expan_1}
		\end{align}
		Using the fact that $c^{*}\btheta^{*} = \bbeta^* =\bSigma^{-1}\bdelta$ and $\|\bSigma\|_2$ is bounded,
		\begin{align}
		    \left\|\bSigma^{-1/2}\bdelta-\bSigma^{1/2}(c^{*}\hbtheta)\right\|_2^2&\leq \left\|\bSigma^{-1/2}\bdelta-\bSigma^{1/2}(c^{*}\hbtheta)\right\|_2^2\nonumber\\
			&=\left\|\bSigma^{-1/2}\bdelta-\bSigma^{1/2}(c^{*}\hbtheta-c^{*}\btheta^{*})-\bSigma^{1/2}c^{*}\btheta^{*}\right\|_2^2\nonumber\\
			&\leq (c^{*})^2\|\bSigma^{1/2}\|_2^2\|\hbtheta-\btheta^{*}\|_2^2.
			\label{eq:dist_expan_2}
		\end{align}
		Then take Taylor expansion to the two terrms of $\widetilde{R}$ around $-\Delta/2$ and $\Delta/2$ respectively,  we have
		\begin{align*}
			\widetilde{R}-R&=\frac{1}{2}\left(-\frac{c^{*}\bdelta^{\top}\hbtheta/2}{\widehat{\Delta}}+\frac{\Delta}{2}\right)\Phi^{'}\left(-\frac{\Delta}{2}\right)+\frac{1}{2}\left(\frac{\Delta}{2}-\frac{c^{*}\bdelta^{\top}\hbtheta/2}{\widehat{\Delta}}\right)\Phi^{'}\left(\frac{\Delta}{2}\right)\\
			&\qquad+\frac{1}{4}\left(\left(\Phi^{''}(b_{1n})+\Phi^{''}(b_{2n})\right)\left(\frac{c^{*}\bdelta^{\top}\hbtheta/2}{\widehat{\Delta}}-\frac{\Delta}{2}\right)^2\right)\\
			&=\frac{1}{\sqrt{2\pi}}\left(\Delta-\frac{c^{*}\bdelta^{\top}\hbtheta}{\widehat{\Delta}}\right)\exp\left(-\frac{\Delta^2}{8}\right)\\
			&\qquad+\frac{1}{4}\left(\left(\Phi^{''}(b_{1n})+\Phi^{''}(b_{2n})\right)\left(\frac{c^{*}\bdelta^{\top}\hbtheta/2}{\widehat{\Delta}}-\frac{\Delta}{2}\right)^2\right),
		\end{align*}
		where $b_{1n}$ is some point between $-\frac{\Delta}{2}$ and $-\frac{c^{*}\bdelta^{\top}\hbtheta/2}{\widehat{\Delta}}$ and $b_{2n}$ is some point between $\frac{\Delta}{2}$ and $\frac{c^{*}\bdelta^{\top}\hbtheta/2}{\widehat{\Delta}}$. Hence it holds that
		$$
		\Phi^{''}(b_{1n})\asymp\Phi^{''}(b_{2n})\asymp \frac{\Delta}{2}\exp\left(-\frac{\Delta^2}{8}\right).
		$$
		Together with \eqref{eq:dist_expan_1} and \eqref{eq:dist_expan_2}, we can obtain
		\begin{align}\label{decom_R_1}
		    |\widetilde{R}-R|&\lesssim \left|\Delta-\frac{c^{*}\bdelta^{\top}\hbtheta}{\widehat{\Delta}}\right| + \left|\Delta-\frac{c^{*}\bdelta^{\top}\hbtheta}{\widehat{\Delta}}\right|^2\nonumber\\
		    &\lesssim \left\|\bSigma^{-1/2}\bdelta-\bSigma^{1/2}(c^{*}\hbtheta)\right\|_2^2\nonumber\\
		    &\lesssim \twonorm{\hbtheta - \btheta^{*}}^2.
		\end{align}
		Next we bound $|R_n-\widetilde{R}|$, note that
		\begin{align*}
			&\left|\frac{c^{*}\bdelta^{\top}\hbtheta/2}{\widehat{\Delta}}-\frac{\left(\hmu - \bmu_{2}\right)^{\top} (c^{*}\hbtheta)}{\widehat{\Delta}}\right|\nonumber\\
			&\qquad=\frac{\left|(\bdelta/2-\hmu + \bmu_{2})^{\top}\bSigma^{-1}\bdelta+(\bdelta/2-\hmu + \bmu_{2})^{\top}(c^{*}\hbtheta-c^{*}\btheta^{*})\right|}{\widehat{\Delta}}\nonumber\\
			&\qquad \lesssim \frac{1}{\Delta}\left(|\left(\hmu - \bmu\right)^{\top}\bbeta^{*}|+\twonorm{\hbtheta - \btheta}\twonorm{\hmu - \bmu}\right)\nonumber\\
			&\qquad\lesssim \frac{|\left(\hmu - \bmu\right)^{\top}\bbeta^{*}|}{\Delta},
		\end{align*}
		where the first inequality follows from that $\Delta$ is bounded. Take Taylor expansion on the two terms of $R_n(t)$ around $-\frac{c^{*}\bdelta^{\top}\hbtheta/2}{\widehat{\Delta}}$ and $\frac{c^{*}\bdelta^{\top}\hbtheta/2}{\widehat{\Delta}}$ respectively, we have
		\begin{align}
			R_n(t)-\widetilde{R}&=\frac{1}{2}\left(\frac{\left(\hmu - \bmu_{1}\right)^{\top} (c^{*}\hbtheta)}{\widehat{\Delta}}+\frac{c^{*}\bdelta^{\top}\hbtheta/2}{\widehat{\Delta}}\right)\Phi^{'}\left(-\frac{c^{*}\bdelta^{\top}\hbtheta/2}{\widehat{\Delta}}\right)\nonumber\\
			&-\frac{1}{2}\left(\frac{\left(\hmu - \bmu_{2}\right)^{\top} (c^{*}\hbtheta)}{\widehat{\Delta}}-\frac{c^{*}\bdelta^{\top}\hbtheta/2}{\widehat{\Delta}}\right)\Phi^{'}\left(\frac{c^{*}\bdelta^{\top}\hbtheta/2}{\widehat{\Delta}}\right)\nonumber\\
			&+\frac{1}{4}\left(\Phi^{''}(b_{3n})\left(\frac{\left(\hmu - \bmu_{1}\right)^{\top} (c^{*}\hbtheta)}{\widehat{\Delta}}+\frac{c^{*}\bdelta^{\top}\hbtheta/2}{\widehat{\Delta}}\right)^2\right)\nonumber\\
			&+\frac{1}{4}\left(\Phi^{''}(b_{4n})\left(\frac{\left(\hmu - \bmu_{2}\right)^{\top} (c^{*}\hbtheta)}{\widehat{\Delta}}-\frac{c^{*}\bdelta^{\top}\hbtheta/2}{\widehat{\Delta}}\right)^2\right)\nonumber\\
			&=\frac{\Phi^{''}(b_{3n}) + \Phi^{''}(b_{4n})}{4} \frac{\left(\hmu - \bmu\right)^{\top} (c^{*}\hbtheta)}{\widehat{\Delta}},\label{eq:Rn_Rt_diff}
		\end{align}
		where $b_{3n}$ is some point between $\frac{\left(\hmu - \bmu_{1}\right)^{\top} (c^{*}\hbtheta)}{\widehat{\Delta}}$ and $\frac{c^{*}\bdelta^{\top}\hbtheta/2}{\widehat{\Delta}}$ and $b_{4n}$ is some point between $\frac{\left(\hmu - \bmu_{1}\right)^{\top} (c^{*}\hbtheta)}{\widehat{\Delta}}$ and $-\frac{c^{*}\bdelta^{\top}\hbtheta/2}{\widehat{\Delta}}$, then 
		$$
		\Phi^{''}(b_{3n})\asymp\Phi^{''}(b_{4n})\asymp \frac{\Delta}{2}\exp\left(-\frac{\Delta^2}{8}\right).
		$$
		In fact, we also used $\bmu - \bmu_1 = -\bdelta/2$ and $\bmu - \bmu_2 = \delta/2$ to obtain \eqref{eq:Rn_Rt_diff}.
		Then \eqref{eq:Rn_Rt_diff} implies
		\begin{align}
		    |R_n(t)-\widetilde{R}|&\lesssim \LRabs{\left(\hmu - \bmu\right)^{\top} (c^{*}\btheta^*)}^2 + \LRabs{\left(\hmu - \bmu\right)^{\top} (\hbtheta-\btheta^*)}^2\nonumber\\
			&\lesssim |\left(\hmu - \bmu\right)^{\top}\bbeta^{*}|^2.\label{decom_R_2}
		\end{align}
		Combining \eqref{decom_R_1} and \eqref{decom_R_2}, for any $u\in [0,1]$, it holds that
		\begin{align*}
		    |R_n(u)-R(u)|&\leq |R_n(u)-\widetilde{R}(u)|+|\widetilde{R}(u)-R(u)|\\
		    &\lesssim \twonorm{\hbtheta(u) - \btheta^{*}(u)}^2+\LRabs{\left(\hmu(u) - \bmu(u)\right)^{\top}\bbeta^{*}(u)}^2.
		\end{align*}
	\end{proof}

\subsection{Proof of Theorem \ref{thm_appro_error} and \ref{thm_high_appro_error}}\label{proof:thm_appro_error}
Here we only prove Theorem \ref{thm_appro_error}, and the proof of Theorem 3.1 can be easily obtained through the similar analysis. The following lemma provides the lower bound and upper bound for the eigenvalues of $\E[\tB\tB^{\top}]$, and the proof is deferred to Appendix \ref{proof::lemma_eigen}.
\begin{lemma}\label{lemma_eigen}
	Assume the assumptions hold, then there exist two positive constant $M_1$ and $M_2$ such that
	\begin{equation*}
		M_1\lambda_0\leq \lambda_{\min}(\E[\tB\tB^{\top}])\leq \lambda_{\max}(\E[\tB\tB^{\top}])\leq M_2(\lambda_1+\delta_p/4).
	\end{equation*}
\end{lemma}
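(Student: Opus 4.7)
The plan is to compute $\E[\tB\tB^{\top}]$ by first conditioning on the exposure variable $U$, and then to bound its extreme eigenvalues via the positive-semidefinite monotonicity of Kronecker products, combined with the uniform eigenvalue bounds on $\bSigma(u)$ from (C1), the uniform control of $\bmu_1(u)-\bmu_2(u)$ from (C4), and the Appendix \ref{appen:Bspline} estimates on $\E[\bB\bB^{\top}]$.

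First I would identify the conditional second moment. Writing $\tB = (\bX - \bmu(U)) \otimes \bB(U)$, and using that $\bB(U)$ is deterministic given $U$ while $U$ is independent of $Y$, conditioning further on $Y$ and combining the two Gaussian branches with equal priors and $\bmu(u)=(\bmu_1(u)+\bmu_2(u))/2$ yields
\[
\E\LRm{(\bX - \bmu(u))(\bX - \bmu(u))^{\top}\,\big|\,U=u} \;=\; \bSigma(u) + \tfrac{1}{4}\bdelta(u)\bdelta(u)^{\top},
\]
where $\bdelta(u) = \bmu_1(u)-\bmu_2(u)$. Taking outer expectation and pulling the $U$-measurable factor $\bB(u)\bB(u)^{\top}$ out of the conditional expectation gives
\[
\E[\tB\tB^{\top}] \;=\; \E\LRm{\LRs{\bSigma(U)+\tfrac{1}{4}\bdelta(U)\bdelta(U)^{\top}}\otimes \bB(U)\bB(U)^{\top}}.
\]

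For the upper bound I would invoke Kronecker monotonicity: if $\bA_1 \preceq \bA_2$ and $\bC \succeq 0$, then $\bA_1 \otimes \bC \preceq \bA_2 \otimes \bC$. By (C1), $\bSigma(u) \preceq \lambda_1 \bI_p$; by (C4), $\twonorm{\bdelta(u)}^2 \leq \delta_p^2$, hence the rank-one term satisfies $\bdelta(u)\bdelta(u)^{\top}/4 \preceq (\delta_p^2/4)\bI_p$, giving $\bSigma(u)+\tfrac{1}{4}\bdelta(u)\bdelta(u)^{\top} \preceq (\lambda_1+\delta_p^2/4)\bI_p$ uniformly in $u$. Since $\bB(u)\bB(u)^{\top}\succeq 0$, monotonicity combined with linearity of expectation yields $\E[\tB\tB^{\top}] \preceq (\lambda_1+\delta_p^2/4)\,\bI_p \otimes \E[\bB\bB^{\top}]$. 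Since $\lambda_{\max}(\E[\bB\bB^{\top}])\lesssim 1$ from Appendix \ref{appen:Bspline} and $\delta_p\leq M$ from (C4) makes $\delta_p^2/4 \leq M\delta_p/4$, the stated upper bound $M_2(\lambda_1+\delta_p/4)$ follows. For the lower bound, the symmetric argument applies: (C1) gives $\bSigma(u)\succeq \lambda_0 \bI_p$ and the rank-one term is PSD, so $\bSigma(u)+\tfrac{1}{4}\bdelta(u)\bdelta(u)^{\top}\succeq \lambda_0 \bI_p$; Kronecker monotonicity then delivers $\E[\tB\tB^{\top}]\succeq \lambda_0\,\bI_p\otimes \E[\bB\bB^{\top}]$, and the bound $\lambda_{\min}(\E[\bB\bB^{\top}])\gtrsim 1$ from Appendix \ref{appen:Bspline} produces the $M_1\lambda_0$ lower bound.

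The proof is essentially algebraic once the conditional second moment is identified; there is no analytic obstacle beyond verifying the Kronecker monotonicity step and bookkeeping the constants. The one bookkeeping subtlety is that the direct rank-one bound produces $\delta_p^2/4$ rather than the $\delta_p/4$ appearing in the statement, but the two expressions agree up to a constant under the uniform bound $\delta_p\leq M$ in (C4), which is absorbed into the constant $M_2$.
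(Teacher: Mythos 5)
Your proof is correct and takes essentially the same route as the paper: identify the conditional second moment $\bC(u)=\bSigma(u)+\tfrac14\bdelta(u)\bdelta(u)^{\top}$, bound its eigenvalues uniformly over $u$ via (C1) and (C4), and transfer the bound to $\E[\tB\tB^{\top}]$ through the eigenvalue control on $\E[\bB\bB^{\top}]$ (the paper evaluates the quadratic form $\bEta^{\top}\E[\tB\tB^{\top}]\bEta$ directly, whereas you package the same comparison through Kronecker monotonicity, but these are interchangeable). Your observation that the rank-one term gives $\delta_p^2/4$ rather than the $\delta_p/4$ printed in the lemma is accurate — it is a small inconsistency in the paper's statement and proof, harmless because $\delta_p\le M$ by (C4), and you absorb it into the constant $M_2$ exactly as one should.
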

\begin{proof}[Proof of Theorem \ref{thm_appro_error}]
	There exists $\bar{\theta}_j(U) = \bar{\bgamma}_{j}^{\top}\bB(U)$ for $j=0,1,...,p$ such that 
	\begin{equation}\label{spline_error}
		\sup_{u\in [0,1]}|\theta_j^{*}(u)-\bar{\theta}_j(u)|\leq M_0L_n^{-d}.
	\end{equation}
	Let $\bar{\bgamma}= (\bar{\bgamma}_{1}^{\top}, \cdots, \bar{\bgamma}_{p}^{\top})^{\top}$, then note that
	\begin{equation}\label{eq:tgamma_error}
		\begin{aligned}
			\tgamma - \bar{\bgamma} = \left[\E\left(\tB\tB^{\top}\right)\right]^{-1}\E\left[\tB \left(Z - \tB^{\top}\bar{\bgamma}\right)\right],
		\end{aligned}
	\end{equation}
	The optimality condition of $\theta_{j}^{*}(U)$ implies that
	\begin{equation*}
		\E\left[(X_j-\mu_{j}(U))\left(Z-\sum_{l=1}^p(X_j-\mu_{j}(U))\theta_{j}^{*}(U)\right)\bigg| U\right]=0,
	\end{equation*}
	which means
	\begin{align*}
	    \E\LRm{(X_j-\mu_{j}(U))\bB(U) \left(Z-\sum_{l=1}^p(X_j-\mu_{j}(U))\theta_{j}^{*}(U)\right)} = \boldsymbol{0}.
	\end{align*}
	Recall $\tB = (\bX - \bmu)\otimes \bB$, then we can get
	\begin{align*}
		\E\left[\tB \left(Z - \tB^{\top}\bar{\bgamma}\right)\right] &= \E\left[\tB \left( \sum_{j=1}^p\theta_{j}^{*}(U)(X_j-\mu_{j}(U))- \sum_{j=1}^p\bar{\theta}_j(U)(X_j-\mu_{j}(U)) \right)\right]\\
		&= \E\LRm{\tB (\bX - \bmu)^{\top}(\btheta^{*}(U) - \bar{\btheta}(U))}.
	\end{align*}
	Let $\bC(U) = \E\LRm{(\bX-\bmu(U))(\bX-\bmu(U))^{\top}|U}$, then simple calculation yields that
	\begin{align*}
	    \bC(U) = \bSigma(U)+\frac{1}{4}(\bmu_1(U)-\bmu_2(U))(\bmu_1(U)-\bmu_2(U))^{\top}.
	\end{align*}
	For $\bnu = (\bnu_{(1)}^{\top},...,\bnu_{(p)}^{\top})^{\top}$, we denote $\tnu(U) = (\bnu_{(1)}^{\top}\bB(U),...,\bnu_{(p)}^{\top}\bB(U))^{\top}$. Then we have
	\begin{align}
	    &\LRtwonorm{\E[\tB (Z - \tB^{\top}\bar{\bgamma})]}\nonumber\\
			= & \sup_{\twonorm{\bnu} = 1} \LRabs{\E[\bnu^{\top}\tB (Z - \tB^{\top}\bar{\bgamma})]}\nonumber\\
			= & \sup_{\twonorm{\bnu} = 1} \left|\E[\tnu(U)^{\top}(\bX-\bmu(U))(\bX-\bmu(U))^{\top}(\btheta^{*}(U) - \bar{\btheta}(U))]\right|\nonumber\\
			= & \sup_{\twonorm{\bnu} = 1} \left|\E\LRm{\twonorm{\tnu(U)}\twonorm{\bC(U)}\twonorm{\btheta^{*}(U) - \bar{\btheta}(U)}}\right|\nonumber\\
			\lesssim & \sqrt{p} L_n^{-d}\sup_{\twonorm{\bnu} = 1}\E[\twonorm{\tnu(U)}],
			\label{eq:tgamma_error_1}
	\end{align}
	where the last inequality follows from \eqref{spline_error} and $\twonorm{\bC(u)} \leq \twonorm{\bSigma(u)} + \delta_p$. Using the inequality (\ref{int_bound}) and $\twonorm{\bnu} = 1$, we have 
	\begin{equation*}
		\E[\twonorm{\tnu}]\leq \LRs{\E[\twonorm{\tnu}^2]}^{1/2} = L_n\LRs{\sum_{j=1}^p\E[(\bnu_{(j)}^{\top}\bB^*)^2]}^{1/2} \lesssim \LRs{\sum_{j=1}^p \twonorm{\bnu_{(j)}}^2}^{1/2}.
	\end{equation*}
	Combining \eqref{eq:tgamma_error} and \eqref{eq:tgamma_error_1}, we are guaranteed that
	$\twonorm{\tgamma - \bar{\bgamma}} \lesssim \sqrt{p}L_n^{-d}$. Recall that $\tbtheta_j(u) = \tgamma_j^{\top} \bB(u)$, together with \eqref{int_bound}, we can have
	\begin{align*}
	    \|\tbtheta-\bar{\btheta}\|_{L_2}^2 = \int_{0}^1\LRtwonorm{\tbtheta(u) - \bar{\btheta}(u)}^2du &=L_n \sum_{j=1}^p\int_{0}^1\LRs{\LRs{\tgamma_{(j)} - \bar{\btheta}_{(j)}}^{\top} \bB^*(u)}^2 du\\
	    &\lesssim \sum_{j=1}^p \twonorm{\tgamma_{(j)} - \bar{\btheta}_{(j)}}^2\\
	    &=\twonorm{\tgamma - \bar{\bgamma}}^2.
	\end{align*}
	It yields that
	\begin{align*}
	    \|\btheta^{*}-\tbtheta\|_{L_2} \leq \|\btheta^{*}-\bar{\btheta}\|_{L_2} + \|\tbtheta-\bar{\btheta}\|_{L_2} \lesssim \sqrt{p}L_n^{-d}.
	\end{align*}
\end{proof}

\subsection{Proof of Theorem \ref{thm_low_dim_ell2}}\label{proof:thm_ld_ell2}
Let $\bD_n = \frac{1}{2n}\sum_{i=1}^{2n}\tB_i\tB_i^{\top}$ and $\bb_n = \frac{1}{2n}\sum_{i=1}^{2n}\tB_i Z_i$. Correspondingly, we write $\bD = \E[\tB\tB^{\top}]$ and $\bb = \E[\tB Z]$. The following two lemmas give the concentration bounds for two terms in estimation error $\|\hgamma - \tgamma\|_2$. We defer the proofs in Section \ref{sec:main_lemma_proofs}.

\begin{lemma}\label{lemma:tB_matrix_concentration}
	Under the conditions of Theorem \ref{thm_low_dim_ell2}, we have
	    \begin{equation}\label{eq:tB_matrix_concentration}
	        \LRtwonorm{\bD_n - \bD} \lesssim L_n\sqrt{\frac{p\log n}{n}}+pL_n^{3/2}a_n \sqrt{\frac{\log n}{n}},
        \end{equation}
    holds with probability at least $1 - n^{-\vartheta L_np} - pL_n n^{-\vartheta}$.
\end{lemma}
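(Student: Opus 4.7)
The plan is to split $\bD_n - \bD$ into a ``clean'' empirical-process piece in which $\hmu$ is replaced by the true $\bmu$, plus two perturbations that absorb the mean-estimation error. Set $\bar{\bD}_n := \frac{1}{2n}\sum_{i=1}^{2n} (\bX_i - \bmu(U_i))(\bX_i - \bmu(U_i))^{\top}\otimes \bB_i\bB_i^{\top}$, so that $\E[\bar{\bD}_n] = \bD$. Writing $\bX_i - \hmu(U_i) = (\bX_i - \bmu(U_i)) + (\bmu(U_i) - \hmu(U_i))$ and expanding the outer product gives the decomposition $\bD_n - \bD = (\bar{\bD}_n - \bD) + R_1 + R_2$, where $R_2$ is quadratic in $\hmu - \bmu$ and $R_1$ collects the two cross terms (each Kronecker-multiplied with $\bB_i\bB_i^{\top}$). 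The two pieces will produce, respectively, the main concentration term and the mean-error remainder in the statement.

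For $\|\bar{\bD}_n - \bD\|_2$, I would use a chaining/$\epsilon$-net argument on the unit sphere of $\mathbb{R}^{pL_n}$, whose $1/4$-net has cardinality at most $9^{pL_n}$. For a fixed direction $\bnu = (\bnu_{(1)}^{\top},\dots,\bnu_{(p)}^{\top})^{\top}$ with $\twonorm{\bnu}=1$, the scalar $\bnu^{\top}((\bX_i-\bmu(U_i))\otimes \bB_i) = \sum_j (X_{ij}-\mu_j(U_i))(\bnu_{(j)}^{\top}\bB_i)$ is, conditionally on $U_i$, centered Gaussian with variance $\sigma_i^2 \le \lambda_1\sum_j(\bnu_{(j)}^{\top}\bB_i)^2$. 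A direct moment computation using $\bB = \sqrt{L_n}\bB^{*}$ and the B-spline identities in Appendix \ref{appen:Bspline} yields $\E[\sigma_1^4] \lesssim L_n$. Consequently, the conditional chi-squared-type deviation (Hanson--Wright, or equivalently Lemma \ref{bern_ineq} with the Gaussian moment control of Lemma \ref{lemma:normal_moment_bound}) gives $|\bnu^{\top}(\bar{\bD}_n - \bD)\bnu| \lesssim L_n\sqrt{p\log n/n}$ with per-direction failure probability $\le n^{-C\vartheta pL_n}$. A union bound over the net then yields $\|\bar{\bD}_n - \bD\|_2 \lesssim L_n\sqrt{p\log n/n}$ with probability $\ge 1 - n^{-\vartheta pL_n}$.

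For the perturbation $R_1 + R_2$, I use Proposition \ref{pro_mean_bound}: on an event of probability $\ge 1 - pL_n n^{-\vartheta}$, $\sup_u\infnorm{\hmu(u) - \bmu(u)} \lesssim a_n$ and hence $\sup_u \twonorm{\hmu(u) - \bmu(u)} \lesssim \sqrt{p}\,a_n$. For any unit $\bnu$, Cauchy--Schwarz in the $j$ index gives $\bnu^{\top}R_2\bnu \le \frac{1}{2n}\sum_i\twonorm{\hmu(U_i) - \bmu(U_i)}^2 \sum_j(\bnu_{(j)}^{\top}\bB_i)^2 \lesssim pL_n\,a_n^2$, using $\sum_j(\bnu_{(j)}^{\top}\bB_i)^2 \le \twonorm{\bB_i}^2 \le L_n$. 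For $R_1$, Cauchy--Schwarz across $i$ between the two factors yields $|\bnu^{\top}R_1\bnu| \lesssim \sqrt{\bnu^{\top}\bar{\bD}_n\bnu}\cdot\sqrt{\bnu^{\top}R_2\bnu} \lesssim \sqrt{pL_n}\,a_n$, since $\|\bar{\bD}_n\|_2 \lesssim 1$ follows from Lemma \ref{lemma_eigen} and the concentration just established. Combining, $\|R_1\|_2 + \|R_2\|_2 \lesssim \sqrt{pL_n}\,a_n + pL_n\,a_n^2$, which, together with $a_n \gtrsim \sqrt{L_n\log n/n}$ and assumption \cond{4}, is absorbed into the target second term $pL_n^{3/2}a_n\sqrt{\log n/n}$.

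The main obstacle is that $\hmu$ depends on the same samples entering $\bD_n$, so $R_1$ is not a clean centered sum and resists a direct fresh concentration argument; I bypass the dependence by treating Proposition \ref{pro_mean_bound} as a deterministic control on its high-probability event and then applying Cauchy--Schwarz. A secondary subtlety is the choice of the right variance proxy for $(\bnu^{\top}\tB)^2$: the worst-case bound $\sigma_i^2 \le \lambda_1 L_n$ gives the wrong rate $L_n^{3/2}\sqrt{p\log n/n}$, and the correct scale $\E[\sigma_1^4] \lesssim L_n$ only emerges after invoking the B-spline moment identities in Appendix \ref{appen:Bspline}. The two probability terms in the conclusion correspond to the net-based concentration step ($n^{-\vartheta L_np}$) and the Proposition \ref{pro_mean_bound} event ($pL_n n^{-\vartheta}$), respectively.
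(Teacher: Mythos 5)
Your decomposition $\bD_n - \bD = (\bar{\bD}_n - \bD) + R_1 + R_2$ and the bound on $R_2$ are both fine, and your $\epsilon$-net treatment of $\bar{\bD}_n - \bD$ is essentially the same as what the paper does after it further splits by class label (a split you would in fact need, since conditionally on $U_i$ but \emph{un}conditionally on $Y_i$ the variable $\bX_i - \bmu(U_i)$ is a two-component Gaussian mixture with non-zero means, not a centered Gaussian; this is a fixable inaccuracy, not a fatal one). The genuine gap is in your treatment of the cross term $R_1$.

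You bound $R_1$ by Cauchy--Schwarz \emph{across} $i$: $|\bnu^{\top}R_1\bnu| \lesssim \sqrt{\bnu^{\top}\bar{\bD}_n\bnu}\cdot\sqrt{\bnu^{\top}R_2\bnu} \lesssim \sqrt{pL_n}\,a_n$. But this is \emph{not} absorbed into the target bound. The ratio of $\sqrt{pL_n}\,a_n$ to the claimed second term $pL_n^{3/2}a_n\sqrt{\log n/n}$ is $\bigl(\sqrt{p}\,L_n\sqrt{\log n/n}\,\bigr)^{-1}$, and assumption \cond{4} is precisely engineered so that $L_n\sqrt{p\log n/n}=o(1)$ --- so this ratio tends to infinity, i.e.\ your bound is strictly larger than the second term. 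It is also larger than the first term: since $a_n \ge \sqrt{L_n\log n/n}$ you have $\sqrt{pL_n}\,a_n \ge L_n\sqrt{p\log n/n}$, with strict excess exactly when the approximation bias $L_n^{-d}$ dominates $a_n$. The Cauchy--Schwarz step throws away the fact that the factor $\bX_i - \bmu_k(U_i)$ is (conditionally on $U_i$) mean-zero Gaussian, which is the source of the extra $\sqrt{\log n/n}$ decay. The paper's Lemmas \ref{lemma_matrix_bound_4}--\ref{lemma_matrix_bound_5} recover this factor by a finer argument: they condition on the high-probability event $\{\|\hbM_j - \tbM_j\|_2 \lesssim a_n\}$, chain over a finite net of admissible perturbation matrices $\bM \in \boldsymbol{\Xi}$, and then re-apply the Bernstein-type inequality (Lemma \ref{bern_ineq}) to the centered sum $\frac{1}{n}\sum_i [\txi(U_i)^{\top}(\bX_i - \bmu_1(U_i))][\tnu(U_i)^{\top}\bM^{\top}\bB_i]$ with the mean-error factor frozen at a net point. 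To close your proof you would need to replace the Cauchy--Schwarz step for $R_1$ by an argument of this type.
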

\begin{lemma}\label{lemma:tB_tgamma_concentration}
    Under the conditions of Theorem \ref{thm_low_dim_ell2}, we have
	    \begin{equation}\label{eq:tB_matrix_concentration_matvec}
	        \LRtwonorm{\bD_n\tgamma - \bD\tgamma} \lesssim \sqrt{\frac{p L_n\log n}{n}}+pL_na_n \sqrt{\frac{\log n}{n}},
        \end{equation}
    holds with probability at least $1 - n^{-\vartheta L_np} - pL_n n^{-\vartheta}$.
\end{lemma}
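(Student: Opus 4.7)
The plan is to split the error according to whether the mean function appearing inside $\tB_i$ is the true $\bmu$ or the estimate $\hmu$, then combine an i.i.d. Bernstein argument for the oracle part with a careful linearization for the plug-in part.

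\textbf{Step 1 (Oracle decomposition).} Introduce $\tB_i^{o} = (\bX_i - \bmu(U_i))\otimes\bB_i$ and $\bD_n^{o} = \frac{1}{2n}\sum_{i=1}^{2n}\tB_i^{o}(\tB_i^{o})^\top$. Then I will write
$$\bD_n\tgamma - \bD\tgamma \;=\; \underbrace{(\bD_n^{o}-\bD)\tgamma}_{=:T_2} \;+\; \underbrace{(\bD_n-\bD_n^{o})\tgamma}_{=:T_1}.$$
The term $T_2$ depends only on the true quantities and is a centered average of i.i.d.\ random vectors; the term $T_1$ absorbs the dependence on $\hmu$.

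\textbf{Step 2 (Bound $T_2$).} The $((j-1)L_n+k)$-th entry of $T_2$ equals
$$[T_2]_{j,k} \;=\; \frac{1}{2n}\sum_{i=1}^{2n}\Big\{(X_{ij}-\mu_j(U_i))\,B_k(U_i)\,(\bX_i-\bmu(U_i))^\top\tbtheta(U_i) - \E[\,\cdot\,]\Big\}.$$
Conditional on $U_i$, the product of Gaussians $(X_{ij}-\mu_j(U_i))(\bX_i-\bmu(U_i))^\top\tbtheta(U_i)$ has bounded variance by \cond{1} and \cond{4}, while $|B_k(U_i)|\le\sqrt{L_n}$. Using Lemma \ref{lemma:normal_moment_bound}, I verify $\E[\xi_i^2 e^{\phi|\xi_i|}]\lesssim 1$ for $\phi\asymp L_n^{-1/2}$, so Lemma \ref{bern_ineq} with $s_n\asymp\sqrt{n}$ yields $|[T_2]_{j,k}|\lesssim\sqrt{\log n/n}$ with probability at least $1-n^{-\vartheta}$. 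A union bound over the $pL_n$ entries and $\twonorm{T_2}\le\sqrt{pL_n}\max_{j,k}|[T_2]_{j,k}|$ give $\twonorm{T_2}\lesssim\sqrt{pL_n\log n/n}$ with probability at least $1-pL_n n^{-\vartheta}$.

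\textbf{Step 3 (Bound $T_1$).} Expanding the plug-in difference, the $j$-th block of $T_1$ decomposes as $A_j+B_j+C_j$ with
$$A_j = \frac{1}{2n}\!\sum_i\!(\mu_j-\hmu_j)(U_i)(\bX_i-\bmu(U_i))^\top\tbtheta(U_i)\bB_i,\quad B_j = \frac{1}{2n}\!\sum_i\!(X_{ij}-\mu_j(U_i))(\bmu-\hmu)(U_i)^\top\tbtheta(U_i)\bB_i,$$
and $C_j$ the purely quadratic remainder in $\hmu-\bmu$. For $A_j$ and $B_j$ I further split $\hmu-\bmu = (\tmu-\bmu)+(\hmu-\tmu)$, where $\tmu=\talpha^\top\bB$ is the deterministic best spline approximation with $\sup_u\infnorm{\tmu-\bmu}\lesssim L_n^{-d}$ and $\hmu-\tmu = (\halpha-\talpha)^\top\bB$ is data-dependent with $\twonorm{\halpha_j-\talpha_j}\lesssim\sqrt{L_n\log n/n}$ (Proposition \ref{pro_mean_bound} and its proof). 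The $(\tmu-\bmu)$-piece has a deterministic coefficient of sup-norm $L_n^{-d}$, so entrywise Bernstein with variance $\lesssim L_n^{-2d}$ and aggregation yield a contribution of order $\sqrt{pL_n\log n/n}\cdot L_n^{-d}$. For the $(\hmu-\tmu)$-piece, I factor $(\halpha_l-\talpha_l)$ outside the sum and apply entrywise Bernstein to the remaining $\bB$-weighted quantities (of type $\frac{1}{2n}\sum_i \bB_i(\bX_i-\bmu)^\top\tbtheta\,B_k(U_i)$), each with $\ell_2$ norm $\lesssim\sqrt{L_n\log n/n}$; this mirrors the chaining technique used in Lemma \ref{lemma:tB_matrix_concentration} and produces a contribution of order $pL_n\log n/n$. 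For $C_j$ the uniform bound $\sup_u\twonorm{\hmu-\bmu}\lesssim\sqrt{p}\,a_n$ together with $\twonorm{\tbtheta}\le M$ gives $\twonorm{C}\lesssim pa_n^2$. Under \cond{4}, $a_n\lesssim L_n\sqrt{\log n/n}$, so every contribution is dominated by $pL_n a_n\sqrt{\log n/n}$, yielding $\twonorm{T_1}\lesssim pL_n a_n\sqrt{\log n/n}$.

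\textbf{Step 4 and main obstacle.} Combining Steps 2 and 3 via the triangle inequality gives the claimed bound, and combining the failure probabilities (the $n^{-\vartheta L_n p}$ contribution arising from the chaining/operator-norm control used to uniformize the data-dependent $\halpha-\talpha$ part) produces the stated overall probability $1-n^{-\vartheta L_n p}-pL_n n^{-\vartheta}$. The hardest step is \textbf{Step 3}: because $\hmu$ is built from the same sample, the factor $\hmu-\bmu$ is statistically coupled with $(\bX_i-\bmu)^\top\tbtheta$ inside the average, so a naive Cauchy--Schwarz blows up by factors of $\sqrt{p}$. The decomposition $\hmu-\bmu=(\tmu-\bmu)+(\hmu-\tmu)$ cleanly separates deterministic spline bias from random fluctuation, and chaining (as in the proof of Lemma \ref{lemma:tB_matrix_concentration}) is what allows us to factor out the random vector $\halpha-\talpha$ and uniformly control the $\bB$-weighted residual averages that multiply it.
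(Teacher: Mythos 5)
Your overall strategy is the same as the paper's: decompose $\bD_n\tgamma - \bD\tgamma$ into an ``oracle'' piece built from the true mean functions plus a plug-in remainder driven by $\hmu-\bmu$, bound the oracle piece by Bernstein, and handle the plug-in piece via the $\hmu-\bmu = (\tmu-\bmu)+(\hmu-\tmu)$ split and a chaining argument to control the data-dependent factor $\halpha-\talpha$. The only cosmetic difference is that you control the oracle part entrywise (using $\twonorm{T_2}\le\sqrt{pL_n}\max_{j,k}|[T_2]_{j,k}|$) where the paper uses a covering of $\mathbb{S}^{pL_n-1}$; both produce the same rate.

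However, there is a genuine quantitative error in Step 2. Lemma \ref{bern_ineq} gives $\Prob(\sum_i\xi_i\geq C_\phi s_n x)\leq e^{-x^2}$ with $C_\phi = \phi+\phi^{-1}$. You verify $\sum_i\E[\xi_i^2 e^{\phi|\xi_i|}]\lesssim n$ with $\phi\asymp L_n^{-1/2}$ and set $s_n\asymp\sqrt{n}$, but then $C_\phi\asymp L_n^{1/2}$, so what the lemma actually delivers is
\begin{equation*}
\LRabs{[T_2]_{j,k}} = \frac{1}{2n}\LRabs{\sum_i\xi_i}\;\lesssim\;\frac{C_\phi s_n\sqrt{\log n}}{n}\;\asymp\;\sqrt{\frac{L_n\log n}{n}},
\end{equation*}
not $\sqrt{\log n/n}$. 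Aggregating gives $\twonorm{T_2}\lesssim L_n\sqrt{p\log n/n}$, which is larger than the claimed $\sqrt{pL_n\log n/n}$ by a factor of $\sqrt{L_n}$. The fix — and what the paper does throughout Lemmas \ref{lemma_matrix_bound_1}--\ref{lemma_matrix_bound_5} — is to factor out the scale so the \emph{normalized} basis $B_k^{*} = B_k/\sqrt{L_n}$ appears: with $\zeta_i = \xi_i/\sqrt{L_n}$ one has $|B_k^{*}|\le 1$ and $\E[(B_k^{*})^2]\lesssim L_n^{-1}$, so for a \emph{constant} $\phi$ the moment bound reads $\sum_i\E[\zeta_i^2 e^{\phi|\zeta_i|}]\lesssim n/L_n$, giving $s_n^2\asymp n/L_n$ and $C_\phi = O(1)$. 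Then $|\frac{1}{2n}\sum_i\zeta_i|\lesssim\sqrt{\log n/(nL_n)}$, and multiplying back by $\sqrt{L_n}$ yields the desired $|[T_2]_{j,k}|\lesssim\sqrt{\log n/n}$. In other words, it is essential to put the $L_n^{-1}$ decay into $s_n^2$ via $\E[(B_k^{*})^2]$ rather than into $\phi$ via a crude $|B_k|\le\sqrt{L_n}$ bound, because $C_\phi$ diverges as $\phi\to 0$. The same care is implicitly needed in the Bernstein applications of Step 3; with that parameterization corrected, the rest of your argument goes through.
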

\begin{lemma}\label{lemma:bn_concentration}
    Under the conditions of Theorem \ref{thm_low_dim_ell2}, we have
    \begin{equation}
        \LRinfnorm{\bb_n - \bb} \lesssim \sqrt{\frac{\log n}{n}}+ L_n^{-\frac{1}{2}}a_n,
    \end{equation}
    holds with probability $1 - pL_n n^{-\vartheta}$.
\end{lemma}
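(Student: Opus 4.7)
The natural decomposition is $\bb_n - \bb = T_1 + T_2$, where
\begin{align*}
    T_1 &= \frac{1}{2n}\sum_{i=1}^{2n}\LRs{(\bX_i-\bmu(U_i))\otimes \bB_i} Z_i - \E\LRm{(\bX-\bmu(U))\otimes \bB \cdot Z},\\
    T_2 &= \frac{1}{2n}\sum_{i=1}^{2n}\LRs{(\bmu(U_i)-\hmu(U_i))\otimes \bB_i}Z_i
\end{align*}
represent the statistical error (with the true mean function plugged in) and the plug-in error (from replacing $\bmu$ by $\hmu$). I will control the $\ell_\infty$ norm of each piece and combine via the triangle inequality.

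For $T_1$, the $((j-1)L_n+k)$-th coordinate is a centered sum of i.i.d.\ variables $W_i = (X_{ij}-\mu_j(U_i))B_k(U_i)Z_i - \E[(X_j-\mu_j(U))B_k(U)Z]$. Since $|Z_i|\leq 1/2$ and $B_k=\sqrt{L_n}B_k^*$ with $0\leq B_k^*\leq 1$, I can mimic the calculation used in the proof of Proposition \ref{pro_mean_bound}: conditioning on $U_i$, $X_{ij}-\mu_j(U_i)$ is $\mathcal{N}(0,\Sigma_{jj}(U_i))$ with bounded variance, so Lemma \ref{lemma:normal_moment_bound} together with $\E[(B_k^*)^2]\asymp L_n^{-1}$ yields
\begin{equation*}
    \E[W_i^2 e^{\phi|W_i|}]\lesssim L_n \E\LRm{(B_k^*(U))^2(X_{ij}-\mu_j(U))^2\, e^{\phi\sqrt{L_n}|B_k^*(U)||X_{ij}-\mu_j(U)|}}\lesssim 1,
\end{equation*}
provided $\phi$ is a sufficiently small constant multiple of $L_n^{-1/2}$. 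Bernstein's inequality (Lemma \ref{bern_ineq}) with $s_n^2\asymp n$ and $x\asymp\sqrt{\log n}$ then gives $|\sum_i W_i|\lesssim\sqrt{n\log n}$ with probability $1-n^{-(\vartheta+1)}$. A union bound over the $pL_n$ coordinates produces $\LRinfnorm{T_1}\lesssim\sqrt{\log n/n}$ with probability at least $1-pL_n n^{-\vartheta}$.

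For $T_2$, the $((j-1)L_n+k)$-th coordinate is
\begin{equation*}
    \frac{1}{2n}\sum_{i=1}^{2n}(\mu_j(U_i)-\hmu_j(U_i))B_k(U_i)Z_i,
\end{equation*}
which is bounded in absolute value by $\sup_{u\in[0,1]}\infnorm{\hmu(u)-\bmu(u)}\cdot\frac{1}{4n}\sum_{i=1}^{2n} B_k(U_i)$ (using $|Z_i|=1/2$ and $B_k\geq 0$). Proposition \ref{pro_mean_bound} already gives $\sup_u\infnorm{\hmu(u)-\bmu(u)}\lesssim a_n$ with probability at least $1-6pL_n n^{-\vartheta}$. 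It remains to control $\frac{1}{2n}\sum_i B_k(U_i)$ by its mean $\E[B_k(U)]=\sqrt{L_n}\,\E[B_k^*(U)]\lesssim L_n^{-1/2}$, which is straightforward since $B_k\leq\sqrt{L_n}$ and $\E[B_k^2]\asymp 1$, so a Bernstein deviation of order $\sqrt{\log n/n}=o(L_n^{-1/2})$ is absorbed into the mean. Taking a union bound over $k=1,\ldots,L_n$ yields $\LRinfnorm{T_2}\lesssim L_n^{-1/2}a_n$ on the same high-probability event, completing the bound.

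The main obstacle is the bookkeeping on the B-spline scale: one must keep careful track of the factor $\sqrt{L_n}$ separating $B_k$ from $B_k^*$ so that the sub-exponential moment calculation in the Bernstein step does not pick up a spurious $L_n$, and so that the plug-in term collects the correct $L_n^{-1/2}$ rather than $\sqrt{L_n}$. Once this scaling is handled consistently via $\E[|B_k(U)|]\asymp L_n^{-1/2}$ and $\E[(B_k^*(U))^2]\asymp L_n^{-1}$, both pieces fall into place.
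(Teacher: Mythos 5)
Your $T_2$ analysis is fine, but the $T_1$ bound does not follow from Lemma~\ref{bern_ineq} as you invoke it. Each summand $W_i = B_k(U_i)(X_{ij}-\mu_j(U_i))Z_i - \E[\cdots]$ is sub-exponential with tail parameter of order $\sqrt{L_n}$, since $B_k = \sqrt{L_n}B_k^{*}$ multiplies an unbounded Gaussian; you correctly observe this forces $\phi\asymp L_n^{-1/2}$. But Lemma~\ref{bern_ineq}'s deviation threshold is $C_\phi s_n x$ with $C_\phi = \phi + \phi^{-1}\asymp L_n^{1/2}$, so with $s_n\asymp\sqrt{n}$ and $x\asymp\sqrt{\log n}$ the lemma only gives $|\sum_i W_i| \lesssim L_n^{1/2}\sqrt{n\log n}$, not $\sqrt{n\log n}$. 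Hence $\LRinfnorm{T_1}\lesssim\sqrt{L_n\log n/n}$, a factor $\sqrt{L_n}$ larger than what you claim and larger than the stated bound of the lemma. A two-regime sub-exponential Bernstein inequality (Gaussian tail for small deviations) would recover $\sqrt{\log n/n}$ here because $L_n\log n = o(n)$, but that is a different and sharper inequality than the one cited.

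The paper avoids this loss by never forming your $T_1$. Because $\hmu_j = (\hmu_{1j}+\hmu_{2j})/2$ and the spline coefficients satisfy the normal equations $\sum_{i\in\mathcal{I}_l}\bB_i(X_{ij}-\bB_i^{\top}\halpha_{lj}) = \boldsymbol{0}$, the $(j,k)$-th coordinate of $\bb_n$ collapses algebraically to $\frac{1}{4}(I_1 - I_2)$ with $I_l = \frac{1}{2n}\sum_i B_k(U_i)\hmu_{lj}(U_i)$: the unbounded $X_{ij}$ disappears from the sum entirely. What remains involves only $B_k^{*}\in[0,1]$ and the uniformly bounded functions $\hmu_{lj}$ and $\mu_{lj}$, so Lemma~\ref{bern_ineq} applies with \emph{constant} $\phi$, giving $C_\phi = O(1)$, and the small second moment $\E[(B_k^{*})^2]\asymp L_n^{-1}$ yields the sharp $\sqrt{\log n/(L_n n)}$ deviation; multiplying by the $\sqrt{L_n}$ rescaling gives exactly $\sqrt{\log n/n}$. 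Your decomposition is salvageable, but only after replacing the cited concentration lemma with one that does not pay the $C_\phi$ penalty in the Gaussian regime.
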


\begin{proof}[Proof of Theorem \ref{thm_low_dim_ell2}]
    From the definition of $\hgamma$ and $\tgamma$, we have
	\begin{align}
	    \hgamma - \tgamma &= \bD_n^{-1}\bb_n - \tgamma = \bD_n^{-1}\LRs{\bb_n - \bD_n \tgamma}.
	    \label{eq:hgamma_decomposition}
	\end{align}
	Now let us recall the optimal condition of $\tgamma$,
	\begin{align}\label{eq:bb_equality}
	    \boldsymbol{0} = \E\LRm{\tB\LRs{Z - \tB^{\top}\tgamma}} = \bb - \bD\tgamma.
	\end{align}
	In addition, notice that
	\begin{align*}
	    \bb &= \E\LRm{\LRs{\bX - \bmu(U)}\otimes \bB Z}\\
	    &=\frac{1}{2}\E\LRm{\LRs{\bmu_1(U) - \bmu(U)}\otimes \bB} - \frac{1}{2}\E\LRm{\LRs{\bmu_2(U) - \bmu(U)}\otimes \bB}\\
	    &=\E\LRm{\LRs{\bmu_1(U) - \bmu_2(U)}\otimes \bB}.
	\end{align*}
	For $\bnu = (\bnu_{(1)}^{\top},...,\bnu_{(p)}^{\top})^{\top} \in \mathbb{R}^{pL_n}$, we denote $\tnu(U) = (\bnu_{(1)}^{\top}\bB(U),...,\bnu_{(p)}^{\top}\bB(U))^{\top}$. By the definition of $\twonorm{\cdot}$ and \eqref{int_bound}, we have
	\begin{align}\label{eq:b_bound}
	    \twonorm{\bb} &= \sup_{\bnu \in \mathbb{S}^{pL_n-1}} \LRabs{\bnu^{\top}\bb}\nonumber\\
	    &=\sup_{\bnu \in \mathbb{S}^{pL_n-1}}\LRabs{\E\LRm{\bnu^{\top}\LRs{\LRs{\bmu_1(U) - \bmu_2(U)}\otimes \bB}}}\nonumber\\
	    &=\sup_{\bnu \in \mathbb{S}^{pL_n-1}}\LRabs{\E\LRm{\tnu(U)^{\top}\LRs{\bmu_1(U) - \bmu_2(U)}}}\nonumber\\
	    &\leq \delta_p\sup_{\bnu \in \mathbb{S}^{pL_n-1}}\E\LRm{\twonorm{\tnu(U)}}\nonumber\\
	    &\leq \delta_p\sup_{\bnu \in \mathbb{S}^{pL_n-1}} \LRs{\E[\twonorm{\tnu(U)}^2]}^{1/2}\nonumber\\
	    &=\delta_p\sup_{\bnu \in \mathbb{S}^{pL_n-1}}\LRs{\sum_{j=1}^p\E\LRm{(\bnu_{(j)}^{\top}\bB(U))^2}}^{1/2}\nonumber\\
	    &\lesssim \delta_p \sup_{\bnu \in \mathbb{S}^{pL_n-1}}\LRs{\sum_{j=1}^p \twonorm{\bnu_{(j)}}^2}^{1/2} = \delta_p,
	\end{align}
	where $\delta_p = \sup_{u\in [0,1]}\twonorm{\bmu_1(u) - \bmu_2(u)}$. According to Lemma \ref{lemma:tB_matrix_concentration}, we know that $\twonorm{\bD_n - \bD} = o_{\mathbb{P}}(1)$. Using the inequality (\ref{inverse_fact}), we get
	\begin{align*}
	    \LRtwonorm{\bD_n^{-1}-\bD^{-1}} &\leq 2\LRtwonorm{\bD^{-1}}\LRtwonorm{\bD_n - \bD}\\
		&\leq 2M_1^{-1}\lambda_0^{-1}\LRtwonorm{\bD_n - \bD},
	\end{align*}
	where the second inequality follows from Lemma \ref{lemma_eigen}. Hence we can guarantee that $\twonorm{\bD_n^{-1}} = O(1)$ with high probability. By plugging the bounds in Lemma \ref{lemma:tB_tgamma_concentration} and \ref{lemma:bn_concentration}, together with \eqref{eq:bb_equality}, we have
	\begin{align}
	    \twonorm{\hgamma - \tgamma} &\leq \twonorm{\bD_n^{-1}}\twonorm{\bb_n - \bD_n\tgamma}\nonumber\\
	    &=\twonorm{\bD_n^{-1}}\twonorm{\bb_n - \bD_n\tgamma - \bb + \bD\tgamma}\nonumber\\
	    &\lesssim \twonorm{\bb_n - \bb} + \twonorm{\bD_n\tgamma -\bD\tgamma}\nonumber\\
	    &\lesssim \sqrt{\frac{p L_n\log n}{n}}+a_n pL_n \sqrt{\frac{\log n}{n}}.
	    \label{eq:gamma_est_bound_low_dim}
	\end{align}
	Recall $\hbtheta = (\bB(u)^{\top}\hgamma_{(1)},...,\bB(u)^{\top}\hgamma_{(p)})^{\top}$ and $\tbtheta(u) = (\bB(u)^{\top} \tgamma_{(1)},...,\bB(u)^{\top} \tgamma_{(p)})^{\top}$. Applying \eqref{int_bound}, we have
	\begin{align*}
	    \int_{0}^1 \twonorm{\hbtheta(u) - \tbtheta(u)}^2 du &= L_n\sum_{j=1}^p \int_{0}^1 \LRs{\bB^{*}(u)^{\top}(\hgamma_{(j)} - \tgamma_{(j)})}^2 du\\
	    &\lesssim \sum_{j=1}^p \twonorm{\hgamma_{(j)} - \tgamma_{(j)}}^2= \twonorm{\hgamma - \tgamma}^2.
	\end{align*}
	Then we have finished the proof of Theorem \ref{thm_low_dim_ell2} by plugging \eqref{eq:gamma_est_bound_low_dim}.
\end{proof}

\subsection{Proof of Theorem \ref{thm_hd_ell2}}\label{proof:thm_hd_ell2}
 The following lemma provides the $\ell_2$ error bound for general quadratic group lasso problem. We defer the proof of Lemma \ref{g_lasso} to Appendix \ref{proof::g_lasso}.
\begin{lemma}\label{g_lasso}
	For general quadratic group lasso problem
	\begin{equation*}
	\hgamma = \arg\min_{\bgamma \in \mathbb{R}^{pL_n}} \frac{1}{2}\bgamma^{\top}\mathbf{A}\bgamma - \bb^{\top}\bgamma + \lambda\sum_{j=1}^p\twonorm{\bgamma_{j}},
	\end{equation*}
	if the following two conditions hold
	\begin{enumerate}
	 \item $\mathbf{A}$ satisfies the restrictive eigenvalue condition with parameter $\zeta$: for any $\bxi \in \mathbb{R}^{pL_n}$ such that $\onenorm{\bxi}\leq 4\sqrt{sL_n}\twonorm{\bxi}$, it holds that
	\begin{equation*}
	\bxi^{\top}\mathbf{A}\bxi\geq \zeta \twonorm{\bxi}.
	\end{equation*}
	\item for any $\cgamma \in \mathbb{R}^{pL_n}$ such that $\cgamma_{(j)} = \boldsymbol{0}$ for $j \in S^c$ and
	\begin{equation}\label{em_process}
	\max_{1\leq j\leq p}\twonorm{(\mathbf{A}\cgamma - \bb)_{(j)}}\leq \frac{\lambda}{2}.
	\end{equation}
	\end{enumerate}
	then we have
	\begin{equation*}
	\twonorm{\hgamma - \cgamma} \leq \frac{12\sqrt{s}\lambda}{\zeta}\quad \text{and}\quad \onenorm{\hgamma - \cgamma} \leq \frac{48s\sqrt{L_n}\lambda}{\zeta}.
	\end{equation*}
\end{lemma}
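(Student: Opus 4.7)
\textbf{Proof plan for Lemma \ref{g_lasso}.} My plan is to follow the standard ``basic inequality $\Rightarrow$ cone condition $\Rightarrow$ restricted eigenvalue'' template, adapted to the group-sparse setting. Let $\bdelta = \hgamma - \cgamma$, with group-blocks $\bdelta_{(j)}$. First I would use the optimality of $\hgamma$ in the quadratic-plus-penalty objective to write the basic inequality
\begin{equation*}
\tfrac12 \hgamma^{\top}\mathbf{A}\hgamma - \bb^{\top}\hgamma + \lambda\sum_{j=1}^p \twonorm{\hgamma_{(j)}}
\;\le\; \tfrac12 \cgamma^{\top}\mathbf{A}\cgamma - \bb^{\top}\cgamma + \lambda\sum_{j=1}^p \twonorm{\cgamma_{(j)}},
\end{equation*}
and expand in $\bdelta$ to produce the algebraic identity
\begin{equation*}
\tfrac12 \bdelta^{\top}\mathbf{A}\bdelta \;\le\; -\bdelta^{\top}(\mathbf{A}\cgamma - \bb) + \lambda \sum_{j=1}^p \bigl(\twonorm{\cgamma_{(j)}} - \twonorm{\hgamma_{(j)}}\bigr).
\end{equation*}

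Next I would control the linear term by a group-wise Cauchy--Schwarz, $|\bdelta^{\top}(\mathbf{A}\cgamma-\bb)| \le \sum_j \twonorm{\bdelta_{(j)}}\,\twonorm{(\mathbf{A}\cgamma - \bb)_{(j)}}$, invoke hypothesis (2) to bound each group factor by $\lambda/2$, and exploit $\cgamma_{(S^c)} = \boldsymbol{0}$ together with the reverse triangle inequality to get $\twonorm{\cgamma_{(j)}} - \twonorm{\hgamma_{(j)}} \le \twonorm{\bdelta_{(j)}}$ on $S$ and $= -\twonorm{\bdelta_{(j)}}$ on $S^c$. Combining these and using $\bdelta^{\top}\mathbf{A}\bdelta\ge 0$ yields the \emph{group cone condition}
\begin{equation*}
\sum_{j \in S^c} \twonorm{\bdelta_{(j)}} \;\le\; 3 \sum_{j \in S} \twonorm{\bdelta_{(j)}},
\end{equation*}
and hence $\sum_{j=1}^p \twonorm{\bdelta_{(j)}} \le 4\sum_{j\in S}\twonorm{\bdelta_{(j)}} \le 4\sqrt{s}\,\twonorm{\bdelta}$ by Cauchy--Schwarz on $S$.

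To connect this with the stated restricted eigenvalue condition (whose cone is phrased in terms of $\onenorm{\bdelta}$), I would use the block-wise bound $\onenorm{\bdelta_{(j)}} \le \sqrt{L_n}\,\twonorm{\bdelta_{(j)}}$, giving $\onenorm{\bdelta} \le \sqrt{L_n}\sum_{j=1}^p \twonorm{\bdelta_{(j)}} \le 4\sqrt{sL_n}\,\twonorm{\bdelta}$. So $\bdelta$ lies in the cone from hypothesis (1), and the RE condition applies. Plugging the RE lower bound on $\bdelta^{\top}\mathbf{A}\bdelta$ into the basic inequality (which, after the group-cone manipulation above, has the form $\tfrac12 \bdelta^{\top}\mathbf{A}\bdelta \le \tfrac{3\lambda}{2}\sum_{j\in S}\twonorm{\bdelta_{(j)}} \le \tfrac{3\lambda\sqrt{s}}{2}\twonorm{\bdelta}$) and dividing through by $\twonorm{\bdelta}$ delivers the $\ell_2$ error bound $\twonorm{\bdelta} \lesssim \sqrt{s}\lambda/\zeta$. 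The $\ell_1$ bound then follows immediately by inserting this into $\onenorm{\bdelta} \le 4\sqrt{sL_n}\,\twonorm{\bdelta}$.

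The only genuinely delicate step is bookkeeping: one must carry the group structure through the reverse-triangle step cleanly (so that the penalty difference really contributes $+\twonorm{\bdelta_{(j)}}$ on $S^c$ after combining with the $\lambda/2$ from the Cauchy--Schwarz bound, producing the factor $3$ in the cone) and then match the constants $12$ and $48$ in the statement --- I would expect some slack in these absolute constants depending on how one bundles the $\tfrac12$ factors, but the argument is otherwise routine. A minor point worth flagging: the RE hypothesis as printed reads $\bxi^{\top}\mathbf{A}\bxi \ge \zeta\twonorm{\bxi}$, which I will interpret as the standard $\zeta\twonorm{\bxi}^2$; otherwise dimensional homogeneity with the conclusion $\twonorm{\bdelta}\lesssim \sqrt{s}\lambda/\zeta$ fails.
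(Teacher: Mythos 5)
Your proposal is correct and follows essentially the same route as the paper's proof: basic inequality from optimality of $\hgamma$, group-wise Cauchy--Schwarz plus hypothesis (2) to get the group cone condition with constant $3$, the block inequality $\onenorm{\bdelta_{(j)}}\le\sqrt{L_n}\twonorm{\bdelta_{(j)}}$ to verify membership in the RE cone, then the RE lower bound to close. Your bookkeeping is in fact slightly tighter (you carry the $S$/$S^c$ split into the final step and get constant $3$ rather than $12$ in the $\ell_2$ bound; the paper uses the cruder global triangle inequality and pays an extra factor of $4$), and you correctly flag that the printed RE hypothesis $\zeta\twonorm{\bxi}$ must be read as $\zeta\twonorm{\bxi}^2$, which is indeed how the paper uses it.
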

\begin{lemma}\label{lemma_hd_inf}
	Under conditions \cond{1}-\cond{5}, let $\bnu\in \mathbb{R}^{pL_n}$ be a fixed vector with $\bnu_{(S^c)} = \boldsymbol{0}$, then for any $\vartheta>0$ we have
	\begin{equation*}
		\max_{1\leq j \leq p}\twonorm{(\bD_n\tgamma - \bD\tgamma)_{(j)}}\lesssim \twonorm{\bnu}\left(\sqrt{\frac{L_n\log p}{n}}+a_nL_n s\sqrt{\frac{\log p}{n}}\right),
	\end{equation*}
holds with probability at least $1-L_np^{-\vartheta} - L_nsp^{-\vartheta} - sp^{-\vartheta L_n}$.
\end{lemma}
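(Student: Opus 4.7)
The plan is to carry out a fine-grained decomposition of $\tB_i = (\bX_i - \hmu(U_i))\otimes \bB_i$ into a population piece $\bar{\tB}_i = (\bX_i - \bmu(U_i))\otimes \bB_i$ and a plug-in error piece $\bDelta_i = (\bmu(U_i) - \hmu(U_i))\otimes \bB_i$. Expanding $\tB_i \tB_i^\top$ and subtracting $\bD = \E[\bar{\tB}\bar{\tB}^\top]$ produces four terms:
\begin{equation*}
\bD_n - \bD = \underbrace{\tfrac{1}{2n}\sum_i \bar{\tB}_i\bar{\tB}_i^\top - \E[\bar{\tB}\bar{\tB}^\top]}_{\bR_1} + \underbrace{\tfrac{1}{2n}\sum_i \bar{\tB}_i\bDelta_i^\top}_{\bR_2} + \underbrace{\tfrac{1}{2n}\sum_i \bDelta_i\bar{\tB}_i^\top}_{\bR_3} + \underbrace{\tfrac{1}{2n}\sum_i \bDelta_i\bDelta_i^\top}_{\bR_4}.
\end{equation*}
Since the target vector (here denoted $\bnu$, playing the role of $\tgamma$) is supported on $S$, the group sum $(\bR\bnu)_{(j)} = \sum_{k\in S}\bR_{(jk)}\bnu_{(k)}$ involves only $s$ blocks, and I will bound $\twonorm{(\bR_\ell\bnu)_{(j)}}$ for each $\ell$ separately, followed by a union bound over $j\in\{1,\ldots,p\}$.

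For the main stochastic term $\bR_1\bnu$, each entry of $(\bR_1\bnu)_{(j)}$ is a centered sum of products $W_{ij}W_{ik}B_\ell(U_i)\bB_i^\top \bnu_{(k)}$, with $W_{ij}=X_{ij}-\mu_j(U_i)$ conditionally Gaussian given $U_i$. Using Lemma \ref{lemma:normal_moment_bound} to verify the exponential-moment hypothesis of the Bernstein inequality (Lemma \ref{bern_ineq}), and exploiting $\max_k B_k^*\leq 1$ together with $\E[(B_k^*)^2]\asymp L_n^{-1}$, I can bound each coordinate. Converting a coordinatewise bound into the $\ell_2$-block bound is done via a standard $1/2$-net argument on the unit sphere of $\mathbb{R}^{L_n}$ (adding a $\log L_n$ factor absorbed into constants), and a union bound over $j\in\{1,\ldots,p\}$ gives $\max_j \twonorm{(\bR_1\bnu)_{(j)}}\lesssim \twonorm{\bnu}\sqrt{L_n\log p/n}$ on an event of probability $1-sp^{-\vartheta L_n}$ (the $s$ comes from the aggregation in $k\in S$).

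For $\bR_2,\bR_3,\bR_4$, the plan is to extract the size of $\hmu-\bmu$ via Proposition \ref{pro_mean_bound}, which supplies the high-probability bound $\sup_{u}\infnorm{\hmu(u)-\bmu(u)}\lesssim a_n$. Conditional on this event, I can bound $|(\mu_j-\hmu_j)(U_i)|$ uniformly by $a_n$, so that $(\bR_4)_{(jk)}\preceq a_n^2 \cdot \tfrac{1}{2n}\sum_i \bB_i\bB_i^\top$, and Lemma \ref{lemma_Bnorm_bound} (together with $\twonorm{\E[\bB\bB^\top]}=O(1)$) controls the remaining empirical operator norm. The cross terms $\bR_2,\bR_3$ are handled by another Bernstein application on $\tfrac{1}{n}\sum_i W_{ij}\bB_i\bB_i^\top$ multiplied by a scalar of order $a_n$; summing over $k\in S$ produces the factor $s$. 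Combining these pieces, $\max_j\twonorm{((\bR_2+\bR_3+\bR_4)\bnu)_{(j)}}\lesssim \twonorm{\bnu}\,a_n L_n s \sqrt{\log p/n}$.

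The main obstacle will be Step 2, the Bernstein bound on $\bR_1\bnu$: the summands are products of two conditionally Gaussian random variables (hence sub-exponential rather than sub-Gaussian), and the B-spline scaling has to be tracked carefully so that the final $\ell_2$-block rate is $\sqrt{L_n\log p/n}$ rather than the naive $L_n\sqrt{\log p/n}$. This is exactly the ``fine-grained decomposition of $\bD_n-\bD$'' that the authors flag before Theorem \ref{thm_low_dim_ell2}. A secondary subtlety is that $\hmu$ depends on the same samples as $\bD_n$; this is sidestepped by conditioning on the high-probability event from Proposition \ref{pro_mean_bound} and using only the uniform sup-norm bound $a_n$, so that $\hmu$ enters only through a deterministic worst-case magnitude. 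The stated failure probability $1-L_n p^{-\vartheta}-L_n s p^{-\vartheta}-sp^{-\vartheta L_n}$ then arises naturally from the two Bernstein union bounds plus the mean-function concentration event.
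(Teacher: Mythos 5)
Your high-level decomposition $\tB_i = \bar{\tB}_i + \bDelta_i$ and the resulting four-term split of $\bD_n - \bD$ mirror the paper's own decomposition, and your treatment of $\bR_1$ and $\bR_4$ is sound in spirit. (One bookkeeping item for $\bR_1$: $W_{ij} = X_{ij} - \mu_j(U_i)$ is Gaussian given $(U_i,Y_i)$ but with non-zero conditional mean $\pm\tfrac{1}{2}(\mu_{1j}(U_i)-\mu_{2j}(U_i))$; the paper avoids carrying this offset through the Bernstein moment check by further splitting $\bX_i - \bmu(U_i)$ into $\bX_i - \bmu_{Y_i}(U_i)$ plus a deterministic function of $U_i$, yielding the sub-decomposition $\bI^1_{11},\ldots,\bI^1_{14}$. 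You can make your version work but the moments get messier.)

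The real gap is in $\bR_2,\bR_3$. You propose to ``use only the uniform sup-norm bound $a_n$'' and run ``another Bernstein application on $\tfrac{1}{n}\sum_i W_{ij}\bB_i\bB_i^\top$ multiplied by a scalar of order $a_n$,'' but $(\hmu_k - \mu_k)(U_i)$ sits \emph{inside} the empirical sum and is data-dependent, so it cannot be pulled out as a constant. If you try to extract it term-by-term via Cauchy--Schwarz you get
\begin{equation*}
\twonorm{(\bR_2\bnu)_{(j)}} \leq \frac{1}{2n}\sum_{i}|W_{ij}|\,\twonorm{\bB_i}\,\LRabs{(\bmu-\hmu)_S(U_i)^{\top}\tnu_S(U_i)} \lesssim L_n\sqrt{s}\,a_n\twonorm{\bnu},
\end{equation*}
which is larger than the target $a_n L_n s\sqrt{\log p/n}\,\twonorm{\bnu}$ by a diverging factor of order $\sqrt{n/(s\log p)}$ — you have lost all the concentration in $n$. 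Conditioning on the event from Proposition \ref{pro_mean_bound} only gives you membership of $\hmu-\bmu$ in a ball of radius $a_n$; it does not make $\hmu$ a fixed function, and those two things are not interchangeable in a Bernstein argument. What is actually needed, and what the paper does in Lemmas~\ref{lemma_matrix_bound_4}--\ref{lemma_matrix_bound_5}, is to decompose $\hmu_k - \mu_k$ into the fixed spline-approximation bias $\tmu_k - \mu_k$ (uniformly $\lesssim L_n^{-d}$, handled by direct Bernstein) plus the coefficient estimation error $(\halpha_k - \talpha_k)^\top\bB(\cdot)$, a data-dependent point in the \emph{finite-dimensional} ball $\{\boldsymbol{\zeta}:\twonorm{\boldsymbol{\zeta}}\leq a_n\}$, and then uniformly bound the empirical process $\boldsymbol{\zeta}\mapsto\tfrac{1}{n}\sum_i W_{ij}(\boldsymbol{\zeta}^\top\bB_i)(\cdot)$ over that ball by a covering/chaining argument. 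The net has $n^{O(sL_n)}$ points, which is precisely where the $sp^{-\vartheta L_n}$ term in the lemma's failure probability comes from; without the chaining step your bound neither achieves the claimed rate nor produces that probability term.

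A minor inaccuracy: a $1/2$-net of $\mathbb{S}^{L_n-1}$ has $5^{L_n}$ points, not an $O(\log L_n)$-sized set, so the cost of the net shows up in the exponent of the tail bound, not as a logarithmic constant. The paper sidesteps the net entirely with the crude $\twonorm{v}\leq\sqrt{L_n}\infnorm{v}$ and a union over $L_n$ coordinates; both routes give the same rate, just be precise about which you use when accounting for the failure probability.
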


\begin{proof}[Proof of Theorem \ref{thm_hd_ell2}]
    According to Lemma \ref{g_lasso}, it suffices to show the restrictive eigenvalue condition of $\bD_n$ and the inequality (\ref{em_process}). For any $\bxi \in \mathbb{R}^{pL_n}$ such that $\onenorm{\bxi}\leq 4\sqrt{sL_n}\twonorm{\bxi}$, we have
	\begin{equation*}
		\begin{aligned}
			\bxi^{\top}\bD_n\bxi &= \bxi^{\top}\E[\tB\tB^{\top}]\bxi + \bxi^{\top}(\bD_n-\bD)\bxi\\
			&\geq M_1\lambda_0\twonorm{\bxi}^2 - \onenorm{\bxi}\infnorm{(\bD_n-\bD)\bxi}\\
			&\geq M_1\lambda_0\twonorm{\bxi}^2 - \onenorm{\bxi}^2\left|\bD_n-\bD\right|_{\infty}\\
			&\geq \left(M_1\lambda_0 - 4s L_n\left|\bD_n-\bD\right|_{\infty}\right)\twonorm{\bxi}^2.
		\end{aligned}
	\end{equation*}
	By tracing the proof of Lemma \ref{lemma_hd_inf}, we can guarantee $s L_n|\bD_n-\bD|_{\infty} = o_{\mathbb{P}}(1)$. It implies that there exists some positive constant $\zeta$ such that,
	\begin{align*}
		\bxi^{\top}\bD_n\bxi \geq \zeta \twonorm{\bxi}^2.
	\end{align*}
	Hence we have verified the restrictive eigenvalue condition. Recall the approximation coefficient $\tgamma_{(S)} = \bD_{(SS)}^{-1}\bb_{(S)}$ and $\tgamma_{(S^c)} = \boldsymbol{0}$, then we have
	\begin{equation}\label{em_1}
		(\bD_n\tgamma - \bb_n)_{(S)} = (\bD_n)_{(SS)}\bD_{(SS)}^{-1}\bb_{(S)} - (\bb_n)_{(S)},
	\end{equation}
	and
	\begin{equation}\label{em_2}
		(\bD_n\tgamma - \bb_n)_{(S^c)} = (\bD_n)_{(S^cS)}\bD_{(SS)}^{-1}\bb_{S} - (\bb_n)_{(S^c)}.
	\end{equation}
	In addition, similar to \eqref{eq:b_bound}, we can verify $\twonorm{\bb_{(S)}} \lesssim \delta_s$. Together with Lemma \ref{lemma_eigen}, we have
	\begin{align*}
	    \twonorm{\tgamma_{(S)}}&\leq \twonorm{\bD_{(SS)}^{-1}}\twonorm{\bb_{(S)}}\lesssim \delta_s.
	\end{align*}
	Then from (\ref{em_1}), Lemma \ref{lemma_hd_inf} and \ref{lemma:bn_concentration}, for any $j \in S$
	\begin{equation*}
		\begin{aligned}
			\twonorm{(\bD_n\tgamma - \bb_n)_{(j)}} &\leq \max_{j\in S}\LRtwonorm{\left[(\bD_n)_{(j,S)} - \bD_{(j,S)}\right]\tgamma_{(S)}} + \sqrt{L_n}\LRinfnorm{\bb_{(S)} - (\bb_n)_{(S)}}\\
			&\lesssim \sqrt{\frac{L_n\log p}{n}}+a_nL_ns\sqrt{\frac{\log p}{n}} + \sqrt{\frac{L_n\log p}{n}} + a_n\\
			&\lesssim \sqrt{\frac{L_n\log p}{n}} + a_nL_ns\sqrt{\frac{\log p}{n}} + a_n,
		\end{aligned}
	\end{equation*}
	holds with probability at least $1-L_np^{-\vartheta} - L_nsp^{-\vartheta} - sp^{-\vartheta L_n}$. Next we will derive the bound for $j \in S^c$. From (\ref{em_2}), we claim that for any $j \in S^c$
	\begin{equation*}
		\begin{aligned}
			\twonorm{(\bD_n\tgamma - \bb_n)_{(j)}}&\leq \LRtwonorm{(\bD_{(S^cS)}\tgamma_{(S)})_{(j)} - \bb_{(j)}} + \LRtwonorm{\left[(\bD_n)_{(j,S)} - \bD_{(j,S)}\right]\tgamma_{(S)}}\\
			&+ \sqrt{L_n}\LRinfnorm{\bb_{(S^c)} - (\bb_n)_{(S^c)}}.
		\end{aligned}
	\end{equation*}
	Using the optimality of $\btheta^{*}(U)$, we have
	\begin{equation*}
		\E\left\{\tB_{(S^c)}\left[Z - \sum_{j\in S}(X_j  - \mu_{j}(U))\theta_{j}^{*}(U)\right]\right\} = \boldsymbol{0}.
	\end{equation*}
	Together with $\widetilde{\theta}_j(U) = \bB(U)^{\top}\tgamma_{(j)}$, $\bb_{(S^c)} = \E[\tB_{(S^c)}Z]$ and $\tB_{(j)} = (X_j - \mu_j(U))\bB(U)$, we also have
	\begin{equation*}
		\begin{aligned}
			\bD_{(S^cS)}\tgamma_{(S)} &= \E\left\{\tB_{(S^c)}\tB_{(S)}^{\top}\tgamma_{(S)}\right\}\\
			& = \E\left\{\tB_{(S^c)}\sum_{j\in S}(X_j  - \mu_{j}(U))\bB(U)^{\top}\tgamma_{(j)}\right\}\\
			&=\E\left\{\tB_{(S^c)}\left[\sum_{j\in S}(X_j  - \mu_{j}(U))\widetilde{\theta}_j(U) - Z\right] \right\} + \bb_{(S^c)}\\
			&=\E\left\{\tB_{(S^c)}\left[\sum_{j\in S}(X_j  - \mu_{j}(U))(\widetilde{\theta}_j(U) - \theta_j^*(U)) \right] \right\} + \bb_{(S^c)}.
		\end{aligned}
	\end{equation*}
	Let $\boldsymbol{c}_{j,S}(U) = \E[(X_j - \mu_{j}(U))(\bX - \bmu(U))_{S}|U]$, then for $j\in S^c$, it holds that
	\begin{align*}
	    (\bD_{(S^cS)}\tgamma_{(S)})_{(j)} - \bb_{(j)} = \E\left[\tB_{(j)}(\bX - \bmu(U))_{S}^{\top}(\btheta^{*}(U) - \tbtheta(U))_{S}\right].
	\end{align*}
	For any $\bnu \in \mathbb{S}^{L_n -1}$, 
	\begin{align*}
	    &\LRabs{\E\LRm{\bnu^{\top}\tB_{(j)}(\bX - \bmu(U))_{S}^{\top}(\btheta^{*}(U) - \tbtheta(U))_{S}}}\\
	    &= \LRabs{\E\LRm{\bnu^{\top}\bB (X_j - \mu_j(U))(\bX - \bmu(U))_{S}^{\top}(\btheta^{*}(U) - \tbtheta(U))_{S}}}\\
	    &= \LRabs{\E\LRm{\bnu^{\top}\bB \boldsymbol{c}_{j,S}(U)^{\top}(\btheta^{*}(U) - \tbtheta(U))_{S}}}\\
	    &\leq \sup_{u\in [0,1]}\left\{\twonorm{\boldsymbol{c}_{j,S}(u)}\twonorm{(\btheta^{*}(u) - \tbtheta(u))_{S}}\right\}\E\LRm{|\bnu^{\top}\bB|}\\
	    &\lesssim \delta_s \sqrt{s}L_n^{-d} \LRs{\E\LRm{\LRs{\bnu^{\top}\bB}^2}}^{1/2}\\
	    &\lesssim \sqrt{s}L_n^{-d},
	\end{align*}
	where we used Theorem \ref{thm_high_appro_error} and
	\begin{align*}
	    \sup_{u\in [0,1]}\twonorm{\boldsymbol{c}_{j,S}(U)} &= \sup_{u\in [0,1]}\twonorm{(\mu_{1j}(u) - \mu_{j}(u))(\bmu_1(u) - \bmu(u))_{S}}\\
	    &\lesssim \sup_{u\in [0,1]}\twonorm{(\bmu_1(u) - \bmu_2(u))_{S}} \leq \delta_s. 
	\end{align*}
	Hence we have for any $j \in S^c$,
	\begin{equation*}
		\LRtwonorm{(\bD_{(S^cS)}\tgamma_{(S)})_{(j)} - \bb_{(j)}}\leq \sup_{\bnu \in \mathbb{S}^{L_n-1}}\LRabs{\bnu^{\top}\LRs{(\bD_{(S^cS)}\tgamma_{(S)})_{(j)} - \bb_{(j)}}} \lesssim \delta_s \sqrt{s}L_n^{-d}.
	\end{equation*}
	Then applying Lemma \ref{lemma_hd_inf} and \ref{lemma:bn_concentration}, we claim that
	\begin{equation*}
		\max_{j\in S^c}\twonorm{(\bD_n\tgamma - \bb_n)_{(j)}}\lesssim  \sqrt{\frac{L_n\log p}{n}} + a_nL_ns\sqrt{\frac{\log p}{n}} + a_n + \sqrt{s} L_n^{-d}
	\end{equation*}
	holds with probability at least $1 - 10L_np^{-\vartheta}$.
\end{proof}

\section{Deferred proofs of Section \ref{appen:pre} and \ref{sec::main_proof}}\label{sec:main_lemma_proofs}
\subsection{Proof of Lemma \ref{lemma_Bnorm_bound}} \label{proof::lemma_Bnorm_bound}
\begin{proof}
		Let $\mathbb{S}^{L_n-1}$ be the unit sphere in $\mathbb{R}^{L_n}$, we denote the $\frac{1}{8}$-covering of $\mathbb{S}^{L_n-1}$ by $\{\bnu_1,...,\bnu_K\}$ with $K\leq 17^{L_n}$. Let $\bQ = \sum_{i=1}^n\bB_i\bB_i^{\top}/n - \E[\bB\bB^{\top}]$, then we have
		\begin{equation*}
			\twonorm{\bQ} = \sup_{\bnu \in \mathbb{S}^{L_n-1}}|\bnu^{\top}\bQ\bnu|.
		\end{equation*}
		Based on the definition of covering set, for any $\bnu \in \mathbb{S}^{L_n-1}$, there exists some $1\leq k\leq K$ such that $\twonorm{\bnu -\bnu_k} \leq 1/8$. It follows that
		\begin{equation*}
			\begin{aligned}
				|\bnu^{\top}\bQ\bnu| &\leq |\bnu_k^{\top}\bQ\bnu_k| + 2|\bnu_k^{\top}\bQ(\bnu_k -\bnu)| + |(\bnu_k -\bnu)^{\top}\bQ(\bnu_k -\bnu)|\\
				&\leq |\bnu_k^{\top}\bQ\bnu_k| + \frac{1}{4}\twonorm{\bQ} + \frac{1}{64}\twonorm{\bQ}\\
				&\leq |\bnu_k^{\top}\bQ\bnu_k| + \frac{1}{2}\twonorm{\bQ}.
			\end{aligned}
		\end{equation*}
		Thus we have
		\begin{equation}
			\twonorm{\bQ} \leq 2\max_{1\leq k\leq K}|\bnu_k^{\top}\bQ\bnu_k| = 2\max_{1\leq k\leq K}\left|\frac{1}{n}\sum_{i=1}^n(\bnu_k^{\top}\bB_i)^2 - \E[(\bnu_k^{\top}\bB)^2]\right|.
		\end{equation}
		Since $\twonorm{\bB_i^{*}}^2 = \sum_{k=1}^{L_n}(B_{k}^{*}(U_i))^2 \leq \sum_{k=1}^{L_n}B_{k}^{*}(U_i) = 1$, together with \eqref{int_bound}, we have
		\begin{equation*}
			\begin{aligned}
				\E\LRm{(\bnu_k^{\top}\bB_i^{*})^4\exp\{\eta(\bnu_k^{\top}\bB_i^{*})^2\}} &\leq e^{\eta}\E\LRm{(\bnu_k^{\top}\bB_i^{*})^2}\lesssim L_n^{-1} \twonorm{\bnu_k}^2 = L_n^{-1},
			\end{aligned}
		\end{equation*}
		together with Lemma \ref{bern_ineq} we claim that
		\begin{equation*}
			\Prob\LRs{\max_{1\leq k\leq K}\left|\frac{1}{n}\sum_{i=1}^n(\bnu_k^{\top}\bB_i)^2 - \E[(\bnu_k^{\top}\bB)^2]\right|\geq CL_n\sqrt{\frac{L_n\log n}{nL_n}}}\leq K n^{-\vartheta L_n^3}\leq n^{-\vartheta L_n}.
		\end{equation*}
		Then the conclusion follows immediately.
	\end{proof}
\subsection{Proof of Lemma \ref{lemma_eigen}}\label{proof::lemma_eigen}
	\begin{proof}[Proof of Lemma \ref{lemma_eigen}]
	Let $\bC(u) = \E[(\bX-\bmu(u))(\bX-\bmu(u))^{\top}]$, and then it holds
	\begin{align*}
		\bC(u) &= \frac{1}{2}\LRs{\E\LRm{(\bX-\bmu(u))(\bX-\bmu(u))^{\top}| Y=1} + \E\LRm{(\bX-\bmu(u))(\bX-\bmu(u))^{\top}| Y=0}}\\
		&= \bSigma(u)+\frac{1}{4}(\bmu_{1}(u)-\bmu_{2}(u))(\bmu_{1}(u)-\bmu_{2}(u))^{\top}.
	\end{align*}
	From conditions \cond{1} and \cond{4} we have
	\begin{equation}
		\lambda_0\leq \lambda_{\min}(\bC(u))\leq \lambda_{\max}(\bC(u))\leq \lambda_1+\frac{1}{4}\delta_p,
	\end{equation}
	holds for any $u \in [0,1]$. In addition, we notice that
	\begin{align*}
	    \E[\tB\tB^{\top}] &= \E\LRm{\LRs{(\bX-\bmu(U))(\bX-\bmu(U))^{\top}}\otimes \LRs{\bB\bB^{\top}}}\\
	    &=\E\LRm{\bC(U)\otimes \LRs{\bB\bB^{\top}}}.
	\end{align*}
	Then for any $\bEta = \trans{(\bEta_{(1)}^{\top},\cdots,\bEta_{(p)}^{\top})} \in \mathbb{R}^{L_np}$ with $\twonorm{\bEta} = 1$, we have
	\begin{equation*}
		\begin{aligned}
			\trans{\bEta} \E[\tB\tB^{\top}] \bEta &= \E\left(\sum_{j=1}^{p}(X_j-\mu_{j}(U))\trans{\bB}\bEta_j\right)^2 \\
			&= \E\LRm{\LRs{\bB^{\top}\bEta_{(1)},\cdots,\bB^{\top}\bEta_{(p)}} \bC(U) \LRs{\bB^{\top}\bEta_{(1)},\cdots,\bB^{\top}\bEta_{(p)}}^{\top}}\\
			&\geq \inf_{u \in [0,1]} \lambda_{\min}\LRs{\bC(u)}\E\left[\sum_{j=1}^p \trans{\bEta_{(j)}}\bB\trans{\bB}\bEta_{(j)}\right]\\
			&\geq \lambda_0\lambda_{\min}(\E[\bB\bB^{\top}]).
		\end{aligned}
	\end{equation*}
	Similarly, we have
	\begin{equation*}
		\trans{\bEta} \E[\tB\tB^{\top}] \bEta\leq \LRs{\lambda_1 + \frac{\delta_p}{4}}\lambda_{\max}(\E[\bB\bB^{\top}]).
	\end{equation*}
	Then the result follows from $\lambda_{\min}(\E[\bB\bB^{\top}])= O(1)$ and $\lambda_{\max}(\E[\bB\bB^{\top}])= O(1)$ (see Section \ref{appen:Bspline}).
\end{proof}
	
\subsection{Proof of Lemma \ref{lemma:tB_matrix_concentration}}\label{proof:lemma:tB_matrix_concentration}
To prove Lemma \ref{lemma:tB_matrix_concentration}, we impose the following five lemmas on the concentration inequalities of random matrices. The proofs can be found in Appendix \ref{proof:lemma_matrix_bound_1} - \ref{proof:lemma_matrix_bound_5}.

\begin{lemma}\label{lemma_matrix_bound_1}
	Let $\bZ_i = \bZ(U_i) = (\bX_i-\bmu_{k}(U_i))\otimes \bB_i$, then under condition \cond{1}-\cond{4}, we have for any $\vartheta>0$ and $k=1,2$
	\begin{equation*}
		\LRtwonorm{\frac{1}{n}\sum_{i\in \mathcal{I}_k}\left\{\bZ_i\bZ_i^{\top} - \E\left[\bZ_i\bZ_i^{\top}|Y_i=k\right]\right\}} \lesssim L_n\sqrt{\frac{p\log n}{n}},
	\end{equation*}
	and
	\begin{align*}
	    \LRtwonorm{\frac{1}{n}\sum_{i\in \mathcal{I}_k}\left\{\bZ_i\bZ_i^{\top} - \E\left[\bZ_i\bZ_i^{\top}|Y_i=k\right]\right\}\tgamma} \lesssim \sqrt{\frac{pL_n\log n}{n}},
	\end{align*}
	hold with probability at least $1-n^{-\vartheta p L_n}$.
\end{lemma}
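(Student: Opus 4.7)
The plan is to prove both assertions via a $1/8$-covering argument on $\mathbb{S}^{pL_n-1}$ (cardinality at most $17^{pL_n}$), following the template of Lemma \ref{lemma_Bnorm_bound}, combined with a suitably rescaled application of Bernstein's inequality (Lemma \ref{bern_ineq}). For a fixed $\bnu=(\bnu_{(1)}^\top,\ldots,\bnu_{(p)}^\top)^\top\in \mathbb{S}^{pL_n-1}$, writing $\bB_i=\sqrt{L_n}\bB^{*}_i$ and $a_j(U_i):=\bnu_{(j)}^\top\bB^{*}(U_i)$, one has
\begin{equation*}
\bnu^\top\bZ_i=\sqrt{L_n}(\bX_i-\bmu_k(U_i))^\top\boldsymbol{a}(U_i),\qquad \bZ_i^\top\tgamma=(\bX_i-\bmu_k(U_i))^\top\tbtheta(U_i).
\end{equation*}
Since $\twonorm{\bB^{*}(U_i)}\leq 1$ we get $\twonorm{\boldsymbol{a}(U_i)}\leq 1$ pointwise, while condition \cond{4} together with Theorem \ref{thm_appro_error} ensures $\sup_u\twonorm{\tbtheta(u)}=O(\delta_p)$. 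Conditionally on $U_i$ and $Y_i=k$, both scalars are centred Gaussians with variances $\sigma_a^2(U_i)\leq \lambda_1$ and $\sigma_\theta^2(U_i)\leq \lambda_1\delta_p^2$, uniformly in $U_i$.

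For the first assertion, let $\bQ_n:=\frac{1}{n}\sum_{i\in\mathcal{I}_k}\{\bZ_i\bZ_i^\top-\E[\bZ_i\bZ_i^\top|Y_i=k]\}$ and $\zeta_i:=((\bX_i-\bmu_k(U_i))^\top\boldsymbol{a}(U_i))^2-\E[\,\cdot\,\mid Y_i=k]$, so $\bnu^\top\bQ_n\bnu=(L_n/n)\sum_i\zeta_i$. Given $U_i$, $\zeta_i$ is a centred rescaled $\chi_1^2$ of conditional variance at most $\lambda_1$, so taking $\phi=1/(4\lambda_1)=O(1)$ in Lemma \ref{lemma:normal_moment_bound} yields $\E[\zeta_i^2e^{\phi|\zeta_i|}\mid U_i]\lesssim \sigma_a^4(U_i)$. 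Integrating out $U_i$ using the B-spline identity $\E[(B^{*}_k(U))^2]\asymp L_n^{-1}$ gives $\E[\zeta_i^2e^{\phi|\zeta_i|}]\lesssim L_n^{-1}$, so Lemma \ref{bern_ineq} with $s_n^2\lesssim nL_n^{-1}$, $C_\phi=O(1)$ and $x=\sqrt{\vartheta'pL_n\log n}$ produces $|\sum_i\zeta_i|\lesssim\sqrt{np\log n}$ with probability $\geq 1-n^{-\vartheta'pL_n}$. Multiplying by $L_n/n$ delivers $|\bnu^\top\bQ_n\bnu|\lesssim L_n\sqrt{p\log n/n}$. The second assertion is analogous: set $\eta_i:=((\bX_i-\bmu_k(U_i))^\top\boldsymbol{a}(U_i))((\bX_i-\bmu_k(U_i))^\top\tbtheta(U_i))-\E[\,\cdot\,\mid Y_i=k]$, so $\bnu^\top\bQ_n\tgamma=(\sqrt{L_n}/n)\sum_i\eta_i$; Cauchy--Schwarz together with Lemma \ref{lemma:normal_moment_bound} gives $\E[\eta_i^2e^{\phi|\eta_i|}\mid U_i]\lesssim\sigma_a^2(U_i)\sigma_\theta^2(U_i)\leq\lambda_1^2\delta_p^2\twonorm{\boldsymbol{a}(U_i)}^2$, whence $\E[\eta_i^2e^{\phi|\eta_i|}]\lesssim L_n^{-1}$, and the same Bernstein step combined with the scaling factor $\sqrt{L_n}/n$ yields $|\bnu^\top\bQ_n\tgamma|\lesssim\sqrt{pL_n\log n/n}$.

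A union bound over the $17^{pL_n}$ net points (absorbed into $n^{-\vartheta pL_n}$ by choosing $\vartheta'$ slightly larger than $\vartheta$), combined with the standard net-to-sphere estimates $\twonorm{\bQ_n}\leq 2\max_k|\bnu_k^\top\bQ_n\bnu_k|$ and $\twonorm{\bQ_n\tgamma}\leq 2\max_k|\bnu_k^\top\bQ_n\tgamma|$, then delivers both conclusions. The main difficulty is pinning down the correct scale before invoking Bernstein: applied directly to $(\bnu^\top\bZ_i)^2$ or $(\bnu^\top\bZ_i)(\bZ_i^\top\tgamma)$, the sub-exponential constant $C_\phi$ would grow like $L_n$ or $\sqrt{L_n}$, costing a power of $L_n$ in the deviation and preventing the covering union bound from closing at the probability level $1-n^{-\vartheta pL_n}$. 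The remedy is to divide by $L_n$ (first assertion) or $\sqrt{L_n}$ (second assertion) before applying Bernstein so that $C_\phi=O(1)$, and to extract the additional $L_n^{-1}$ gain in the unconditional second moment from the B-spline identity $\E[(B^{*}_k(U))^2]\asymp L_n^{-1}$; only the combination of these two observations produces the target rates.
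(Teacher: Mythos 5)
Your proposal is correct and follows essentially the same approach as the paper's proof: a $1/8$-net over $\mathbb{S}^{pL_n-1}$ of size $17^{pL_n}$, the reformulation $\bnu^\top \bZ_i = \sqrt{L_n}(\bX_i-\bmu_k(U_i))^\top\boldsymbol{a}(U_i)$ with $\boldsymbol{a}(U_i)=\tnu(U_i)$ having $\ell_2$-norm at most $1$, the extraction of the $L_n^{-1}$ gain from $\E[\twonorm{\boldsymbol{a}(U_i)}^2]\lesssim L_n^{-1}$, the subexponential moment control via Lemma \ref{lemma:normal_moment_bound} with an $O(1)$ constant $\phi$, Bernstein's inequality (Lemma \ref{bern_ineq}) at scale $x\asymp\sqrt{pL_n\log n}$, and the final rescaling by $L_n/n$ (resp.\ $\sqrt{L_n}/n$) for the quadratic form (resp.\ matrix-vector product). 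The commentary at the end correctly identifies why the rescaling before Bernstein is the crux of the argument, matching the paper's factorization $L_n^{-1}\bnu^\top\bQ_i\bnu = [\tnu(U_i)^\top(\bX_i-\bmu_k(U_i))]^2 - \E[\cdot]$.
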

\begin{lemma}\label{lemma_matrix_bound_2}
	Under conditions \cond{1}-\cond{4}, for $k=1,2$ and any $\vartheta>0$, we have
	\begin{align*}
	    \LRtwonorm{\frac{1}{n}\sum_{i\in \mathcal{I}_k}\left[(\bX_i-\bmu_{k}(U_i))(\bmu_1(U_i) - \bmu_2(U_i))^{\top}\right]\otimes \left(\bB_i\bB_i^{\top}\right)} \lesssim L_n\sqrt{\frac{p\log n}{n}},\\
	    \LRtwonorm{\frac{1}{n}\sum_{i\in \mathcal{I}_k}\LRl{\left[(\bX_i-\bmu_{k}(U_i))(\bmu_1(U_i) - \bmu_2(U_i))^{\top}\right]\otimes \left(\bB_i\bB_i^{\top}\right)}\tgamma} \lesssim \sqrt{\frac{pL_n\log n}{n}},
	\end{align*}
	hold with probability at least $1-n^{-\vartheta L_np}$.
\end{lemma}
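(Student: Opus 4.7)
The plan is to exploit the Kronecker mixed-product identity
\[
[(\bX_i-\bmu_k(U_i))(\bmu_1(U_i)-\bmu_2(U_i))^{\top}] \otimes [\bB_i \bB_i^{\top}] = \bZ_i \boldsymbol{W}_i^{\top},
\]
where $\bZ_i = (\bX_i - \bmu_k(U_i)) \otimes \bB_i$ is the same vector appearing in Lemma \ref{lemma_matrix_bound_1} and $\boldsymbol{W}_i := (\bmu_1(U_i)-\bmu_2(U_i)) \otimes \bB_i$ is $U_i$-measurable with $\|\boldsymbol{W}_i\|_2 \lesssim \delta_p \sqrt{L_n}$. The centering is automatic: conditional on $U_i$, $\bZ_i$ has mean zero while $\boldsymbol{W}_i$ is deterministic, so $\E[\bZ_i\boldsymbol{W}_i^{\top}]=\boldsymbol{0}$ and no subtraction is required.

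For the operator-norm bound I will run the same $\varepsilon$-net argument used in Lemma \ref{lemma_matrix_bound_1}. Take a $1/4$-net $\mathcal{N}\subset \mathbb{S}^{pL_n-1}$ with $|\mathcal{N}| \le 9^{pL_n}$ and reduce the norm to $\sup_{\bu,\boldsymbol{v}\in\mathcal{N}}|n^{-1}\sum_{i\in\mathcal{I}_k}\xi_i(\bu,\boldsymbol{v})|$, where $\xi_i(\bu,\boldsymbol{v})=[\tilde u(U_i)^{\top}(\bX_i-\bmu_k(U_i))]\cdot[\tilde v(U_i)^{\top}(\bmu_1(U_i)-\bmu_2(U_i))]$ with $\tilde u(U)_j=\bu_{(j)}^{\top}\bB(U)$ and $\tilde v(U)_j=\boldsymbol{v}_{(j)}^{\top}\bB(U)$. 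Conditional on $U_i$, the first factor is centered Gaussian with variance $\le \lambda_1\|\tilde u(U_i)\|_2^2$, while the second is bounded by $\delta_p\|\tilde v(U_i)\|_2\le \delta_p\sqrt{L_n}$. The identity \eqref{int_bound} gives $\E[\|\tilde u(U)\|_2^2]\lesssim 1$ and the pointwise bound $\|\tilde v(U)\|_2^2\le L_n$, so $\E[\xi_i^2]\lesssim L_n$; Lemma \ref{lemma:normal_moment_bound} then furnishes the exponential moment $\E[\xi_i^2 e^{\phi|\xi_i|}]\lesssim L_n$ for an appropriate $\phi$. Plugging into Lemma \ref{bern_ineq} and union-bounding over $|\mathcal{N}|^2\le 9^{2pL_n}$ at deviation level $x\asymp\sqrt{pL_n\log n}$ yields the claimed rate $L_n\sqrt{p\log n/n}$ with probability at least $1-n^{-\vartheta pL_n}$.

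For the matrix-vector bound only a single net is needed, and this is what delivers the improvement by a factor $\sqrt{L_n}$. With $\tgamma$ held fixed, the scalar $\bu^{\top}\bZ_i\boldsymbol{W}_i^{\top}\tgamma$ becomes $\xi_i=[\tilde u(U_i)^{\top}(\bX_i-\bmu_k(U_i))]\cdot[\tbtheta(U_i)^{\top}(\bmu_1(U_i)-\bmu_2(U_i))]$, because $\tgamma^{\top}\boldsymbol{W}_i = \tbtheta(U_i)^{\top}(\bmu_1(U_i)-\bmu_2(U_i))$ by the very definition of $\tbtheta$. The decisive observation is that $|\tbtheta(u)^{\top}(\bmu_1(u)-\bmu_2(u))|$ is \emph{uniformly} $O(1)$: the population identity $(\bmu_1-\bmu_2)^{\top}\btheta^{*}\in(0,1)$ derived right above \eqref{relation} provides a constant bound on $|\btheta^{*\top}(\bmu_1-\bmu_2)|$, and a Cauchy–Schwarz argument together with Theorem \ref{thm_appro_error} gives $\E[\{(\tbtheta-\btheta^{*})^{\top}(\bmu_1-\bmu_2)\}^2]\lesssim \delta_p^2 pL_n^{-2d}=o(1)$ under \cond{4}. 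This collapses the variance to $\E[\xi_i^2]\lesssim 1$ and, together with Lemma \ref{lemma:normal_moment_bound}, yields $\E[\xi_i^2 e^{\phi|\xi_i|}]\lesssim 1$ with $\phi$ of constant order. Applying Bernstein with a single net of cardinality $9^{pL_n}$ and $x\asymp\sqrt{pL_n\log n}$ then produces the rate $\sqrt{pL_n\log n/n}$.

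The main obstacle is controlling the exponential moment of $\xi_i$ sharply enough for the matrix-vector step: one needs $|\tbtheta(U)^{\top}(\bmu_1(U)-\bmu_2(U))|$ to be bounded not merely in an $L_2$ sense but in a sense strong enough to permit a constant-order $\phi$ in Lemma \ref{bern_ineq}. This is handled by combining the pointwise bound on $\btheta^{*\top}(\bmu_1-\bmu_2)$ with a sup-norm spline approximation for $\tbtheta-\btheta^{*}$, obtainable from \cond{3} in the spirit of the uniform mean-function bound derived in Proposition \ref{pro_mean_bound}. A secondary technical hurdle is that the naive Bernstein constant $C_\phi=\phi+\phi^{-1}$ can inflate the rate by a factor $\sqrt{L_n}$ if $\phi$ must shrink with $L_n$; keeping $\phi$ of constant order throughout, as enabled by the uniform boundedness above, is precisely what allows the $\sqrt{L_n}$ saving in the matrix-vector bound relative to the operator-norm bound.
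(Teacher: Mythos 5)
Your decomposition is the same as the paper's: factor the Kronecker product as $\bZ_i\boldsymbol{W}_i^{\top}$, observe that conditionally on $U_i$ the first factor is a centered Gaussian and the second is deterministic and bounded, and then run Bernstein (Lemma~\ref{bern_ineq}) over an $\varepsilon$-net. For the matrix-vector bound you also correctly identify $\tgamma^{\top}\boldsymbol{W}_i = \tbtheta(U_i)^{\top}(\bmu_1(U_i)-\bmu_2(U_i))$ and the need for a \emph{uniform} $O(1)$ control on it, which the paper obtains from $\sup_u\twonorm{\tbtheta(u)}=O(1)$ under~\cond{4}.

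There is, however, a genuine gap in the normalization, and it affects the rate. You define $\tilde u(U)_j=\bu_{(j)}^{\top}\bB(U)$ with the \emph{scaled} basis $\bB=\sqrt{L_n}\bB^{*}$, which gives the pointwise bounds $\twonorm{\tilde u(U)}^2\le L_n$ and $\twonorm{\tilde v(U)}^2\le L_n$. Consequently, conditionally on $U_i$, the summand $\xi_i$ is Gaussian with variance as large as $\asymp L_n^2$ (for the operator-norm case) or $\asymp L_n$ (for the matrix-vector case). Lemma~\ref{lemma:normal_moment_bound} then requires $\phi\lesssim L_n^{-1}$ (resp.\ $\phi\lesssim L_n^{-1/2}$) for the exponential moment to stay under control, which forces $C_{\phi}=\phi+\phi^{-1}\gtrsim L_n$ (resp.\ $\gtrsim\sqrt{L_n}$). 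Tracking this through Lemma~\ref{bern_ineq} with $s_n^2 = n\E[\xi_i^2e^{\phi|\xi_i|}]\lesssim nL_n$ (resp.\ $\lesssim n$) and $x\asymp\sqrt{pL_n\log n}$ gives $L_n^2\sqrt{p\log n/n}$ for the operator norm and $L_n\sqrt{p\log n/n}$ for the matrix-vector product --- each an extra factor of $L_n$ (resp.\ $\sqrt{L_n}$) above the claim. Your closing paragraph shows you sense this hazard, but it misstates the inflation as $\sqrt{L_n}$ for the operator-norm bound (it is $L_n$), suggests the problem arises only in the matrix-vector step, and asserts that the uniform bound on $\tbtheta^{\top}(\bmu_1-\bmu_2)$ rescues the $\phi$; it does not, because the bottleneck is the $O(L_n)$ pointwise variance of the Gaussian factor $\tilde u(U)^{\top}(\bX-\bmu_k(U))$ coming from the scaled basis. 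The fix --- which is exactly what the paper does --- is to work with the \emph{unscaled} basis $\bB^{*}$, so that $\twonorm{\tnu(U)}\le 1$ pointwise, and pull the explicit prefactor $L_n$ (operator norm) or $\sqrt{L_n}$ (matrix-vector) out of the display. Then the conditional variance and the bounded factor are both $O(1)$ pointwise, $\phi$ can be held constant, $\E[T_i^2e^{\eta|T_i|}]\lesssim L_n^{-1}$ follows from $\E[\twonorm{\tnu(U)}^2]\lesssim L_n^{-1}$, and the claimed rates drop out cleanly.

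Two minor remarks. First, using a bilinear net $\sup_{\bu,\bv\in\mathcal N}|\bu^{\top}\bA\bv|$ as you do is actually more careful than the paper's quadratic-form reduction, since the matrix being controlled is not symmetric; the union over $|\mathcal N|^2$ changes nothing in the probability bound. Second, your appeal to an $L_2$-in-$U$ estimate $\E[\{(\tbtheta-\btheta^{*})^{\top}(\bmu_1-\bmu_2)\}^2]\lesssim\delta_p^2pL_n^{-2d}$ is a red herring here: what is needed (and what the paper invokes) is the sup-norm control $\sup_u\twonorm{\tbtheta(u)}\lesssim\sup_u\twonorm{\btheta^{*}(u)}+\sqrt{p}L_n^{-d}=O(1)$, which comes from the pointwise spline approximation property in~\cond{3} together with the dimensional restriction in~\cond{4}, not from Theorem~\ref{thm_appro_error}.
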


\begin{lemma}\label{lemma_matrix_bound_3}
	Under conditions \cond{1}-\cond{4}, let $\bA_i = [(\bmu_1(U_i)-\bmu_2(U_i))(\bmu_1(U_i)-\bmu_2(U_i))^{\top}]\otimes (\bB_i\bB_i^{\top})$, then for any $\vartheta>0$,
	\begin{align*}
	    \LRtwonorm{\frac{1}{n}\sum_{i=1}^n \bA_i - \E[\bA_i]}\lesssim L_n\sqrt{\frac{p\log n}{n}},\\
	    \LRtwonorm{\frac{1}{n}\sum_{i=1}^n \LRs{\bA_i - \E[\bA_i]}\tgamma} \lesssim \sqrt{\frac{p L_n\log n}{n}}
	\end{align*}
	hold with probability at least $1-n^{-\vartheta L_np}$.
\end{lemma}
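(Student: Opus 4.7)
The plan is to follow the covering-plus-concentration template already used for Lemma \ref{lemma_Bnorm_bound} and the neighbouring Lemmas \ref{lemma_matrix_bound_1}--\ref{lemma_matrix_bound_2}. The structural simplification is that $\bA_i$ is a rank-one Kronecker product in the mean-difference direction: for any $\bnu = (\bnu_{(1)}^\top,\ldots,\bnu_{(p)}^\top)^\top \in \mathbb{R}^{pL_n}$, setting $\tnu(U) := (\bnu_{(1)}^\top \bB(U),\ldots,\bnu_{(p)}^\top \bB(U))^\top$, we have
\[
\bnu^\top \bA_i \bnu = \xi_i(\bnu)^2,\qquad \xi_i(\bnu) := (\bmu_1(U_i)-\bmu_2(U_i))^\top \tnu(U_i),
\]
and analogously $\bnu^\top \bA_i \tgamma = \xi_i(\bnu)\,\eta_i$ with $\eta_i := (\bmu_1(U_i)-\bmu_2(U_i))^\top \tbtheta(U_i)$. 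Both identities reduce the $pL_n$-dimensional concentration problem to a scalar one to which Lemma \ref{bern_ineq} applies.

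For the operator-norm bound, I would pass to a $1/8$-net $\{\bnu_1,\ldots,\bnu_N\}$ of $\mathbb{S}^{pL_n-1}$ with $N \leq 17^{pL_n}$ and use the symmetry of $\bQ := \frac{1}{n}\sum_i(\bA_i - \E[\bA_i])$ to write $\|\bQ\|_2 \leq 2\max_k|\bnu_k^\top \bQ \bnu_k|$. Mimicking the proof of Lemma \ref{lemma_Bnorm_bound}, I would rescale to the unscaled basis $\bB^*:=\bB/\sqrt{L_n}$ and $\bA_i^*:=\bA_i/L_n$, for which $|\bnu^\top \bA_i^* \bnu|\lesssim 1$ (since $\|\bB^*\|_2\leq 1$ and $\|\bmu_1-\bmu_2\|_2\lesssim 1$ by \cond{4}) and, via \eqref{int_bound} and \cond{1}, $\E[\bnu^\top \bA_i^* \bnu] \lesssim 1/L_n$. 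The centred summands $Y_i^*$ then satisfy $|Y_i^*| \lesssim 1$ and $\E[Y_i^{*2}] \lesssim 1/L_n$, and Lemma \ref{bern_ineq} with any fixed $\phi = O(1)$ yields $|\frac{1}{n}\sum_i Y_i^*|\lesssim \sqrt{p\log n/n}$ with probability at least $1-\exp(-CpL_n\log n)$. Restoring the factor of $L_n$ and taking a union bound over the net gives the first inequality at probability $\geq 1-n^{-\vartheta pL_n}$.

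For the matrix-vector bound, I apply the same rescaling, writing $\bnu^\top\bA_i^*\tgamma = \xi_i^*\zeta_i^*$ with $\zeta_i^* = \eta_i/\sqrt{L_n}$, and cover only $\mathbb{S}^{pL_n-1}$ in the $\bnu$ variable. The extra ingredient is a uniform bound $\sup_u\|\tbtheta(u)\|_2 = O(1)$: by \cond{3}, each $\theta_j^*\in\mathcal{W}^d$, so standard univariate B-spline uniform approximation gives $\sup_u|\tilde\theta_j(u)-\theta_j^*(u)|\lesssim L_n^{-d}$, and the dimension restriction in \cond{4} yields $\sup_u\|\tbtheta(u)-\btheta^*(u)\|_2\lesssim \sqrt{p}\,L_n^{-d}=o(1)$ together with $\|\btheta^*\|_2\leq M$. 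Consequently $|\zeta_i^*|\lesssim 1/\sqrt{L_n}$, so $|\xi_i^*\zeta_i^*|\lesssim 1/\sqrt{L_n}$ and $\E[(\xi_i^*\zeta_i^*)^2]\lesssim 1/L_n^2$. Lemma \ref{bern_ineq} with $\phi = O(1)$ now gives per-$\bnu$ deviations of order $\sqrt{p\log n/(nL_n)}$; multiplying by $L_n$ produces $\sqrt{pL_n\log n/n}$ and a union bound over the net finishes the argument.

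The main obstacle is retaining the right $L_n$ exponent in the Bernstein step. A naive application of Lemma \ref{bern_ineq} with boundedness parameter of order $L_n$ forces the constant $C_\phi = \phi + 1/\phi$ to scale like $L_n$, which inflates the operator-norm rate to $L_n^2\sqrt{p\log n/n}$ and the matrix-vector rate by an analogous factor. The clean resolution — exactly the trick used in Lemma \ref{lemma_Bnorm_bound} — is to first pass to the unscaled basis so that the summands are $O(1)$-bounded and $C_\phi$ can be kept $O(1)$, and only restore the $L_n$ factor at the very end. The uniform control of $\|\tbtheta\|_2$ described above is the additional ingredient that allows the matrix-vector rate to improve from $L_n\sqrt{p\log n/n}$ down to $\sqrt{pL_n\log n/n}$.
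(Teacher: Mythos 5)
Your proof is correct and follows the same covering-plus-Bernstein template that the paper uses for the neighbouring lemmas; indeed the paper does not write out a separate proof of Lemma~\ref{lemma_matrix_bound_3}, it only remarks at the start of Appendix~\ref{proof:lemma_matrix_bound_2} that the argument is similar to that of Lemma~\ref{lemma_matrix_bound_2}, so you are essentially filling in the omitted details. The algebraic identities $\bnu^{\top}\bA_i\bnu = (\bdelta(U_i)^{\top}\tnu(U_i))^2$ and $\bnu^{\top}\bA_i\tgamma = (\bdelta(U_i)^{\top}\tnu(U_i))(\bdelta(U_i)^{\top}\tbtheta(U_i))$ are exactly what the paper's reduction produces, the rescaling to $\bB^{*}$ so that the summands are $O(1)$-bounded is the same device as in Lemmas~\ref{lemma_Bnorm_bound} and~\ref{lemma_matrix_bound_2}, and the variance bound via \eqref{int_bound} followed by Lemma~\ref{bern_ineq} with $\phi = O(1)$, the $1/8$-net of $\mathbb{S}^{pL_n-1}$, and the union bound all match. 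The one simplification relative to Lemma~\ref{lemma_matrix_bound_2} is precisely the one you identify: there is no Gaussian factor, so the exponential-moment step reduces to bounded-variable Bernstein and Lemma~\ref{lemma:normal_moment_bound} is not needed.

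One caveat worth flagging, though it is inherited from the paper rather than a flaw you introduced: the matrix--vector bound (the gain of a factor $\sqrt{L_n}$ over the operator-norm bound) hinges on $\sup_{u\in[0,1]}\LRabs{\bdelta(u)^{\top}\tbtheta(u)} = O(1)$, which you justify via $\sup_u\twonorm{\tbtheta(u)} = O(1)$. The paper makes the identical assertion inside the proof of Lemma~\ref{lemma_matrix_bound_1}, writing $\sup_u\twonorm{\tbtheta(u)}\lesssim\twonorm{\btheta^*(u)}+\sqrt{p}L_n^{-d}$. But the classical pointwise B-spline approximation rate $\sup_u|\bar\theta_j(u)-\theta_j^*(u)|\lesssim L_n^{-d}$ applies to the \emph{best} coordinate-wise spline approximant $\bar\theta_j$, not directly to the joint population least-squares coordinate $\widetilde\theta_j$; passing from $\twonorm{\tgamma-\bar{\bgamma}}\lesssim\sqrt{p}L_n^{-d}$ (Theorem~\ref{thm_appro_error}'s intermediate step) to a sup-norm bound on $\tbtheta-\bar{\btheta}$ via $\twonorm{\bB(u)}\leq\sqrt{L_n}$ introduces an extra $\sqrt{L_n}$ factor, and $\sqrt{pL_n}\,L_n^{-d}$ is not in general $o(1)$ under \cond{4}. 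This is a real gap in the quoted step, but since it is the paper's own argument you are entitled to rely on it here; if you want a watertight version you would need a sup-norm stability result for the weighted spline projection or a strengthened condition on $p$ and $L_n$.
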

\begin{lemma}\label{lemma_matrix_bound_4}
	Under conditions \cond{1}-\cond{4}, then for $k=1,2$ and any $\vartheta>0$,
	\begin{align*}
		&\LRtwonorm{\frac{1}{n}\sum_{i \in \mathcal{I}_k}[(\bX_i - \bmu_k(U_i))(\hmu(U_i) - \bmu(U_i))^{\top}]\otimes (\bB_i\bB_i^{\top})}\\
		&\qquad \lesssim p L_n^{3/2}\sqrt{\frac{\log n}{n}} \LRs{\sqrt{\frac{L_n\log n}{n}}+L_n^{-d}},\\
	    &\LRtwonorm{\frac{1}{n}\sum_{i \in \mathcal{I}_k}\LRl{[(\bX_i - \bmu_k(U_i))(\hmu(U_i) - \bmu(U_i))^{\top}]\otimes (\bB_i\bB_i^{\top})}\tgamma}\\
		&\qquad \lesssim p L_n\sqrt{\frac{\log n}{n}} \LRs{\sqrt{\frac{L_n\log n}{n}}+L_n^{-d}},
	\end{align*}
	hold with probability at least $1- n^{-\vartheta pL_n}- p L_n n^{-\vartheta}$.
\end{lemma}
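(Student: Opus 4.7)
The plan is to bound the operator norm via a factoring argument. Observe that any matrix of the form $M = \frac{1}{n}\sum_i (\mathbf{v}_i \mathbf{w}_i^{\top}) \otimes (\bB_i \bB_i^{\top})$ admits a factorization $M = \mathbf{P}\mathbf{Q}^{\top}$, where $\mathbf{P}, \mathbf{Q} \in \mathbb{R}^{pL_n \times n}$ are the matrices whose $i$-th columns are $(\mathbf{v}_i \otimes \bB_i)/\sqrt{n}$ and $(\mathbf{w}_i \otimes \bB_i)/\sqrt{n}$, respectively, so $\twonorm{M} \leq \twonorm{\mathbf{P}}\twonorm{\mathbf{Q}}$. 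With $\mathbf{v}_i = \bX_i - \bmu_k(U_i)$ and $\mathbf{w}_i = \hmu(U_i) - \bmu(U_i)$, the factor $\twonorm{\mathbf{P}}^2 = \twonorm{\frac{1}{n}\sum_i (\mathbf{v}_i\mathbf{v}_i^{\top}) \otimes (\bB_i\bB_i^{\top})} = O_{\mathbb{P}}(1)$ is controlled directly by Lemma \ref{lemma_matrix_bound_1} together with Lemma \ref{lemma_eigen}.

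The main obstacle is controlling $\twonorm{\mathbf{Q}}$, since $\hmu$ is estimated from the same samples appearing in the sum, ruling out a direct Bernstein application as in Lemmas \ref{lemma_matrix_bound_1}--\ref{lemma_matrix_bound_3}. I would decouple via the decomposition $\hmu - \bmu = (\hmu - \tmu) + (\tmu - \bmu)$, where $\tmu$ is the population B-spline projection (cf.\ \eqref{pop_mean}): the bias piece $\tmu - \bmu$ depends only on $u$, while the data-dependence in $\hmu - \tmu$ is funneled entirely through the coefficient vectors $\halpha_l - \talpha_l$ already controlled in the proof of Proposition \ref{pro_mean_bound}. Applying Cauchy--Schwarz to the quadratic form gives
\begin{equation*}
    \bu^{\top}\mathbf{Q}\mathbf{Q}^{\top}\bu = \tfrac{1}{n}\sum_i \LRs{\sum_j (\bu_{(j)}^{\top}\bB_i)(\hmu(U_i) - \bmu(U_i))_j}^2 \leq \sup_u \twonorm{\hmu(u)-\bmu(u)}^2 \cdot \LRtwonorm{\tfrac{1}{n}\sum_i \bB_i\bB_i^{\top}},
\end{equation*}
so it suffices to control the uniform $\ell_2$-error. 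Upgrading the $\ell_{\infty}$ bound $\sup_u \infnorm{\hmu(u) - \bmu(u)} \lesssim a_n$ from Proposition \ref{pro_mean_bound} to $\ell_2$ by the trivial $\sqrt{p}$ factor, and combining with $\LRtwonorm{\tfrac{1}{n}\sum_i \bB_i\bB_i^{\top}} = O_{\mathbb{P}}(1)$ from Lemma \ref{lemma_Bnorm_bound}, yields $\twonorm{\mathbf{Q}} \lesssim \sqrt{p}\,a_n$, whence the first stated inequality follows (the argument in fact produces a tighter bound of order $\sqrt{p}\,a_n$, which is absorbed into the slacker form that is sufficient for the downstream proofs of Lemmas \ref{lemma:tB_matrix_concentration}--\ref{lemma:tB_tgamma_concentration}).

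For the matrix--vector bound, it suffices to combine the operator-norm inequality with $\twonorm{\tgamma} = O(1)$, which follows from the B-spline norm equivalence \eqref{int_bound}: $\twonorm{\tgamma}^2 = \sum_j \twonorm{\tgamma_{(j)}}^2 \asymp \sum_j \int_0^1 (\tgamma_{(j)}^{\top}\bB(u))^2\,du = \|\tbtheta\|_{L_2}^2 \leq \sup_u \twonorm{\tbtheta(u)}^2 = O(1)$ under condition \cond{4}. Alternatively, a direct expansion
\begin{equation*}
    \bA_k\tgamma = \tfrac{1}{n}\sum_{i \in \mathcal{I}_k} c_i\bigl((\bX_i-\bmu_k(U_i))\otimes \bB_i\bigr),\qquad c_i := (\hmu(U_i)-\bmu(U_i))^{\top}\tbtheta(U_i),
\end{equation*}
obtained from $\bB_i^{\top}\tgamma_{(l)} = \tilde{\theta}_l(U_i)$ and the Kronecker identity $[(\mathbf{v}\mathbf{w}^{\top})\otimes(\bB\bB^{\top})]\tgamma = (\mathbf{w}^{\top}\tbtheta(U))(\mathbf{v}\otimes\bB)$, combined with the scalar bound $\max_i|c_i| \lesssim \sqrt{p}\,a_n$ inherited from the $\ell_2$-control on $\hmu - \bmu$, produces the required estimate directly by the same factoring argument applied to the collapsed sum.
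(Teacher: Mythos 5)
Your factorization $M = \mathbf{P}\mathbf{Q}^{\top}$ and the Cauchy--Schwarz step $\twonorm{M}\leq\twonorm{\mathbf{P}}\twonorm{\mathbf{Q}}$ are valid, and the factor bounds $\twonorm{\mathbf{P}} = O_{\mathbb{P}}(1)$ and $\twonorm{\mathbf{Q}}\lesssim\sqrt{p}\,a_n$ are correct. The gap is in your final claim: you assert $\sqrt{p}\,a_n$ is a \emph{tighter} bound than the lemma's $pL_n^{3/2}a_n\sqrt{\log n/n}$, but the comparison goes the wrong way in general. Their ratio is $\sqrt{p}\,a_n \big/\LRs{pL_n^{3/2}a_n\sqrt{\log n/n}} = \LRs{pL_n^3\log n/n}^{-1/2}$, and \cond{4} only guarantees $pL_n^2\log n/n = o(1)$, so $pL_n^3\log n/n = L_n\cdot o(1)$ is not bounded below by a constant. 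Concretely, with $p = O(1)$, $d=2$, $L_n\asymp n^{1/5}$, your bound is of order $n^{-2/5}\sqrt{\log n}$ while the lemma asserts $n^{-3/5}\log n$ --- your bound is larger by a factor $\asymp n^{1/5}/\sqrt{\log n}\to\infty$. So the factorization argument does not establish the lemma as stated (whenever $d>1$ and $p$ grows slowly).

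The loss is structural: separating $\mathbf{P}$ and $\mathbf{Q}$ via Cauchy--Schwarz discards the conditional mean-zero Gaussianity of $\bX_i-\bmu_k(U_i)$. The paper keeps the cross-term intact, writing the bilinear form as $L_n[\txi(U_i)^{\top}(\bX_i-\bmu_k(U_i))][\tnu(U_i)^{\top}(\hmu(U_i)-\bmu(U_i))]$, and uses exactly the split $\hmu - \bmu = (\tbM^{\top}\bB - \bmu) + (\hbM-\tbM)^{\top}\bB$ you suggest --- but then handles the data-dependent $(\hbM-\tbM)$-piece by a chaining argument: place $\hbM-\tbM$ inside a finite $n^{-M}$-net of coefficient matrices with columnwise $\ell_2$-norm $\lesssim a_n$, apply Lemma~\ref{bern_ineq} to the (conditionally mean-zero, sub-exponential) sum at each fixed net point, then union-bound over the $n^{MpL_n}$ net points. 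It is Bernstein's inequality at fixed $\bM$ that harvests the $\sqrt{\log n/n}$ concentration your decoupling cannot recover; your factorization only captures a one-sided $\sup$-type bound on $\hmu-\bmu$. Worth noting: the weaker $\sqrt{p}\,a_n$ bound would in fact be adequate for Theorem~\ref{thm_low_dim_ell2} (it is absorbed into the $\sqrt{pL_n\log n/n}+\sqrt{p}L_n^{-d}$ rate already present there), so your route is a legitimate shortcut to the downstream result --- it just cannot reproduce the sharper statement of this particular lemma.
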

\begin{lemma}\label{lemma_matrix_bound_5}
	Under conditions \cond{1}-\cond{4}, let $\bG(U_i) = [(\bmu_{1}(U_i)-\bmu_{2}(U_i))(\hmu(U_i)- \bmu(U_i))^{\top}]\otimes (\bB_i\bB_i^{\top})$,
	we have for any $\vartheta>0$
	\begin{align*}
	\LRtwonorm{\frac{1}{n}\sum_{i\in \mathcal{I}_1}\bG(U_i)-\frac{1}{n}\sum_{i\in \mathcal{I}_2}\bG(U_i)} \lesssim p L_n^{3/2}\sqrt{\frac{\log n}{n}} \LRs{\sqrt{\frac{L_n\log n}{n}}+L_n^{-d}},\\
	    \LRtwonorm{\frac{1}{n}\sum_{i\in \mathcal{I}_1}\bG(U_i)\tgamma -\frac{1}{n}\sum_{i\in \mathcal{I}_2}\bG(U_i)\tgamma } \lesssim p L_n\sqrt{\frac{\log n}{n}} \LRs{\sqrt{\frac{L_n\log n}{n}}+L_n^{-d}},
	\end{align*}
	hold with probability at least $1- n^{-\vartheta pL_n} - p L_n n^{-\vartheta}$.
\end{lemma}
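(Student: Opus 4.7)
The quantity to bound is
$$M \;:=\; \frac{1}{n}\sum_{i\in\mathcal{I}_1}\bG(U_i) - \frac{1}{n}\sum_{i\in\mathcal{I}_2}\bG(U_i) \;=\; \frac{1}{n}\sum_{i=1}^{2n}\epsilon_i\,\bG(U_i),\qquad \epsilon_i := 2Y_i - 1,$$
where $\bG(U_i)=[(\bmu_1(U_i)-\bmu_2(U_i))(\hmu(U_i)-\bmu(U_i))^{\top}]\otimes \bB_i\bB_i^{\top}$. Unlike Lemma~\ref{lemma_matrix_bound_4}, the centering no longer comes from a conditionally centered factor $\bX_i-\bmu_k(U_i)$ but from the signs $\epsilon_i$, which are independent of $(U_i,\bB_i)$ but \emph{not} of $\hmu$; this coupling is the main technical difficulty. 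The plan is to decouple it by peeling off the deterministic part of $\hmu$. Introduce the population B-spline projection $\tmu$ from the proof of Proposition~\ref{pro_mean_bound}, decompose $\hmu-\bmu=(\tmu-\bmu)+(\hmu-\tmu)$, and correspondingly split $M=M_1+M_2$. Standard spline theory gives $\sup_u\infnorm{\tmu-\bmu}\lesssim L_n^{-d}$, while Proposition~\ref{pro_mean_bound} yields $\max_{k,j}\twonorm{\halpha_{kj}-\talpha_{kj}}\lesssim\sqrt{L_n\log n/n}$ on a high-probability event.

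For the deterministic piece, $\epsilon_i[(\bmu_1(U_i)-\bmu_2(U_i))(\tmu(U_i)-\bmu(U_i))^{\top}]\otimes\bB_i\bB_i^{\top}$ depends on the data only through the independent pair $(\epsilon_i,U_i)$, and is thus an i.i.d.\ mean-zero rank-one matrix with operator norm at most $\delta_p\sqrt{p}L_n^{-d}\cdot L_n$. Applying an $\epsilon$-net over the unit sphere $\mathbb{S}^{pL_n-1}$, exactly as in the proof of Lemma~\ref{lemma_Bnorm_bound}, together with the Bernstein inequality (Lemma~\ref{bern_ineq}) applied to each scalar quadratic form $\bnu^{\top}M_1\bnu$, yields $\twonorm{M_1}\lesssim pL_n^{3/2}L_n^{-d}\sqrt{\log n/n}$ with probability at least $1-n^{-\vartheta pL_n}$. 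For the data-dependent piece, entrywise expansion $(\hmu(U_i)-\tmu(U_i))_j=\tfrac12\bB(U_i)^{\top}[(\halpha_{1j}-\talpha_{1j})+(\halpha_{2j}-\talpha_{2j})]$ exhibits $M_2$ as linear in the data-dependent vectors $\halpha_{kj}-\talpha_{kj}$. Lifting the argument to a uniform statement, I bound
$$\sup_{\bnu\in\mathbb{R}^{L_n}:\,\twonorm{\bnu}\leq R_n}\LRtwonorm{\frac{1}{n}\sum_{i=1}^{2n}\epsilon_i\,[(\bmu_1(U_i)-\bmu_2(U_i))(\bB(U_i)^{\top}\bnu)\mathbf{e}_j^{\top}]\otimes\bB_i\bB_i^{\top}}$$
for $R_n\asymp\sqrt{L_n\log n/n}$ via the same net/Bernstein scheme plus an extra covering of the $L_n$-dimensional ball of radius $R_n$. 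Plugging in the realized $\halpha_{kj}-\talpha_{kj}$ (which lies in that ball on the high-probability event) and summing over $j,k$ gives $\twonorm{M_2}\lesssim pL_n^{3/2}\sqrt{L_n\log n/n}\cdot\sqrt{\log n/n}$; adding this to $\twonorm{M_1}$ matches the stated operator-norm rate.

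For the matrix-vector form $\twonorm{M\tgamma}$, the same decomposition applies, but the bound $\twonorm{\tgamma}\lesssim\delta_p$ (proved analogously to \eqref{eq:b_bound}) removes one outer $\sqrt{L_n}$ factor from the covering, producing the sharper $pL_n\sqrt{\log n/n}\,a_n$. The genuine obstacle throughout is the data-dependent piece: because $\halpha_{kj}-\talpha_{kj}$ is built from the labels $Y_i$, it is correlated with the signs $\epsilon_i$, ruling out a naive plug-in or conditioning argument. The resolution is precisely the lifting to uniform concentration over the high-probability $\ell_2$-ball of $\mathbb{R}^{L_n}$ containing $\halpha_{kj}-\talpha_{kj}$; the covering number of that ball is what contributes the extra $L_n^{1/2}$ factor visible in the final rate, and this is the one place where a sharper technique would be needed to improve the bound.
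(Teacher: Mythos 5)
Your proposal follows essentially the same route as the paper's: both decompose $\hmu-\bmu$ into the deterministic spline bias $\tbM^{\top}\bB(\cdot)-\bmu(\cdot)$ and the random fluctuation $\hbM^{\top}\bB(\cdot)-\tbM^{\top}\bB(\cdot)$, exploit the independence of $Y$ and $U$ (your $\epsilon_i=2Y_i-1$ reformulation) to recognize the frozen summands as mean zero, and treat the data-dependent part by lifting to a uniform statement over a high-probability $\ell_2$-ball via a covering argument combined with Lemma~\ref{bern_ineq}. One caveat on the aggregation step: for the random piece you cover per column $j$ and then ``sum over $j,k$''; a plain triangle-inequality sum of the $p$ per-column operator norms would be off by a factor $\sqrt{p}$ from the stated rate, so the aggregation must use that the column blocks occupy disjoint column ranges, giving $\twonorm{M_2}\le\bigl(\sum_j\twonorm{M_2^{(j)}}^2\bigr)^{1/2}$, or, as the paper does, one takes a single net over the matrix class $\boldsymbol{\Xi}\subset\mathbb{R}^{L_n\times p}$ and avoids the column-wise decomposition entirely. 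A minor attribution quibble: the $\sqrt{L_n}$ saved in the matrix-vector version comes from the fact that $\tgamma_{(j)}^{\top}\bB(U_i)=\widetilde{\theta}_j(U_i)$ is uniformly bounded (so that inner factor carries no extra $\sqrt{L_n}$), as the paper observes in Appendix~\ref{proof:lemma:tB_tgamma_concentration}, rather than from the norm bound $\twonorm{\tgamma}\lesssim\delta_p$ alone.
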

\begin{proof}[Proof of Lemma \ref{lemma:tB_matrix_concentration}]
    Note that we can rewrite $\E[\tB\tB^{\top}]$ and $\sum_{i=1}^{2n}\tB_i\tB_i^{\top}/2n$ as
	\begin{equation*}
		\begin{aligned}
			\E\LRm{\tB\tB^{\top}} &= \underbrace{\frac{1}{2}\E\left\{\left[\left(\bX-\bmu_{1}(U)\right)\left(\bX-\bmu_{1}(U)\right)^{\top}\right]\otimes (\bB\bB^{\top})\big| Y=1\right\}}_{\bI_1^{*}}\\
			&+\underbrace{\frac{1}{2}\E\left\{\left[\left(\bX-\bmu_{2}(U)\right)\left(\bX-\bmu_{2}(U)\right)^{\top}\right]\otimes (\bB\bB^{\top})\big| Y=0\right\}}_{ \bI_2^{*}}\\
			&+\underbrace{\frac{1}{4}\E\left\{\left[(\bmu_{1}(U)-\bmu_{2}(U))(\bmu_{1}(U)-\bmu_{2}(U))^{\top}\right]\otimes (\bB\bB^{\top})\right\}}_{\bI_3^{*}}
		\end{aligned}
	\end{equation*}
and
\begin{equation*}
	\begin{aligned}
		\frac{1}{2n}\sum_{i=1}^{2n}\tB_i\tB_i^{\top} &= \underbrace{\frac{1}{2n}\sum_{i\in \mathcal{I}_1}\left\{\left[(\bX_i-\hmu(U_i))(\bX_i-\hmu(U_i))^{\top}\right]\otimes (\bB_i\bB_i^{\top}) \right\}}_{\bI^1} \\
		&+ \underbrace{\frac{1}{2n}\sum_{i\in \mathcal{I}_2}\left\{\left[(\bX_i-\hmu(U_i))(\bX_i-\hmu(U_i))^{\top}\right]\otimes (\bB_i\bB_i^{\top})\right\}}_{\bI^2}.
	\end{aligned}
\end{equation*}
To upper bound $\twonorm{\sum_{i=1}^{2n}\tB_i\tB_i^{\top}/2n - \E[\tB\tB^{\top}]}$, we begin with the following decompositions for $\bI_1$ and $\bI_2$,
\begin{equation*}
	\begin{aligned}
		&\bI^1=\underbrace{\frac{1}{2n}\sum_{i\in \mathcal{I}_1}\left[(\bX_i-\bmu(U_i))(\bX_i-\bmu(U_i))^{\top}\right]\otimes (\bB_i\bB_i^{\top})}_{\bI^1_1}\\
		+&\underbrace{\frac{1}{2n}\sum_{i\in \mathcal{I}_1}\left\{\left[(\bX_i-\bmu(U_i))(\bmu(U_i)-\hmu(U_i))^{\top}\right]\otimes (\bB_i\bB_i^{\top})\right\}}_{\bI^1_2}\\
		+&\underbrace{\frac{1}{2n}\sum_{i\in \mathcal{I}_1}\left\{\left[(\bmu(U_i)-\hmu(U_i))(\bX_i-\bmu(U_i))^{\top}\right]\otimes (\bB_i\bB_i^{\top})\right\}}_{\bI^1_3}\\
		+&\underbrace{\frac{1}{2n}\sum_{i\in \mathcal{I}_1}\left[(\bmu(U_i)-\hmu(U_i))(\bmu(U_i)-\hmu(U_i))^{\top}\right]\otimes (\bB_i\bB_i^{\top})}_{\bI^1_4}
	\end{aligned}
\end{equation*}
and
\begin{equation*}
	\begin{aligned}
		\bI^2&=\underbrace{\frac{1}{2n}\sum_{i\in \mathcal{I}_2}\left[(\bX_i-\bmu(U_i))(\bX_i-\bmu(U_i))^{\top}\right]\otimes (\bB_i\bB_i^{\top})}_{\bI^2_1}\\
		+&\underbrace{\frac{1}{2n}\sum_{i\in \mathcal{I}_2}\left\{\left[(\bX_i-\bmu(U_i))(\bmu(U_i)-\hmu(U_i))^{\top}\right]\otimes (\bB_i\bB_i^{\top})\right\}}_{\bI^2_2}\\
		+&\underbrace{\frac{1}{2n}\sum_{i\in \mathcal{I}_2}\left\{\left[(\bmu(U_i)-\hmu(U_i))(\bX_i-\bmu(U_i))^{\top}\right]\otimes (\bB_i\bB_i^{\top})\right\}}_{\bI^2_3}\\
		+&\underbrace{\frac{1}{2n}\sum_{i\in \mathcal{I}_2}\left[(\bmu(U_i)-\hmu(U_i))(\bmu(U_i)-\hmu(U_i))^{\top}\right]\otimes (\bB_i\bB_i^{\top})}_{\bI^2_4}
	\end{aligned}
\end{equation*}
\paragraph{Step 1.1.  upper bounding $\twonorm{\bI_1^1-\bI_1^{*} + \bI_1^2-\bI_2^{*}-\bI_3^{*}}$.}First, we decompose $\bI_1^1$ as
\begin{equation*}
	\begin{aligned}
		&\bI_1^1 = \underbrace{\frac{1}{2n}\sum_{i\in \mathcal{I}_1}\left[(\bX_i-\bmu_1(U_i))(\bX_i-\bmu_1(U_i))^{\top}\right]\otimes (\bB_i\bB_i^{\top})}_{\bI_{11}^1}\\
		&+ \underbrace{\frac{1}{4n}\sum_{i\in \mathcal{I}_1}\left[(\bX_i-\bmu_1(U_i))(\bmu_1(U_i)-\bmu_2(U_i))^{\top}\right]\otimes (\bB_i\bB_i^{\top})}_{\bI_{12}^1}\\
		&+\underbrace{\frac{1}{4n}\sum_{i\in \mathcal{I}_1}\left[(\bmu_1(U_i)-\bmu_2(U_i))(\bX_i-\bmu_1(U_i))^{\top}\right]\otimes (\bB_i\bB_i^{\top})}_{\bI_{13}^1}\\
		&+\underbrace{\frac{1}{8n}\sum_{i\in \mathcal{I}_1}\left[(\bmu_1(U_i)-\bmu_2(U_i))(\bmu_{1}(U_i)-\bmu_2(U_i))^{\top}\right]\otimes (\bB_i\bB_i^{\top})}_{\bI_{14}^1}
	\end{aligned}
\end{equation*}
By Lemma \ref{lemma_matrix_bound_1}, we have
\begin{equation*}
	\Prob\left(\LRtwonorm{\bI_{11}^1 - \bI_1^{*}}\geq CL_n\sqrt{\frac{p\log n}{n}}\right)\leq n^{-\vartheta L_np}.
\end{equation*}
By Lemma \ref{lemma_matrix_bound_2}, we have
\begin{equation*}
	\Prob\left(\LRtwonorm{\bI_{12}^1} \lesssim L_n\sqrt{\frac{ p\log n}{n}}\right)\geq 1 - n^{-\vartheta L_np},
\end{equation*}
and
\begin{equation*}
	\Prob\left(\LRtwonorm{\bI_{13}^1} \lesssim CL_n\sqrt{\frac{ p\log n}{n}}\right)\geq 1- n^{-\vartheta L_np}.
\end{equation*}
By Lemma \ref{lemma_matrix_bound_3}, we have
\begin{equation*}
	\Prob\left(\twonorm{\bI_{14}^1 - \frac{1}{2}\bI_3^{*}}\lesssim L_n\sqrt{\frac{p\log n}{n}}\right)\geq 1 - pL_n n^{-\vartheta}.
\end{equation*}
Combining the results displayed above, it follows that
\begin{equation}\label{step_1_41}
	\Prob\left(\twonorm{\bI_1^1-\bI_1^{*}-\frac{1}{2}\bI_3^{*}}\lesssim L_n\sqrt{\frac{p\log n}{n}}\right) \geq 1 - 3n^{-\vartheta L_np} - pL_n n^{-\vartheta}.
\end{equation}
Similarly, we also have
\begin{equation}\label{step_1_42}
	\Prob\left(\twonorm{\bI_1^2-\bI_2^{*}-\frac{1}{2}\bI_3^{*}}\lesssim L_n\sqrt{\frac{p\log n}{n}}\right) \geq 1 - 3n^{-\vartheta L_np} - pL_n n^{-\vartheta}.
\end{equation}

\paragraph{Step 1.2. upper bounding $\twonorm{\bI_2^{1}+\bI_{2}^{2}}$, $\twonorm{\bI_3^{1}+\bI_3^{2}}$ and $\twonorm{\bI_4^{1}}+\twonorm{\bI_4^{2}}$.} Note that
\begin{equation*}
	\begin{aligned}
		\twonorm{\bI_2^1 +\bI_{2}^{2}}\leq & \LRtwonorm{\frac{1}{2n}\sum_{i\in \mathcal{I}_1}\left[(\bX_i-\bmu_1(U_i))(\hmu(U_i) - \bmu(U_i))^{\top}\right]\otimes(\bB_i\bB_i^{\top})}\\
		  + & \LRtwonorm{\frac{1}{2n}\sum_{i\in \mathcal{I}_2}\left[(\bX_i-\bmu_2(U_i))(\hmu(U_i) - \bmu(U_i))^{\top}\right]\otimes(\bB_i\bB_i^{\top})}\\
		+ & \LRtwonorm{\frac{1}{4n}\sum_{i\in \mathcal{I}_1}\bG(U_i)  - \frac{1}{4n}\sum_{i\in \mathcal{I}_2}\bG(U_i)},
	\end{aligned}
\end{equation*}
where $\bG(U_i) = [(\bmu_{1}(U_i)-\bmu_{2}(U_i))(\hmu(U_i)- \bmu(U_i))^{\top}]\otimes (\bB_i\bB_i^{\top})$. By invoking Lemma \ref{lemma_matrix_bound_5} and \ref{lemma_matrix_bound_3}, we have
\begin{equation}\label{step_1_5}
	\Prob\left(\twonorm{\bI_2^1 +\bI_{2}^{2}} \lesssim pL_n^{3/2}a_n\sqrt{\frac{\log n}{n}}\right)\geq 1 - n^{-\vartheta L_np},
\end{equation}
where $a_n = \sqrt{L_n\log n/n} + L_n^{-d}$. Similarly, we can obtain
\begin{equation}\label{step_1_6}
	\Prob\left(\twonorm{\bI_3^1 + \bI_3^2} \lesssim pL_n^{3/2}a_n \sqrt{\frac{\log n}{n}}\right)\geq 1 - n^{-\vartheta L_np}.
\end{equation}
In addition, by Proposition \ref{pro_mean_bound} we have
\begin{equation}\label{step_1_7}
	\begin{aligned}
	    \twonorm{\bI_4^{1}} + \twonorm{\bI_4^{2}}&\leq 2\max_{1\leq i \leq n}\twonorm{\hmu(U_i) - \bmu(U_i)}^2\twonorm{\bB_i}^2\\
	    &\leq 2pL_na_n^2
	\end{aligned}
\end{equation}
Combining (\ref{step_1_41})-(\ref{step_1_7}), we have
\begin{equation}\label{step_1}
	\LRtwonorm{\frac{1}{2n}\sum_{i=1}^{2n}\tB_i\tB_i^{\top} - \E[\tB\tB^{\top}]} \lesssim L_n\sqrt{\frac{p\log n}{n}}+pL_n^{3/2}a_n \sqrt{\frac{\log n}{n}}
\end{equation}
holds with probability at least $1 - n^{-\vartheta L_np} - pL_n n^{-\vartheta}$.
\end{proof}

\subsection{Proof of Lemma \ref{lemma:tB_tgamma_concentration}}\label{proof:lemma:tB_tgamma_concentration}
\begin{proof}
By replacing the bounds for the operator norm of matrices with those for $\ell_2$-norm of matrix-vector-products in Section \ref{proof:lemma:tB_matrix_concentration}, we can finish the proof
Due to the fact that $\tB_i^{\top} \tgamma_{(j)} = \widetilde{\theta}_j(U_i)$ is bounded, we can drop a $\sqrt{L_n}$ factor for matrix-vector-product bounds in Lemma \ref{lemma_matrix_bound_1}- \ref{lemma_matrix_bound_3}.
\end{proof}

\subsection{Proof of Lemma \ref{lemma:bn_concentration}}
\begin{proof}
Note that for any $j=1,2,...,p$ and $k=1,2,...,L_n$, we find that
\begin{equation*}
	\begin{aligned}
		&\frac{1}{2n}\sum_{i=1}^{2n} B_k(U_i)(X_{ij}-\widehat{\mu}_j(U_i))Z_i\\
		= &\frac{1}{4n}\sum_{i\in \mathcal{I}_1} B_k(U_i)(X_{ij}-\widehat{\mu}_j(U_i)) - \frac{1}{4n}\sum_{i\in \mathcal{I}_2} B_k(U_i)(X_{ij}-\widehat{\mu}_j(U_i))\\
		= & \frac{1}{4}\underbrace{\frac{1}{2n}\sum_{i = 1}^{2n} B_k(U_i)\widehat{\mu}_{1j}(U_i)}_{I_1} - \frac{1}{4}\underbrace{\frac{1}{2n}\sum_{i = 1}^{2n} B_k(U_i)\widehat{\mu}_{2j}(U_i)}_{I_2},
	\end{aligned}
\end{equation*}
where we used $\widehat{\mu}_{1j} = \bB_i^{\top}\halpha_{1j}$, $\widehat{\mu}_{2j} = \bB_i^{\top}\halpha_{2j}$ and the optimal condition for $\halpha_{1j}$ and $\halpha_{2j}$ such that
\begin{align*}
    \sum_{i\in \mathcal{I}_1}\bB_i(X_{ij} - \bB_i^{\top}\halpha_{1j}) = \boldsymbol{0},\quad \sum_{i\in \mathcal{I}_2}\bB_i(X_{ij} - \bB_i^{\top}\halpha_{2j}) = \boldsymbol{0}.
\end{align*}
Moreover, note that
\begin{equation*}
\begin{aligned}
		&\E[B_k(U)(X_j - \mu_{j}(U))Z]\\
		= & \frac{1}{4}\E[B_k(U)(X_j - \mu_{j}(U))|Y=1] - \frac{1}{4}\E[B_k(U)(X_j - \mu_{j}(U))|Y=0]\\
		= & \frac{1}{4}\underbrace{\E[B_k(U)\mu_{1j}(U)]}_{I_1^{*}} - \frac{1}{4}\underbrace{\E[B_k(U)\mu_{2j}(U)]}_{I_2^{*}}
\end{aligned}
\end{equation*}
The remaining detail is to upper bound $|I_1 - I_1^{*}|$. According to Proposition \ref{pro_mean_bound} and $B_k = \sqrt{L_n}B_k^{*}$, we have
\begin{equation*}
	\begin{aligned}
		|I_1 - I_1^{*}| &\leq \sqrt{L_n}\left|\frac{1}{2n}\sum_{i = 1}^{2n} B_k^{*}(U_i)[\widehat{\mu}_{1j}(U_i) - \mu_{1j}(U_i)]\right|\\
		&\qquad + \sqrt{L_n}\left|\frac{1}{2n}\sum_{i = 1}^{2n} B_k^{*}(U_i)\mu_{1j}(U_i) - \E[B_k^{*}(U)\mu_{1j}(U)]\right|\\
		&\leq a_n\sqrt{L_n}\left|\frac{1}{2n}\sum_{i = 1}^{2n} |B_k^{*}(U_i)| -\E[|B_k^{*}(U_i)|]\right| + a_n\sqrt{L_n}\E[|B_k^{*}(U)|]\\
		&\qquad+ \sqrt{L_n}\left|\frac{1}{2n}\sum_{i = 1}^{2n} B_k^{*}(U_i)\mu_{1j}(U_i) - \E[B_k^{*}(U)\mu_{1j}(U)]\right|.
	\end{aligned}
\end{equation*}
Using Lemma \ref{bern_ineq}, we can verify
\begin{equation*}
	\Prob\left(\left|\frac{1}{2n}\sum_{i = 1}^{2n} B_k^{*}(U_i)\mu_{1j}(U_i) - \E[B_k^{*}(U)\mu_{1j}(U)]\right| \lesssim \sqrt{\frac{\log n}{L_nn}}\right)\geq 1 - n^{-\vartheta}
\end{equation*}
and
\begin{equation*}
	\Prob\left(\left|\frac{1}{2n}\sum_{i = 1}^{2n} |B_k^{*}(U_i)| -\E[|B_k^{*}(U_i)|]\right| \lesssim \sqrt{\frac{\log n}{L_nn}}\right)\geq 1 - n^{-\vartheta}.
\end{equation*}
In addition, recall the fact that $\E|B_k^{*}(U)|\leq M_2L_n^{-1}$, which yields that
\begin{equation*}
	\Prob\left(|I_1 - I_1^{*}| \lesssim \sqrt{\frac{\log n}{n}} + L_n^{-1/2}a_n\right)\geq 1- n^{-\vartheta}.
\end{equation*}
Thus we are guaranteed that
\begin{equation}\label{step_2}
	\Prob\left(\LRabs{\frac{1}{2n}\sum_{i=1}^{2n}\tB_iZ_i-\E[\tB Z]} \lesssim \sqrt{\frac{\log n}{n}}+L_n^{-1/2}a_n\right)\geq 1 - 4pL_n n^{-\vartheta}.
\end{equation}
\end{proof}

\subsection{Proof of Lemma \ref{g_lasso}}\label{proof::g_lasso}
\begin{proof}
	By the optimality of $\hgamma$, we have
	\begin{equation}\label{basic_ineq}
	\begin{aligned}
	\frac{1}{2}\left(\hgamma - \cgamma\right)^{\top}\mathbf{A}\left(\hgamma - \cgamma\right) + \lambda_n \sum_{j=1}^p \twonorm{\hgamma_{(j)}}\leq \left(\mathbf{A}\cgamma - \bb\right)^{\top}\left(\cgamma - \hgamma\right) + \lambda_n \sum_{j=1}^p \twonorm{\cgamma_{(j)}}.
	\end{aligned}
	\end{equation}
	Using the condition (\ref{em_process}) and dropping the first non-negative term in the left hand side of (\ref{basic_ineq}), we are guaranteed that
	\begin{equation*}
	\begin{aligned}
	\lambda_n \sum_{j=1}^p \twonorm{\hgamma_{(j)}}&\leq \sum_{j=1}^p\twonorm{(\mathbf{A}\cgamma - \bb)_{(j)}}\twonorm{(\cgamma - \hgamma)_{(j)}} + \lambda_n \sum_{j=1}^p \twonorm{\cgamma_{(j)}}\\
	&\leq \frac{\lambda_n}{2}\sum_{j=1}^p\twonorm{(\cgamma - \hgamma)_{(j)}} +  \lambda_n \sum_{j=1}^p \twonorm{\cgamma_{(j)}}\\
	& = \frac{\lambda_n}{2}\sum_{j\in S}\twonorm{(\cgamma - \hgamma)_{(j)}} + \frac{\lambda_n}{2}\sum_{j\in S^c}\twonorm{(\cgamma - \hgamma)_{(j)}} + \lambda_n \sum_{j\in S} \twonorm{\cgamma_{(j)}}.
	\end{aligned}
	\end{equation*}
	It follows from the assumption $\cgamma_{(j)} = \boldsymbol{0}$ for $j \in S^c$ that
	\begin{align}
	    \frac{1}{2}\sum_{j\in S^c}\twonorm{(\cgamma - \hgamma)_{(j)}}&\leq \frac{1}{2}\sum_{j\in S}\twonorm{(\cgamma - \hgamma)_{(j)}} + \sum_{j\in S} \twonorm{\cgamma_{(j)}} - \sum_{j\in S} \twonorm{\hgamma_{(j)}}\nonumber\\
	    &\leq \frac{1}{2}\sum_{j\in S}\twonorm{(\cgamma - \hgamma)_{(j)}} + \sum_{j\in S}\twonorm{(\cgamma - \hgamma)_{(j)}}\nonumber\\
	    &\leq \frac{3}{2}\sum_{j\in S}\twonorm{(\cgamma - \hgamma)_{(j)}},\label{eq:cone_relation}
	\end{align}
	where we used the fact $\twonorm{\cgamma_{(j)}} - \twonorm{\hgamma_{(j)}} \leq \twonorm{(\cgamma - \hgamma)_{(j)}}$.
	From (\ref{basic_ineq}), we can also obtain that
	\begin{align}
	    \frac{1}{2}\left(\hgamma - \cgamma\right)^{\top}\mathbf{A}\left(\hgamma - \cgamma\right) &\leq \frac{\lambda_n}{2}\sum_{j=1}^p\twonorm{(\cgamma - \hgamma)_{(j)}} + \lambda_n \sum_{j=1}^p \twonorm{\cgamma_{(j)}} - \lambda_n \sum_{j=1}^p \twonorm{\hgamma_{(j)}}\nonumber\\
	    &\leq \frac{3\lambda_n}{2}\sum_{j=1}^p\twonorm{(\cgamma - \hgamma)_{(j)}}.
	    \label{basic_ineq2}
	\end{align}
	By the restrictive eigenvalue condition of $\mathbf{A}$, we know
	\begin{align*}
	    \frac{1}{2}\left(\hgamma - \cgamma\right)^{\top}\mathbf{A}\left(\hgamma - \cgamma\right) \geq \frac{\zeta}{2}\twonorm{\cgamma - \hgamma}^2.
	\end{align*}
	Together with (\ref{basic_ineq2}) and \eqref{eq:cone_relation}, we further have
	\begin{align*}
	\zeta\twonorm{\cgamma - \hgamma}^2 &\leq 3\lambda_n\sum_{j=1}^p\twonorm{(\cgamma - \hgamma)_{(j)}}\\
	&= 3\lambda_n \LRs{\sum_{j\in S} \twonorm{(\cgamma - \hgamma)_{(j)}} + \sum_{j\in S^c} \twonorm{(\cgamma - \hgamma)_{(j)}}}\\
	&\leq 12 \lambda_n \sum_{j\in S} \twonorm{(\cgamma - \hgamma)_{(j)}}\\
	&\leq 12\lambda_n\sqrt{s}\twonorm{(\cgamma - \hgamma)_{(S)}}\\
	&\leq 12\lambda_n\sqrt{s} \twonorm{\cgamma - \hgamma},
	\end{align*}
	which yields the first conclusion in Lemma \ref{g_lasso}. In fact, we also used the following relation
	\begin{align*}
	    \LRs{\sum_{j\in S} \twonorm{(\cgamma - \hgamma)_{(j)}}}^2 \leq s \sum_{j\in S} \twonorm{(\cgamma - \hgamma)_{(j)}}^2 = s\twonorm{(\cgamma - \hgamma)_{(S)}}^2.
	\end{align*}
	And the second conclusion holds since
	\begin{equation*}
	\begin{aligned}
	\onenorm{\cgamma - \hgamma} &= \sum_{j\in S}\onenorm{(\cgamma - \hgamma)_{(j)}} + \sum_{j\in S^c}\onenorm{(\cgamma - \hgamma)_{(j)}}\\
	&\leq \sqrt{L_n}\sum_{j\in S}\twonorm{(\cgamma - \hgamma)_{(j)}} + \sqrt{L_n}\sum_{j\in S^c}\twonorm{(\cgamma - \hgamma)_{(j)}}\\
	&\leq 4\sqrt{L_n}\sum_{j\in S}\twonorm{(\cgamma - \hgamma)_{(j)}}\\
	&\leq 4\sqrt{sL_n}\twonorm{\cgamma - \hgamma}.
	\end{aligned}
	\end{equation*}
\end{proof}
\subsection{Proof of Lemma \ref{lemma_hd_inf}}\label{proof::lemma_hd_inf}
\begin{proof}[Proof of Lemma \ref{lemma_hd_inf}]
    Recall that $\tgamma_{(S^c)} = \boldsymbol{0}$, thus
    \begin{align*}
        \max_{1\leq j\leq p}\twonorm{(\bD_n\tgamma - \bD\tgamma)_{(j)}} &= \max_{1\leq j\leq p}\twonorm{(\bD_n - \bD)_{(j,S)}\tgamma_{(S)}}.
    \end{align*}
    Then it suffices to show that
    \begin{equation}\label{unit_form}
        \max_{1\leq j\leq p}\twonorm{(\bD_n - \bD)_{(j,S)}\tgamma_{(S)}} \lesssim \sqrt{\frac{\log p}{n}}+\frac{L_n\log p}{n}+L_n^{-2d},
    \end{equation}
    with high probability. We use the same decomposition for $\bD_n - \bD$ in the Section \ref{proof:lemma:tB_matrix_concentration} and only prove the counterpart to $\{i:Y_i =1\}$. Correspondingly, the expectation $\E[\cdot]$ means conditional expectation $\E[\cdot| Y_i = 1]$. We split the proof into the following two main steps.
	\paragraph{Step 1. upper bounding $\max_{1\leq j \leq p}\twonorm{(\bI_1^1-\bI_1^{*} + \bI_1^2-\bI_2^{*} -\bI_3^{*})_{(j,S)}\tgamma_{(S)}}$.} Recall
	\begin{align*}
	   \LRtwonorm{(\bI_1^1 - \bI_1^{*} - \frac{1}{2}\bI_3^{*})_{(j, S)}\tgamma_{(S)}} &\leq \LRtwonorm{(\bI_{11}^1 - \bI_1^*)_{(j,S)} \tgamma_{(S)}}\\
	   &+\LRtwonorm{(\bI_{12}^1)_{(j,S)} \tgamma_{(S)}} + \LRtwonorm{(\bI_{13}^1)_{(j,S)} \tgamma_{(S)}}\\
	   &+\LRtwonorm{(\bI_{14}^1 - \frac{1}{2}\bI_3^{*})_{(j,S)}\tgamma_{(S)}}.
	\end{align*}
	Notice that
	\begin{equation*}
	    \begin{aligned}
	            \LRtwonorm{(\bI_{11}^1)_{(j,S)}\tgamma_{(S)}} =\sqrt{L_n} \LRtwonorm{\frac{1}{n}\sum_{i=1}^n(X_{ij}-\mu_{1j}(U_i))\tbtheta_{S}(U_i)^{\top}\LRs{\bX - \bmu_1(U_i)}_{S} \bB_i^{*}}.
	    \end{aligned}
	\end{equation*}
	Given $Y_i = 1$ and $U_i$, $\tbtheta_{S}(U_i)^{\top}\LRs{\bX - \bmu_1(U_i)}_{S}$ is a normal random variable with mean-zero. Due to our assumption $\sup_{u\in [0,1]}\twonorm{\btheta^*(u)}\leq \delta_s$, together with Theorem \ref{thm_high_appro_error}, we have
	\begin{align*}
	    \E_1\LRm{\LRs{\tbtheta_{S}(U_i)^{\top}\LRs{\bX - \bmu_1(U_i)}_{S}}^2\big|U_i}&= \tbtheta_{S}(U_i)^{\top}\bSigma_{SS}(U_i)\tbtheta_{S}(U_i)\\
	    &\leq \lambda_1\twonorm{\tbtheta_{S}(U_i)}^2\\
	    &\leq 2\lambda_1\LRs{\twonorm{\tbtheta_{S}(U_i)}^2 + \twonorm{\tbtheta_{S}(U_i) - \btheta_{S}^*(U_i)}^2}\\
	    &\lesssim 2\lambda_1\LRs{\delta_s + s L_n^{-d}} = O(1).
	\end{align*}
	Let $T_{i,jk} = B_k^{*}(U_i)\tbtheta_{S}(U_i)^{\top}\LRs{\bX - \bmu_1(U_i)}_{S}(X_{ij}-\mu_{1j}(U_i))$ for $1\leq j\leq p$ and $1\leq k\leq L_n$, then we have
	\begin{equation}\label{step_1_inf_0}
	    \LRtwonorm{(\bI_{11}^1-\bI_1^{*})_{(j,S)}\tgamma_{(S)}}\leq L_n\max_{1\leq k \leq L_n}\left|\frac{1}{n}\sum_{i\in \mathcal{I}_1} T_{i,jk}-\E_1[T_{i,jk}]\right|.
	\end{equation}
	In addition,
	\begin{equation*}
		\begin{aligned}
			&\E_1\left\{(T_{i,jk} - \E[T_{i,jk}])^2\exp\left[\eta\left|T_{i,jk} - \E_1[T_{i,jk}]\right|\right]\right\}\\
			\leq&\E_1\left\{T_{i,jk}^2\exp\left[\eta\left|T_{i,jk} - \E_1[T_{i,jk}]\right|\right]\right\} + [\E_1[T_{i,jk}]]^2\E_1\left\{\exp\left[\eta\left|T_{i,jk} - \E_1[T_{i,jk}]\right|\right]\right\}\\
			\leq&\exp(\eta |\E_1[T_{i,jk}]|)\left(\E_1\left\{T_{i,jk}^2\exp\left[\eta|T_{i,jk}|\right]\right\}+ [\E_1[T_{i,jk}]]^2\E_1\left\{\exp\left[\eta\left|T_{i,jk}\right|\right]\right\}\right).
		\end{aligned}
	\end{equation*}
Recall that $\E_1[(X_{ij}-\mu_{1j}(U_i))(\bX - \bmu_1(U_i))|U_i] = \bSigma_{j,S}(U_i)$, we have
\begin{equation*}
	\begin{aligned}
	        \E_1[T_{i,jk}] \leq& \E_1\LRm{B_k^{*}(U_i)\tbtheta_{S}(U_i)^{\top}\LRs{\bX - \bmu_1(U_i)}_{S}(X_{ij}-\mu_{1j}(U_i))}\\
	        \leq& \E\LRm{B_k^{*}(U_i) \LRabs{\tbtheta_{S}(U_i)^{\top}\bSigma_{j,S}(U_i)}}\\
	        \leq& \lambda_1 \sup_{u\in [0,1]}\twonorm{\tbtheta_{S}(u)} \E\LRm{B_k^{*}(U_i)}\\
	        \lesssim& L_n^{-1}.
	\end{aligned}
\end{equation*}
Denote $H_i = \tbtheta_{S}(U_i)^{\top}\LRs{\bX - \bmu_1(U_i)}_{S}$. Applying the second assertion of Lemma \ref{lemma:normal_moment_bound}, for any $\eta >$ for sufficiently large $C > 0$, we have
\begin{equation*}
	\begin{aligned}
		&\E_1\left\{T_{i,jk}^2\exp\left[\eta|T_{i,jk}|\right]\right\}\\
		& \leq \E_1\left\{(B_k^{*}(U_i))^2(X_{ij}-\mu_{1j}(U_i))^2H_i^2\exp\left[\eta|(X_{ij}-\mu_{1j}(U_i))H_i|\right]\right\}\\
		&\leq \E_1\LRm{(B_k^{*}(U_i))^2\LRs{\E\LRm{(X_{ij}-\mu_{1j}(U_i))^4 e^{2\eta |X_{ij}-\mu_{1j}(U_i)|}\big|U_i} \E\LRm{H_i^4 e^{2\eta |H_i|}\big|U_i}}^{1/2}}\\
		&\lesssim \E\LRm{(B_k^{*}(U_i))^2} \lesssim L_n^{-1}.
	\end{aligned}
\end{equation*}
In fact, we also used the fact $\E[H^2(U_i)|U_i]$ and $\E[(X_{ij}-\mu_{1j}(U_i))^2]$ are both bounded.
Moreover, for $\eta = 1/(2C\lambda_1)$ for sufficiently large $C > 0$, it holds
\begin{align*}
    \E_1\LRm{e^{\eta |T_{i,jk}|}} &\leq \E_1\LRm{e^{\eta |(X_{ij}-\mu_{1j}(U_i))H_i|}}\\
    &\leq \E_1\LRm{e^{\eta (|X_{ij}-\mu_{1j}(U_i)|^2 + H_i^2|)}}\\
    &\leq \E_1\LRm{\LRs{\E\LRm{e^{2\eta|X_{ij}-\mu_{1j}(U_i)|^2}\big| U_i} \E_1\LRm{e^{2\eta |H_i|^2}\big| U_i}}^{1/2} }\\
    &= O(1).
\end{align*}
Combing the results above, we conclude that
\begin{equation*}
	\E_1\left\{(T_{i,jk} - \E[T_{i,jk}])^2\exp\left[\eta\left|T_{i,jk} - \E[T_{i,jk}]\right|\right]\right\}\lesssim L_n^{-1}.
\end{equation*}
According to Lemma \ref{bern_ineq}, we are guaranteed that
\begin{equation*}
	\Prob\left(\max_{j,k}\left|\frac{1}{n}\sum_{i\in \mathcal{I}_1}T_{i,jk} - \E_1[T_{i,jk}]\right|\lesssim \sqrt{\frac{\log p}{nL_n}}\right)\geq 1 - L_np^{-\vartheta}.
\end{equation*}
In conjunction with (\ref{step_1_inf_0}), it follows that
\begin{equation}\label{step_1_inf_1}
    \Prob\left(\max_{1\leq j \leq p}\LRtwonorm{(\bI_{11}^1-\bI_1^{*})_{(j,S)}\tgamma_{(S)}}\lesssim \sqrt{\frac{L_n \log p}{n}}\right) \geq 1- L_np^{-\vartheta}.
\end{equation}
For $\bI_{12}^1$ and $\bI_{13}^1$, we have
\begin{equation}\label{step_1_inf_2}
    \begin{aligned}
        \LRtwonorm{(\bI_{12}^1)_{(j,S)}\tgamma_{(S)}} \leq L_n \max_{1\leq k\leq L_n}\left|\frac{1}{n}\sum_{i=1}^n E_{i,jk}\right|,
    \end{aligned}
\end{equation}
and
\begin{equation}\label{step_1_inf_3}
    \begin{aligned}
        \LRtwonorm{(\bI_{13}^1)_{(j,S)}\tgamma_{(S)}} \leq  L_n \max_{1\leq k\leq L_n}\left|\frac{1}{n}\sum_{i=1}^n F_{i, jk}\right|,
    \end{aligned}
\end{equation}
where
\begin{equation*}
	E_{i,jk} = B_k^{*}(U_i)(X_{ij}-\mu_{1j}(U_i))\tbtheta_{S}(U_i)^{\top}\LRs{\bmu_1(U_i) - \bmu_2(U_i)}_{S},
\end{equation*}
and
\begin{equation*}
	F_{i,jk} = B_k^{*}(U_i)(X_{ij}-\mu_{1j}(U_i))\tbtheta_{S}(U_i)^{\top}\LRs{\bX - \bmu_1(U_i)}_{S}.
\end{equation*}
Due to the fact that $\sum_{l=1}^s(\tnu_{(l)}^{\top}\bB_i^{*})^2 \leq 1$, we can verify that
\begin{equation*}
    \max\left\{\E_1\left(E_{i,jk}^2e^{\eta|E_{i,jk}|}\right)^2, \E_1\left(F_{i,jk}^2e^{\eta|F_{i,jk}|}\right)^2\right\}\lesssim L_n^{-1}.
\end{equation*}
Combining with Lemma \ref{bern_ineq}, (\ref{step_1_inf_2}) and (\ref{step_1_inf_3}), we are guaranteed that 
\begin{equation}\label{step_1_inf_4}
	\Prob\left(\max_{1\leq j \leq p}\left\{\LRtwonorm{(\bI_{12}^1)_{(j,S)}\tgamma_{(S)}} + \LRtwonorm{(\bI_{13}^1)_{(j,S)}\tgamma_{(S)}}\right\} \leq \sqrt{\frac{L_n\log p}{n}}\right)\geq 1- L_np^{-\vartheta}.
\end{equation}
For
\begin{align*}
    \bI_{14}^1 - \bI_3^{*}/2 &= \frac{1}{8n}\sum_{i\in \mathcal{I}_1}\left[(\bmu_1(U_i)-\bmu_2(U_i))(\bmu_{1}(U_i)-\bmu_2(U_i))^{\top}\right]\otimes (\bB_i\bB_i^{\top})\\
    &- \frac{1}{8}\E\left\{\left[(\bmu_{1}(U)-\bmu_{2}(U))(\bmu_{1}(U)-\bmu_{2}(U))^{\top}\right]\otimes (\bB\bB^{\top})\right\},
\end{align*}
it is easy to obtain that
\begin{equation}\label{step_1_inf_5}
	\Prob\left(\max_{1\leq j\leq p}\LRtwonorm{(\bI_{14}^1 - \frac{1}{2}\bI_3^{*})_{(j,S)}\tgamma_{(S)}}\lesssim \sqrt{\frac{L_n\log p}{n}}\right)\geq 1 - L_np^{-\vartheta}.
\end{equation}
Combining (\ref{step_1_inf_2}), (\ref{step_1_inf_4}) and (\ref{step_1_inf_5}), we have
\begin{equation}\label{step_1_inf}
	\Prob\left(\max_{1\leq j\leq p}\LRtwonorm{(\bI_1^1-\bI_1^{*}-\frac{1}{2}\bI_3^{*})_{(j,S)}\tgamma_{(S)}}\lesssim \sqrt{\frac{L_n\log p}{n}}\right)\geq 1- L_np^{-\vartheta}.
\end{equation}
Similarly, we also have
\begin{equation}\label{step_1_inf_c}
	\Prob\left(\max_{1\leq j\leq p}\LRtwonorm{(\bI_1^2-\bI_2^{*}-\frac{1}{2}\bI_3^{*})_{(j,S)}\tgamma_{(S)}}\lesssim \sqrt{\frac{L_n\log p}{n}}\right)\geq 1- L_np^{-\vartheta}.
\end{equation}
\paragraph{Step 2. upper bounding $\max_{1\leq j\leq p}\twonorm{(\bI_2^1 + \bI_2^2+\bI_3^1 + \bI_3^2 +\bI_4^1 + \bI_4^2)_{(j,S)}\tgamma_{(S)}}$.} Recall that
\begin{align*}
    \bI_2^1+ \bI_2^2&=\frac{1}{2n}\sum_{i\in \mathcal{I}_1}\left\{\left[(\bX_i-\bmu(U_i))(\bmu(U_i)-\hmu(U_i))^{\top}\right]\otimes (\bB_i\bB_i^{\top})\right\}\\
    &+\frac{1}{2n}\sum_{i\in \mathcal{I}_2}\left\{\left[(\bX_i-\bmu(U_i))(\bmu(U_i)-\hmu(U_i))^{\top}\right]\otimes (\bB_i\bB_i^{\top})\right\},
\end{align*}
then we have
\begin{equation*}
	\begin{aligned}
		&\twonorm{(\bI_2^1+ \bI_2^2)_{(j,S)}\tgamma_{(S)}}\\
		&\leq L_n\max_{1\leq k\leq L_n}\left|\frac{1}{2n}\sum_{i\in \mathcal{I}_1}(X_{ij}-\mu_{1j}(U_i))B_k^{*}(U_i)\tbtheta(U_i)_{S}^{\top}(\hmu(U_i) -\bmu(U_i))_{S}\right|\\
		&+L_n\max_{1\leq k\leq L_n}\left|\frac{1}{2n}\sum_{i\in \mathcal{I}_2}(X_{ij}-\mu_{2j}(U_i))B_k^{*}(U_i)\tbtheta(U_i)_{S}^{\top}(\hmu(U_i) -\bmu(U_i))_{S}\right|\\
		&+L_n\max_{1\leq k\leq L_n}\left|\frac{1}{4n}\sum_{i\in \mathcal{I}_1} G_{i,jk}(U_i) - \frac{1}{4n}\sum_{i\in \mathcal{I}_2} G_{i,jk}(U_i)\right|,
	\end{aligned}
\end{equation*}
where $G_{i,jk}(U_i) = (\mu_{1j}(U_i)-\mu_{2j}(U_i))B_k^{*}(U_i)\tbtheta(U_i)_{S}^{\top}(\hmu(U_i) -\bmu(U_i))_{S}$. Notice that
\begin{align*}
    \tbtheta(U_i)_{S}^{\top}(\hmu(U_i) -\bmu(U_i))_{S} &= \sum_{l=1}^s (\halpha_{l} - \talpha_{l})^{\top}\bB_i \widetilde{\theta}_{l}(U_i)\\
    &+\sum_{l=1}^s (\talpha_{l}^{\top}\bB_i - \mu_{l}(U_i)) \widetilde{\theta}_{l}(U_i)\\
    &= \tbtheta(U_i)_{S}^{\top} \LRs{\hbM_{.S} - \tbM_{.S}}^{\top}\bB_i + \tbtheta(U_i)_{S}^{\top}\LRs{\tbM_{.S}^{\top}\bB_i},
\end{align*}
where $\hbM_{.S} = (\halpha_{1},...,\halpha_{s})$ and $\tbM_{.S} = (\talpha_1,...,\talpha_{s})$. Then we have
\begin{align*}
    &\left|\frac{1}{2n}\sum_{i\in \mathcal{I}_1}(X_{ij}-\mu_{1j}(U_i))B_k^{*}(U_i)\sum_{l=1}^s [\hat{\mu}_l(U_i) - \mu_l(U_i)]\widetilde{\theta}_{l}(U_i)\right|\\
    &\leq \LRabs{\frac{1}{2n}\sum_{i\in \mathcal{I}_1}(X_{ij}-\mu_{1j}(U_i))B_k^{*}(U_i) \tbtheta(U_i)_{S}^{\top} \LRs{\hbM_{.S} - \tbM_{.S}}^{\top}\bB_i}\\
    &+\left|\frac{1}{2n}\sum_{i\in \mathcal{I}_1}(X_{ij}-\mu_{1j}(U_i))B_k^{*}(U_i)\tbtheta(U_i)_{S}^{\top}\LRs{\tbM_{.S}^{\top}\bB_i}\right|.
\end{align*}
From Proposition \ref{pro_mean_bound}, we know that $\|\hbM_{.S} - \tbM_{.S}\|_{F}\lesssim \sqrt{s}a_n$. By utilizing the same chaining technique in Section \ref{proof:lemma_matrix_bound_4} to $\hbM_{.S} - \tbM_{.S}$, we can show that
\begin{equation}\label{step_2_inf}
	\Prob\left(\max_{1\leq j\leq p}\LRtwonorm{(\bI_2^1+\bI_2^2)_{(j,S)}\tgamma_{(S)}}\lesssim a_nL_n s\sqrt{\frac{\log p}{n}}\right)\geq 1 - p^{-\vartheta sL_n}.
\end{equation}
Recall that
\begin{align*}
    \bI_3^1+ \bI_3^2 &= \frac{1}{2n}\sum_{i\in \mathcal{I}_1}\left\{\left[(\bmu(U_i)-\hmu(U_i))(\bX_i-\bmu(U_i))^{\top}\right]\otimes (\bB_i\bB_i^{\top})\right\}\\
    &+\frac{1}{2n}\sum_{i\in \mathcal{I}_2}\left\{\left[(\bmu(U_i)-\hmu(U_i))(\bX_i-\bmu(U_i))^{\top}\right]\otimes (\bB_i\bB_i^{\top})\right\}.
\end{align*}
Similarly, we can verify
\begin{equation}\label{step_2_inf_2}
	\Prob\left(\max_{1\leq j\leq p}\LRtwonorm{(\bI_3^1+\bI_3^2)_{(j,S)}\tgamma_{(S)}}\lesssim a_nL_n\sqrt{\frac{\log p}{n}}\right)\geq 1- p^{-\vartheta L_n}.
\end{equation}
For
\begin{align*}
    \bI_4^1 + \bI_4^2 &= \frac{1}{2n}\sum_{i\in \mathcal{I}_1}\left[(\bmu(U_i)-\hmu(U_i))(\bmu(U_i)-\hmu(U_i))^{\top}\right]\otimes (\bB_i\bB_i^{\top})\\
    &+\frac{1}{2n}\sum_{i\in \mathcal{I}_2}\left[(\bmu(U_i)-\hmu(U_i))(\bmu(U_i)-\hmu(U_i))^{\top}\right]\otimes (\bB_i\bB_i^{\top}),
\end{align*}
it follows from Proposition \ref{pro_mean_bound} that
\begin{equation}\label{step_2_inf_3c}
\begin{aligned}
    &\LRtwonorm{(\bI_4^1 + \bI_4^2)_{(j,S)}\tgamma_{(S)}}\\
    &\leq L_n \max_{1\leq k \leq L_n, 1\leq i \leq 2n} \LRabs{B_k^{*}(U_i)[\hat{\mu}_j(U_i) - \mu_j(U_i)]\tbtheta(U_i)_{S}^{\top}(\hmu(U_i) -\bmu(U_i))_{S}}\\
    &\leq L_n a_n \max_{1\leq i \leq 2n} \LRabs{\tbtheta(U_i)_{S}^{\top}(\hmu(U_i) -\bmu(U_i))_{S}}\\
    &\lesssim L_n a_n \max_{1\leq i \leq 2n}\LRtwonorm{\LRs{\widehat{\bmu}(U_i) - \bmu(U_i)}_{S}}\twonorm{\tbtheta(U_i)_{S}}\\
    &\leq \sqrt{s}L_n a_n^2,
\end{aligned}
\end{equation}
holds with probability at least $1 - sL_np^{-\vartheta}$. Combining (\ref{step_1_inf})-(\ref{step_2_inf_3c}), we have
\begin{equation*}
	\max_{1\leq j\leq p}\twonorm{(\bD_n - \bD)_{(j,S)}\tgamma_{(S)}}\lesssim \sqrt{\frac{L_n\log p}{n}}+a_nL_n s\sqrt{\frac{\log p}{n}},
\end{equation*}
holds with probability at least $1-sL_np^{-\vartheta} - L_np^{-\vartheta sL_n} - L_np^{-\vartheta}$.
\end{proof}

\section{Proofs of auxiliary lemmas in Section \ref{sec:main_lemma_proofs}}\label{sec::lemma_proof}

\subsection{Proof of Lemma \ref{lemma_matrix_bound_1}}\label{proof:lemma_matrix_bound_1}
\begin{proof}[Proof of Lemma \ref{lemma_matrix_bound_1}]
	We only prove the bound for $\mathcal{I}_1$, and the case in $\mathcal{I}_2$ is similar. Here we use $\E_1[\cdot]$ to denote the conditional expectation $\E[\cdot| Y = 1]$. 
	
	Let $\mathbb{S}^{pL_n-1}$ be the unit sphere in $\mathbb{R}^{pL_n}$, we denote the $1/8$-covering set of $\mathbb{S}^{pL_n-1}$ by $\{\bnu_1,...,\bnu_{N}\}$ with $N\leq 17^{(pL_n)}$. It follows that for any $\bnu\in \mathbb{S}^{pL_n-1}$, there exist some $\bnu_l$ such that $\twonorm{\bnu - \bnu_l}\leq 1/8$. Let $\bQ_i = \bZ_i\bZ_i^{\top} - \E_1[\bZ_i\bZ_i^{\top}]$, then we have
		\begin{equation*}
		\LRtwonorm{\frac{1}{n}\sum_{i\in \mathcal{I}_1} \bQ_i}\leq 2\max_{1\leq l\leq N}\LRabs{\frac{1}{n}\sum_{i\in \mathcal{I}_1}\bnu_l^{\top}\bQ_i\bnu_l}.
	\end{equation*}
	Note that
	\begin{equation*}
		\begin{aligned}
			L_n^{-1}\bnu_l^{\top}\bQ_i\bnu_l &= \LRs{\bnu_l^{\top}[(\bX_i - \bmu_1(U_i))\otimes \bB_i^{*})]}^2 -\E_1\LRs{\bnu_l^{\top}[(\bX_i - \bmu_1(U_i))\otimes \bB_i^{*})]}^2\\
			& = [\tnu_{li}^{\top}(\bX_i - \bmu_1(U_i))]^2 - \E_1[\tnu_{li}^{\top}(\bX_i - \bmu_1(U_i))]^2
		\end{aligned}
	\end{equation*}
	where $\tnu_{l}(U_i) = ((\bnu_{l})_{(1)}^{\top}\bB_i^{*},...,(\bnu_{l})_{(p)}^{\top}\bB_i^{*})^{\top}$ for $\bnu_{l} = ((\bnu_{l})_{(1)}^{\top},...,(\bnu_{l})_{(p)}^{\top})^{\top}$. Let $R_{li} = \tnu_{l}(U_i)^{\top}(\bX_i - \bmu_1(U_i))$, then $R_{li} \sim \mathcal{N}(0, \tnu_{l}(U_i)^{\top}\bSigma(U_i)\tnu_{l}(U_i))$ given $U_i$ and $Y_i = 1$. Let $\sigma_{li}^2 = \tnu_{l}(U_i)^{\top}\bSigma(U_i)\tnu_{l}(U_i)$, then we have $\sigma_{li}^2 \leq \lambda_2$
	\begin{align*}
	    \E_1[\sigma_{li}^2] &\leq \E_1\LRm{\twonorm{\tnu_{l}(U_i)}^2 \twonorm{\bSigma(U_i)}} \leq \lambda_1\E\LRm{\twonorm{\tnu_{l}(U_i)}^2}\\
	    &\leq \lambda_1 \sum_{j=1}^p\E_1\LRm{\LRs{\bnu_{(j)}^{\top}\bB_i^*}^2} \lesssim L_n^{-1}\sum_{j=1}^p\twonorm{\bnu_{(j)}}^2 = L_n^{-1}.
	\end{align*}
	For $\eta = 1/(8\lambda_2M_2)$, simple calculation gives $\E_1[e^{\eta R_{li}^2}] = O(1)$. Using H\"older's inequality, we also have
	\begin{equation*}
		\begin{aligned}
			&\E_1\LRm{\LRs{R_{li}^2 - \E_1[R_{li}^2]}^2 \exp\LRs{\eta|R_{li}^2 - \E_1[R_{li}^2]|}}\\
			\leq& 2\exp\LRs{\eta\E_1[R_{li}^2]}\LRs{\E_1\LRm{R_{li}^4e^{\eta R_{li}^2}} + \LRs{\E_1\LRm{R_{li}^2}}^2 \E_1\LRm{e^{\eta R_{li}^2}}}\\
			\lesssim & \E_1\LRm{\sigma_{li}^4 \E_1\LRm{\frac{R_{li}^4}{\sigma_{li}^4} e^{\eta R_{li}^2}|U_i}}+ \LRs{\E_1\LRm{\sigma_{li}^2}}^2\\
			\leq & \E_1\LRm{\sigma_{li}^2 \LRs{ \E_1\LRm{(R_{li}/\sigma_{li})^8|U_i} \E_1\LRm{e^{2\eta R_{li}^2}|U_i}}^{1/2}}+ L_n^{-2}\\
			\lesssim & \E_1\LRm{\sigma_{li}^2} + L_n^{-2} \lesssim L_n^{-1}.
		\end{aligned}
	\end{equation*}
	Invoking Lemma \ref{bern_ineq}, we can obtain that
	\begin{equation*}
		\Prob\LRs{\max_{1\leq j\leq N}\LRabs{\frac{1}{n}\sum_{i=1}^n\bnu_l^{\top}\bQ_i\bnu_l}\lesssim L_n\sqrt{\frac{p\log n}{n}}} \geq 1 - n^{-\vartheta p L_n}.
	\end{equation*}
	
	Next we prove the conclusion for matrix-vector-product. It suffices to show that for any fixed $\bnu \in \mathbb{R}^{pL_n-1}$,
	\begin{align}\label{eq:mat_vec_bound}
	    \mathbb{P}\LRs{\LRabs{\frac{1}{n}\sum_{i\in \mathcal{I}_1} \bnu^{\top}\LRs{\bZ_i\bZ_i^{\top} - \E_1\LRm{\bZ_i\bZ_i^{\top}}} \tgamma} \lesssim \sqrt{\frac{L_n p \log n}{n}}} \leq n^{-\vartheta pL_n}.
	\end{align}
	Let $\tnu(U_i) = (\bnu_{(1)}^{\top}\bB_i^*,...,\bnu_{(p)}^{\top}\bB_i^*)^{\top}$, then notice that
	\begin{align*}
	    \bnu^{\top}\bZ_i\bZ_i^{\top} \tgamma &= \bnu^{\top}\LRs{\LRs{\bX_i - \bmu_1(U_i)}\otimes \bB_i}\tgamma^{\top}\LRs{\LRs{\bX_i - \bmu_1(U_i)}\otimes \bB_i}\\
	    &= \sqrt{L_n}\LRs{\bX_i - \bmu_1(U_i)}^{\top}\tnu(U_i) \LRs{\bX_i - \bmu_1(U_i)}^{\top}\tbtheta(U_i).
	\end{align*}
	Denote $R_i = \LRs{\bX_i - \bmu_1(U_i)}^{\top}\tnu(U_i)$ and $S_{i} = \LRs{\bX_i - \bmu_1(U_i)}^{\top}\tbtheta(U_i)$, then
	\begin{align*}
	    R_i|U_i \sim \mathcal{N}\LRs{0, \tnu(U_i)^{\top}\bSigma(U_i)\tnu(U_i)},\quad S_i|U_i \sim \mathcal{N}\LRs{0, \tbtheta(U_i)^{\top}\bSigma(U_i)\tbtheta(U_i)}.
	\end{align*}
	Also, we know that $\E_1[R_i^2|U_i] \leq \lambda_1 $ and $\E_1[S_i^2|U_i] \lesssim \lambda_1$ since $\twonorm{\tnu(u)}\leq 1$ and
	\begin{align*}
	    \sup_{u\in [0,1]}\twonorm{\tbtheta(u)} \lesssim \twonorm{\btheta^*(u)} + \sqrt{p}L_n^{-d} = O(1),
	\end{align*}
	which is true due to the assumptions $\sup_{u\in [0,1]}\twonorm{\btheta^*(u)} = O(1)$ and $\sqrt{p}L_n^{-d} = o(1)$. Using H\"older's inequality, we have
	\begin{align}
	    &\E_1\LRm{(R_iS_i - \E_1[R_iS_i])^2 e^{\eta |R_i S_i - \E_1[R_iS_i]|}}\nonumber\\
	    \leq & 2\exp\LRs{\eta |\E_1[R_iS_i]|}\LRs{\E_1\LRm{R_i^2S_i^2 e^{\eta|R_iS_i|}} + \LRs{\E_1\LRm{R_iS_i}}^2 \E_1\LRm{e^{\eta|R_iS_i|}}}\nonumber\\
	    \lesssim & \E_1\LRm{R_i^2S_i^2 e^{\eta|R_iS_i|}} + \E_1[R_i^2]\E_1[S_i^2]\E_1\LRm{e^{\eta|R_iS_i|}}.
	    \label{eq:Si_Ri_moment}
	\end{align}
	Denote $\sigma_i^2 = \tnu(U_i)^{\top}\bSigma(U_i)\tnu(U_i)$.
	Then applying Lemma \ref{lemma:normal_moment_bound} and moment formula of normal distribution, it holds that
	\begin{align*}
	    \E_1\LRm{R_i^4 e^{2\eta |R_i|}\big| U_i} &\leq 32e^{2\eta^2 \sigma_i^2}\LRs{(\eta\sigma_i^2)^4 + 6(\eta\sigma_i^2)^2\sigma_i^2 + 3\sigma_i^4}\\
	    &\lesssim \sigma_i^4 = \LRs{\tnu(U_i)^{\top}\bSigma(U_i)\tnu(U_i)}^2,
	\end{align*}
	where the second inequality holds since $\sigma_i$ is bounded. Similarly, we can also verify that
	\begin{align*}
	    \E_1\LRm{S_i^4 e^{2\eta |S_i|}\big| U_i} \lesssim \LRs{\tbtheta(U_i)^{\top}\bSigma(U_i)\tbtheta(U_i)}^2.
	\end{align*}
	It follows from $\tbtheta(U_i)^{\top}\bSigma(U_i)\tbtheta(U_i)$ is bounded that
	\begin{align}
	    \E_1\LRm{R_i^2S_i^2 e^{\eta|R_iS_i|}} &= \E_1\LRm{\E_1\LRm{R_i^2S_i^2 e^{\eta|R_iS_i|}\big| U_i}}\nonumber\\
	    &\leq \E_1\LRm{\LRs{\E_1\LRm{R_i^4 e^{2\eta |R_i|}\big| U_i} \E_1\LRm{S_i^4 e^{2\eta |S_i|}\big| U_i}}^{1/2}}\nonumber\\
	    &\lesssim \E_1\LRm{\LRs{\tnu(U_i)^{\top}\bSigma(U_i)\tnu(U_i)}\LRs{\tbtheta(U_i)^{\top}\bSigma(U_i)\tbtheta(U_i)}}\nonumber\\
	    &\lesssim \E_1\LRm{\tnu(U_i)^{\top}\bSigma(U_i)\tnu(U_i)}\nonumber\\
	    &\lesssim \E_1\LRm{\twonorm{\tnu(U_i)}^2} \lesssim L_n^{-1}.
	    \label{eq:Si_Ri_moment_1}
	\end{align}
	Then we take $\eta = 1/(C\lambda_1)$ for sufficiently large $C > 0$, it holds that
	\begin{align}
	    \E_1\LRm{e^{\eta|R_iS_i|}} &\leq \LRs{\E_1\LRm{e^{\eta (R_i^2 + S_i^2)}}}\leq \LRs{\E_1\LRm{e^{2\eta R_i^2}} \E_1\LRm{e^{2\eta S_i^2}}}^{1/2} = O(1).
	    \label{eq:Si_Ri_moment_2}
	\end{align}
	Substituting \eqref{eq:Si_Ri_moment_1} and \eqref{eq:Si_Ri_moment_2} into \eqref{eq:Si_Ri_moment}, we have
	\begin{align*}
	     \E_1\LRm{(R_iS_i - \E_1[R_iS_i])^2 e^{\eta |R_i S_i - \E_1[R_iS_i]|}}\lesssim L_n^{-1}.
	\end{align*}
	Applying Lemma \ref{bern_ineq} again, we can prove \eqref{eq:mat_vec_bound} immediately.
\end{proof}
\subsection{Proof of Lemma \ref{lemma_matrix_bound_2} and \ref{lemma_matrix_bound_3}}\label{proof:lemma_matrix_bound_2}
\begin{proof}
    The proofs of Lemma \ref{lemma_matrix_bound_2} and \ref{lemma_matrix_bound_3} are similar, here we only prove Lemma \ref{lemma_matrix_bound_2}.

    For the first assertion for operator norm of matrix, it suffices to show for any fixed $\bnu \in \mathbb{S}^{pL_n - 1}$ such that
	\begin{equation*}
		\left|\frac{1}{n}\sum_{i \in \mathcal{I}_1} \bnu^{\top}\LRl{[(\bX_i - \bmu_1(U_i))(\bmu_1(U_i) - \bmu_2(U_i))^{\top}]\otimes (\bB_i\bB_i^{\top})} \bnu\right|\lesssim L_n\sqrt{\frac{p \log n}{n}},
	\end{equation*}
	holds with probability at least $1-n^{-\vartheta p L_n}$.
	For $\bnu = (\bnu_{(1)}^{\top},...,\bnu_{(p)}^{\top})^{\top}\in \mathbb{S}^{pL_n-1}$, we write $\tnu(U_i) = (\bnu_{(1)}^{\top}\bB_i^{*},...,\bnu_{(p)}^{\top}\bB_i^{*})^{\top}$. Then we have
	\begin{align*}
		&\bnu^{\top}\LRl{[(\bX_i - \bmu_1(U_i))(\bmu_1(U_i) - \bmu_2(U_i))^{\top}]\otimes (\bB_i\bB_i^{\top})} \bnu\\
		&\qquad = L_n (\tnu(U_i)^{\top}(\bmu_1(U_i) - \bmu_2(U_i)))[\tnu(U_i)^{\top}(\bX_i-\bmu_{1}(U_i))].
	\end{align*}
	Let 
	\begin{align*}
	    T_{i} = (\tnu(U_i)^{\top}(\bmu_1(U_i) - \bmu_2(U_i)))[\tnu(U_i)^{\top}(\bX_i-\bmu_{1}(U_i))],
	\end{align*}
	and $\sigma_i^2 = \tnu(U_i)^{\top}\bSigma(U_i)\tnu(U_i)$, then for $\eta > 0$ it holds that
	\begin{equation*}
		\begin{aligned}
			\E_1[T_i^2e^{\eta|T_i|}]&\lesssim \E_1\LRm{\LRs{\tnu(U_i)^{\top}(\bX_i-\bmu_{1}(U_i))}^2 e^{\eta |\tnu(U_i)^{\top}(\bX_i-\bmu_{1}(U_i))|}}\\
			&\leq 2\E\LRm{e^{\frac{\eta^2\sigma_i^2}{2}} \LRs{\sigma_i^2 + \phi^2 \sigma_i^4}}\\
			&\lesssim \E\LRm{\twonorm{\tnu(U_i)}^2} \lesssim L_n^{-1},
		\end{aligned}
	\end{equation*}
	where the first inequality follows from $|\tnu(u)^{\top}(\bmu_1(u) - \bmu_2(u))|$ is uniformly bounded. By Lemma \ref{bern_ineq}, we are guaranteed that
	\begin{equation*}
		\Prob\LRs{\left|\frac{1}{n}\sum_{i\in \mathcal{I}_1}T_i\right|\lesssim \sqrt{\frac{p\log n}{n}}} \geq 1 - n^{-\vartheta p L_n}.
	\end{equation*}
	
	For the assertion for matrix-vector-product, it suffices to show
	\begin{align*}
	    \left|\frac{1}{n}\sum_{i \in \mathcal{I}_1} \bnu^{\top}\LRl{[(\bX_i - \bmu_1(U_i))(\bmu_1(U_i) - \bmu_2(U_i))^{\top}]\otimes (\bB_i\bB_i^{\top})} \tgamma\right|\lesssim \sqrt{\frac{pL_n \log n}{n}}
	\end{align*}
	holds with probability at least $1-n^{-\vartheta p L_n}$. Notice that
	\begin{align*}
	    &\bnu^{\top}\LRl{[(\bX_i - \bmu_1(U_i))(\bmu_1(U_i) - \bmu_2(U_i))^{\top}]\otimes (\bB_i\bB_i^{\top})} \tgamma\\
	    =& \sqrt{L_n} \LRs{\tnu(U_i)^{\top}(\bX_i-\bmu_{1}(U_i))} \LRs{\tbtheta(U_i)^{\top}(\bmu_1(U_i) - \bmu_2(U_i))},
	\end{align*}
	and
	\begin{align*}
	    \LRabs{\tbtheta(U_i)^{\top}(\bmu_1(U_i) - \bmu_2(U_i))}\leq \delta_p \sup_{u\in[0,1]} \twonorm{\tbtheta(u)} = O(1).
	\end{align*}
	Denote $R_i = \LRs{\tnu(U_i)^{\top}(\bX_i-\bmu_{1}(U_i))} \LRs{\tbtheta(U_i)^{\top}(\bmu_1(U_i) - \bmu_2(U_i))}$. Then we can get
	\begin{align*}
	    \E_1\LRm{R_i^2e^{\eta |R_i|}} &\lesssim \E_1\LRm{\LRs{\tnu(U_i)^{\top}(\bX_i-\bmu_{1}(U_i))}^2 e^{\eta |\tnu(U_i)^{\top}(\bX_i-\bmu_{1}(U_i))|}}\\
	    &\lesssim L_n^{-1}.
	\end{align*}
	Applying Lemma \ref{bern_ineq}, we can prove the desired result.
\end{proof}

\subsection{Proof of Lemma \ref{lemma_matrix_bound_4}}\label{proof:lemma_matrix_bound_4}
\begin{proof}
    Let $\tbM = (\tbM_{1},...,\tbM_{p}) \in \mathbb{R}^{L_n\times p}$ where the $j$-th column $\bM_j = \frac{1}{2}(\talpha_{1j}+\talpha_{2j})$ and
    \begin{align*}
        \talpha_{1j} = \LRs{\E[\bB\bB^{\top}]}^{-1}\E[\bB X_j|Y = 1],\quad \talpha_{1j} = \LRs{\E[\bB\bB^{\top}]}^{-1}\E[\bB X_j|Y = 0].
    \end{align*}
    Similarly, we denote $\hbM = (\hbM_1,...,\hbM_p) \in \mathbb{R}^{L_n\times p}$, where the $j$-th column $\hbM_j = \frac{1}{2}(\halpha_{1j}+\halpha_{2j})$ and
    \begin{align*}
        \halpha_{1j} = \LRs{\frac{1}{n}\sum_{i\in \mathcal{I}_1} \bB_i\bB_i^{\top}}^{-1}\frac{1}{n}\sum_{i\in \mathcal{I}_1}\bB_i X_{ij},\ \halpha_{2j} = \LRs{\frac{1}{n}\sum_{i\in \mathcal{I}_2} \bB_i\bB_i^{\top}}^{-1}\frac{1}{n}\sum_{i\in \mathcal{I}_2}\bB_i X_{ij}.
    \end{align*}
    Recalling the approximation error bound in the proof of Proposition \ref{pro_mean_bound}:
	\begin{align*}
	    \sup_{u \in [0,1]}|\talpha_{1j}^{\top}\bB(u) - \mu_{1j}(u)| \lesssim L_n^{-d},\quad \sup_{u \in [0,1]}|\talpha_{2j}^{\top}\bB(u) - \mu_{2j}(u)| \lesssim L_n^{-d}.
	\end{align*}
    It means that $\twonorm{\tbM^{\top}\bB_i - \bmu(U_i)} \lesssim \sqrt{p}L_n^{-d}$. In addition, we define the good event:
    \begin{align*}
        \mathcal{A} := \LRl{\twonorm{\hbM_j - \tbM_j} \lesssim a_n:\text{ for }1\leq j \leq p},
    \end{align*}
    where $a_n = \sqrt{\frac{L_n\log n}{n}}+L_n^{-d}$. By Proposition \ref{pro_mean_bound}, we know that $\mathbb{P}(\mathcal{A}^c) \leq L_n n^{-\vartheta}$. Next we prove Lemma \ref{lemma_matrix_bound_3} under the good event $\mathcal{A}$.
    
    We first prove the bound for the operator norm of the matrix. Let $\bA_i = [(\hmu(U_i) - \bmu(U_i))(\bX_i-\bmu_{1}(U_i))^{\top}]\otimes (\bB_i\bB_i^{\top})$. According to the proof of Lemma \ref{lemma_matrix_bound_1}, it suffices to show that for any fixed $\bnu, \bxi \in \mathbb{S}^{pL_n - 1}$ such that
	\begin{equation*}
		\left|\frac{1}{n}\sum_{i \in \mathcal{I}_1} \bnu^{\top}\bA_i \bxi\right|\lesssim a_n L_n^{3/2} p\sqrt{\frac{\log n}{n}}
	\end{equation*}
    holds with probability at least $1 - n^{-\vartheta pL_n}$.
	Notice that
	\begin{equation}\label{eq:A_matrix_bound}
		\begin{aligned}
			&\left|\frac{1}{n}\sum_{i \in \mathcal{I}_1} \bnu^{\top}\bA_i \bxi\right| = L_n\LRabs{\frac{1}{n}\sum_{i \in \mathcal{I}_1}[\tnu_{i}^{\top}(\hmu(U_i) -\bmu(U_i))][\txi_{i}^{\top}(\bX_i-\bmu_{1}(U_i))]}\\
			&\qquad\leq L_n\LRabs{\frac{1}{n}\sum_{i \in \mathcal{I}_1}[\txi(U_i)^{\top}(\bX_i-\bmu_{1}(U_i))][\tnu(U_i)^{\top}(\tbM^{\top}\bB_i - \bmu(U_i))]}\\
			&\qquad \qquad + L_n \LRabs{\frac{1}{n}\sum_{i \in \mathcal{I}_1}[\txi(U_i)^{\top}(\bX_i-\bmu_{1}(U_i))][\tnu(U_i)^{\top}(\hbM - \tbM)^{\top}\bB_i]},
		\end{aligned}
	\end{equation}
	where $\tnu(U_i) = (\bnu_{(1)}^{\top}\bB_i^{*},...,\bnu_{(p)}^{\top}\bB_i^{*})^{\top}$ and $\txi(U_i) = (\bxi_{(1)}^{\top}\bB_i^{*},...,\bxi_{(p)}^{\top}\bB_i^{*})^{\top}$ for $\bnu = (\bnu_{(1)}^{\top},...,\bnu_{(p)}^{\top})^{\top}$ and $\bxi = (\bxi_{(1)}^{\top},...,\bxi_{(p)}^{\top})^{\top}$.
	Also, we know that 
	\begin{align*}
	    \txi(U_i)^{\top}(\bX_i-\bmu_{1}(U_i))|U_i \sim \mathcal{N}\LRs{0, \txi(U_i)^{\top}\bSigma(U_i)\txi(U_i)}.
	\end{align*}
	Denote $T_{i} = [\txi(U_i)^{\top}(\bX_i-\bmu_{1}(U_i))][\tnu(U_i)^{\top}(\tbM^{\top}\bB_i - \bmu(U_i))]$. Apply the second assertion in Lemma \ref{lemma:normal_moment_bound} with any constant $\eta > 0$, we get
	\begin{align*}
	    &\E_1\LRm{T_i^2 e^{\eta |T_i|}} \leq \E_1\LRm{T_i^2 e^{\eta |\txi(U_i)^{\top}(\bX_i-\bmu_{1}(U_i))|}}\\
	    &\lesssim pL_n^{-2d}\E_1\LRm{\LRs{\txi(U_i)^{\top}(\bX_i-\bmu_{1}(U_i))}^2e^{\eta|\txi(U_i)^{\top}(\bX_i-\bmu_{1}(U_i))|}}\\
	    &\leq 2pL_n^{-2d} \E\LRm{e^{\frac{\eta^2 \txi(U_i)^{\top}\bSigma(U_i)\txi(U_i)}{2}}\LRs{\txi(U_i)^{\top}\bSigma(U_i)\txi(U_i) + \eta^2\LRs{\txi(U_i)^{\top}\bSigma(U_i)\txi(U_i)}^2}}\\
	    &\lesssim pL_n^{-2d} \E\LRm{\txi(U_i)^{\top}\bSigma(U_i)\txi(U_i)}\lesssim pL_n^{-2d} L_n^{-1},
	\end{align*}
	where we also used the fact
	\begin{align*}
	    \txi(U_i)^{\top}\bSigma(U_i)\txi(U_i) \leq \lambda_1 \twonorm{\txi(U_i)}^2 \leq \lambda_1 \twonorm{\bB_i^*}^2\sum_{j=1}^p \twonorm{\bnu_{(j)}}^2 \leq \lambda_1.
	\end{align*}
	Applying Lemma \ref{bern_ineq}, we can verify that
	\begin{equation}\label{mat_x_1}
		\LRabs{\frac{1}{n}\sum_{i \in \mathcal{I}_1}[\txi(U_i)^{\top}(\bX_i-\bmu_{1}(U_i))][\tnu(U_i)^{\top}(\tbM^{\top}\bB_i - \bmu(U_i))]} \lesssim pL_n^{-d}\sqrt{\frac{L_n\log n}{n}},
	\end{equation}
	holds with probability at least $1 - n^{-\vartheta p L_n}$. Next we proceed to bound the second term in \eqref{eq:A_matrix_bound}. We define a matrix set as
	\begin{equation*}
		\boldsymbol{\Xi} = \LRl{\bM \in \mathbb{R}^{L_n \times p}:\ \twonorm{\bM_{j}} \leq a_n \text{ for }j=1,2,...,p},
	\end{equation*}
	where $a_n = O(\sqrt{L_n\log n/n} + L_n^{-d})$ and $\bM_j$ is the $j$-th column of $\bM$. For each $1\leq j\leq p$, we may find a set $\{\boldsymbol{\zeta}^{\ell} \in \mathbb{R}^{L_n},1\leq \ell \leq n^{ML_n}:\ \twonorm{\boldsymbol{\zeta}^{\ell}}\leq a_n\}$ such that there exists some $1\leq \ell \leq n^{ML_n}$ satisfying that $\twonorm{\bM_j - \boldsymbol{\zeta}^{\ell}} \leq n^{-M}\sqrt{L_n}a_n$\footnote{For each coordinate of $\bM_j$, we can divide the interval $[-a_n, a_n]$ into $n^M$ small intervals with equal length $2a_n/N^M$.}. It means that we can find a subset $\boldsymbol{\Xi}^{'} = \{\bM^{\ell}: 1\leq \ell \leq n^{MpL_n}\} \subseteq \boldsymbol{\Xi}$. And for any $\bM \in \boldsymbol{\Xi}$, there exists some $1\leq \ell \leq n^{MpL_n}$ such that $\twonorm{\bM_{j}^{\ell} - \bM_{j}} \leq a_n \sqrt{L_n} n^{-M}$. It follows that for any $\bM \in \boldsymbol{\Xi}$
	\begin{align*}
	    &\LRabs{\frac{1}{n}\sum_{i \in \mathcal{I}_1}[\txi(U_i)^{\top}(\bX_i-\bmu_{1}(U_i))][\tnu(U_i)^{\top}\bM^{\top}\bB_i]}\\ &\qquad \leq  \LRabs{\frac{1}{n}\sum_{i \in \mathcal{I}_1}[\txi(U_i)^{\top}(\bX_i-\bmu_{1}(U_i))][\tnu(U_i)^{\top}(\bM^{\ell})^{\top}\bB_i]}\\
		&\qquad \qquad+ \LRabs{\frac{1}{n}\sum_{i \in \mathcal{I}_1}[\txi(U_i)^{\top}(\bX_i-\bmu_{1}(U_i))][\tnu(U_i)^{\top}(\bM^{\ell} - \bM)^{\top}\bB_i]}\\
		&\qquad\leq \LRabs{\frac{1}{n}\sum_{i \in \mathcal{I}_1}[\txi(U_i)^{\top}(\bX_i-\bmu_{1}(U_i))][\tnu(U_i)^{\top}(\bM^{\ell})^{\top}\bB_i]}\\
		&\qquad\qquad +a_n \sqrt{p} L_n n^{-M}\max_{i\in \mathcal{I}_1}\LRabs{\txi(U_i)^{\top}(\bX_i-\bmu_{1}(U_i))},
	\end{align*}
	where the last inequality is true since
	\begin{align*}
	    \LRabs{\tnu(U_i)^{\top}(\bM^{\ell})^{\top}\bB_i} \leq \twonorm{\tnu(U_i)}\LRs{\sum_{j=1}^p ((\bM_{j}^{\ell} - \bM_{j})^{\top} \bB_i)^2}^{1/2}\leq \sqrt{p}L_n a_n n^{-M}.
	\end{align*}
	Let $V_{\ell, i} = [\txi(U_i)^{\top}(\bX_i-\bmu_{1}(U_i))][\tnu(U_i)^{\top}(\bM^{\ell})^{\top}\bB_i^{*}]$. Then we have
	\begin{equation}\label{mat_x_2}
		\begin{aligned}
			&\LRabs{\frac{1}{n}\sum_{i \in \mathcal{I}_1}[\txi(U_i)^{\top}(\bX_i-\bmu_{1}(U_i))][\tnu(U_i)^{\top}(\hbM - \tbM)^{\top}\bB_i]}\\
			\leq & \sup_{\bM \in \boldsymbol{\Xi}}\LRabs{\frac{1}{n}\sum_{i \in \mathcal{I}_1}[\txi(U_i)^{\top}(\bX_i-\bmu_{1}(U_i))][\tnu(U_i)^{\top}\bM^{\top}\bB_i]}\\
			\leq & \max_{1\leq \ell\leq n^{MpL_n}} \LRabs{\frac{1}{n}\sum_{i \in \mathcal{I}_1} V_{\ell,i}}+ a_n \sqrt{p} L_n n^{-M}\sqrt{\log n},
		\end{aligned}
	\end{equation}
	where the last inequality follows from the bound for the maximal of $n$ independent Gaussian random variables and the variance of $\txi(U_i)^{\top}(\bX_i-\bmu_{1}(U_i))$ is bounded almost surely. Notice that
	\begin{align*}
	    \txi(U_i)^{\top}(\bX_i-\bmu_{1}(U_i))|U_i \sim \mathcal{N}\LRs{0, \txi(U_i)^{\top}\bSigma(U_i)\txi(U_i)},
	\end{align*}
	where $\E[\txi(U_i)^{\top}\bSigma(U_i)\txi(U_i)] \lesssim L_n^{-1}$ and $\txi(U_i)^{\top}\bSigma(U_i)\txi(U_i)$ is almost surely bounded. Using the second assertion of Lemma \ref{lemma:normal_moment_bound} with any $\eta > 0$, together with $\twonorm{\bM^{\ell} \tnu(U_i)} \leq \|\bM^{\ell}\|_{F} \twonorm{\tnu(U_i)} \leq \sqrt{p}a_n$, we can verify that
 	\begin{align*}
	    \E\LRm{V_i^2 e^{\eta |V_i|}} &\lesssim p a_n^2 L_n^{-1},
	\end{align*}
	Applying Lemma \ref{bern_ineq}, it is easy to show
	\begin{equation}\label{mat_x_3}
		\max_{1\leq \ell\leq n^{MpL_n}} \LRabs{\frac{1}{n}\sum_{i \in \mathcal{I}_1}[\txi(U_i)^{\top}(\bX_i-\bmu_{1}(U_i))][\tnu(U_i)^{\top}(\bM^{\ell})^{\top}\bB_i]} \lesssim pa_n \sqrt{\frac{L_n\log n}{n}}
	\end{equation}
	with probability at least $1 - n^{-\vartheta L_n p}$. By choosing sufficiently large $M$ and 
	substituting (\ref{mat_x_1})-(\ref{mat_x_3}) into \eqref{eq:A_matrix_bound}, we can finish the proof.
	
	For the case in the matrix-vector-product, we notice that
	\begin{align*}
	    \bnu^{\top}\bA_i\tgamma &= \sqrt{L_n}[\tbtheta(U_i)^{\top}(\bX_i-\bmu_{1}(U_i))][\tnu(U_i)^{\top}(\tbM^{\top}\bB_i - \bmu(U_i))]\\
		& + \sqrt{L_n}[\tbtheta(U_i)^{\top}(\bX_i-\bmu_{1}(U_i))][\tnu(U_i)^{\top}(\hbM - \tbM)^{\top}\bB_i].
	\end{align*}
	By utilizing the same chaining technique and applying Lemma \ref{bern_ineq}, we can prove
	\begin{equation*}
	    \left|\frac{1}{n}\sum_{i \in \mathcal{I}_1} \bnu^{\top}\bA_i \tgamma\right|\lesssim a_n L_n p\sqrt{\frac{\log n}{n}},
	\end{equation*}
    holds with probability at least $1 - n^{-\vartheta pL_n}$. Then we can finish the proof.
\end{proof}

\subsection{Proof of Lemma \ref{lemma_matrix_bound_5}}\label{proof:lemma_matrix_bound_5}
\begin{proof}
    We prove Lemma \ref{lemma_matrix_bound_5} under the good event $\mathcal{A}$ defined in Section \ref{proof:lemma_matrix_bound_5}. We prove the bound for the operator norm of the matrix, and $\ell_2$ norm bound for the matrix-vector-product is similar.
	It suffices to show that for any fixed $\bnu, \bxi \in \mathbb{S}^{pL_n - 1}$ such that
	\begin{equation*}
		\LRabs{\frac{1}{n}\sum_{i\in \mathcal{I}_1} \bnu^{\top}\bG(U_i)\bxi - \frac{1}{n}\sum_{i\in \mathcal{I}_2} \bnu^{\top}\bG(U_i)\bxi} \leq CL_n^{3/2}p\sqrt{\frac{\log n}{n}} \LRs{\sqrt{\frac{L_n\log n}{n}}+L_n^{-d}},
	\end{equation*}
	holds with probability at least $1 - n^{-\vartheta pL_n}$. To simplify notations, we denote $\bdelta(U_i) = \bmu_1(U_i)-\bmu_2(U_i)$ and write $\hmu(u) = \hbM^{\top}\bB(u)$, where the $j$-th column of $\hbM$ equals to $(\halpha_{1j}+\halpha_{2j})/2$. Recall the definition of $\bG(U_i)$, we have
	\begin{align*}
	    \bG(U_i) &= \LRs{\bdelta(U_i)\LRs{\widehat{\bmu}(U_i) - \bmu(U_i)}^{\top}}\otimes \LRs{\bB_i\bB_i^{\top}}\\
	    &=\underbrace{\LRs{\bdelta(U_i)\LRs{\hbM^{\top}\bB(U_i) - \tbM^{\top}\bB(U_i)}^{\top}}\otimes \LRs{\bB_i\bB_i^{\top}}}_{\bG_1(U_i)}\\
	    &+\underbrace{\LRs{\bdelta(U_i)\LRs{\tbM^{\top}\bB(U_i) - \bmu(U_i)}^{\top}}\otimes \LRs{\bB_i\bB_i^{\top}}}_{\bG_2(U_i)}.
	\end{align*}
	It yields that
	\begin{equation}\label{mat_proof_1}
		\begin{aligned}
			&\LRabs{\frac{1}{n}\sum_{i\in \mathcal{I}_1} \bnu^{\top}\bG(U_i)\bxi - \frac{1}{n}\sum_{i\in \mathcal{I}_2} \bnu^{\top} \bG(U_i)\bxi}\\
			 \leq & \underbrace{\LRabs{\frac{1}{n}\sum_{i\in \mathcal{I}_1} \bnu^{\top}\bG_1(U_i)\bxi - \frac{1}{n}\sum_{i\in \mathcal{I}_2} \bnu^{\top}\bG_1(U_i)\bxi}}_{\Pi_1}\\
			 + & \underbrace{\LRabs{\frac{1}{n}\sum_{i\in \mathcal{I}_1} \bnu^{\top}\bG_2(U_i)\bxi - \frac{1}{n}\sum_{i\in \mathcal{I}_2} \bnu^{\top}\bG_1(U_i)\bxi}}_{\Pi_2}.
		\end{aligned}
	\end{equation}
    Denote $\tnu(U_i) = (\bnu_{(1)}^{\top}\bB_i^{*},...,\bnu_{(p)}^{\top}\bB_i^{*})^{\top}$ and $\txi(U_i) = (\bxi_{(1)}^{\top}\bB_i^{*},...,\bxi_{(p)}^{\top}\bB_i^{*})^{\top}$. Then we have
    \begin{align*}
        \bnu^{\top}\bG_1(U_i)\bxi = \LRs{\tnu(U_i)^{\top}\bdelta(U_i)}\LRs{\txi(U_i)^{\top}\LRs{\hbM - \tbM}^{\top}\bB_i} =: T_i\LRs{\hbM - \tbM}.
    \end{align*}
    Here we use the same notation $\boldsymbol{\Xi}$ in Section \ref{proof:lemma_matrix_bound_4}. From Proposition \ref{pro_mean_bound}, we know that
	\begin{equation}\label{mat_proof_2}
		\begin{aligned}
			\Pi_1 &= \LRabs{\frac{1}{n}\sum_{i\in \mathcal{I}_1}T_i\LRs{\hbM - \tbM} - \frac{1}{n}\sum_{i\in \mathcal{I}_2}T_i\LRs{\hbM - \tbM}}\\
			&\leq \sup_{\bM \in \boldsymbol{\Xi}} \LRabs{\frac{1}{n}\sum_{i\in \mathcal{I}_1}T_i\LRs{\bM} - \frac{1}{n}\sum_{i\in \mathcal{I}_2}T_i\LRs{\bM}}\\
			& = \sup_{\bM \in \boldsymbol{\Xi}} \LRabs{\frac{1}{n}\sum_{i\in \mathcal{I}_1}\LRs{T_i\LRs{\bM} - \E\LRm{T_i\LRs{\bM}}} - \frac{1}{n}\sum_{i\in \mathcal{I}_2}\LRs{T_i\LRs{\bM} - \E\LRm{T_i\LRs{\bM}}}}\\
			& \leq \underbrace{\sup_{\bM \in \boldsymbol{\Xi}} \LRabs{\frac{1}{n}\sum_{i\in \mathcal{I}_1}T_i(\bM) - \E\LRm{T_i(\bM)}}}_{\Pi_{11}}+ \underbrace{\sup_{\bM \in \boldsymbol{\Xi}} \LRabs{\frac{1}{n}\sum_{i\in \mathcal{I}_2}T_i(\bM) - \E\LRm{T_i(\bM)}}}.
		\end{aligned}
	\end{equation}
	In fact, the second equality holds since the randomness of $T_i(\bM)$ is from $U_i$ and 
	$U_i$ is independent of $Y_i$. Using the facts $\twonorm{\bB_i} \leq \sqrt{L_n}$, $\twonorm{\bdelta(U_i)} \leq \delta_p$, $\twonorm{\txi(U_i)} \leq 1$ and $\twonorm{\tnu(U_i)} \leq 1$, and following the chaining arguments in Section \ref{proof:lemma_matrix_bound_4}, we have
	\begin{equation}\label{mat_proof_3}
		\begin{aligned}
			\Pi_{11}&\leq \max_{1\leq \ell \lesssim n^{MpL_n}}\LRabs{\frac{1}{n}\sum_{i\in \mathcal{I}_1}T_i\LRs{\bM^{\ell}} - \E\LRm{T_i\LRs{\bM^{\ell}}}}+ a_n \sqrt{p} L_n n^{-M},
		\end{aligned}
	\end{equation}
	where $a_n = O(\sqrt{L_n\log n/n} + L_n^{-d})$.	In addition, notice that
	\begin{align*}
	    \LRs{\E\LRm{T_i\LRs{\bM^{\ell}}}}^2 &\leq \E[(\tnu(U_i)^{\top}\bdelta(U_i))^2] \E[(\txi(U_i)^{\top}(\bM^{\ell})^{\top}\bB_i^{*})^2]\\
	    &\leq \E\LRm{\twonorm{\tnu(U_i)}^2\twonorm{\bdelta(U_i)}^2}\E\LRm{\twonorm{\bM^{\ell} \txi(U_i)}^2\twonorm{\bB_i^*}^2}\\
	    &\leq p \delta_p^2 a_n^2 \E\LRm{\twonorm{\tnu(U_i)}^2}\\
	    & \leq p \delta_p^2 a_n^2 \E\LRm{\sum_{j=1}^p\twonorm{\bnu_j}^2 \twonorm{\bB_i^*}^2}\\
	    &\lesssim p \delta_p^2 a_n^2 L_n^{-1},
	\end{align*}
	and
	\begin{align*}
	    \E\LRm{T_i^2\LRs{\bM^{\ell}}} &= \E\LRm{\LRs{\tnu(U_i)^{\top}\bdelta(U_i)}^2\LRs{\txi(U_i)^{\top}(\bM^{\ell})^{\top}\bB_i^{*}}^2}\\
	    &\leq p a_n^2 \E\LRm{\LRs{\tnu(U_i)^{\top}\bdelta(U_i)}^2}\\
	    &\lesssim p\delta_p^2 a_n^2L_n^{-1}.
	\end{align*}
	In fact, we used $\twonorm{\bM^{\ell}\txi(U_i)}^2 = \|\bM^{\ell}\|_{F}^2\twonorm{\txi(U_i)}^2\leq p a_n$ and $\twonorm{\bB_i}^2 \leq L_n$ in the relations above. Thus we have
	\begin{equation*}
		\begin{aligned}
			&\E\LRm{\LRs{T_i\LRs{\bM^{\ell}} - \E \LRm{T_i\LRs{\bM^{\ell}}}}^2 e^{\eta|T_i\LRs{\bM^{\ell}} - \E[T_i\LRs{\bM^{\ell}}]|}}\\
			\leq &  2e^{\eta|\E [T_i\LRs{\bM^{\ell}}]|}\LRl{\E\LRm{T_i^2\LRs{\bM^{\ell}} e^{\eta|T_i\LRs{\bM^{\ell}}|}} + \LRs{\E \LRm{T_i\LRs{\bM^{\ell}}}}^2\E[e^{\eta|\E \LRm{T_i\LRs{\bM^{\ell}}}|}]}\\
			\lesssim & \E\LRm{T_i^2\LRs{\bM^{\ell}}} + \LRs{\E\LRm{T_i\LRs{\bM^{\ell}}}}^2\\
			\leq & 2\E\LRm{\LRs{\tnu(U_i)^{\top}\bdelta(U_i)}^2\LRs{\txi(U_i)^{\top}\LRs{\bM^{\ell}}^{\top}\bB_i}^2}\\
			\lesssim & p L_n a_n^2\E\LRm{\LRs{\tnu(U_i)^{\top}\bdelta(U_i)}^2}\leq  pL_n\delta_p^2a_n^2 \E\LRm{\twonorm{\tnu(U_i)}^2}\\
			\leq & pL_n\delta_p^2a_n^2 \sum_{j=1}^p \E\LRm{\LRs{\bnu_{(j)}^{\top}\bB_i^*}^2}\lesssim  p\delta_p^2a_n^2\sum_{j=1}^p\twonorm{\bnu_{j}}^2 = p\delta_p^2a_n^2.
		\end{aligned}
	\end{equation*}
	where the second inequality holds since $T_i\LRs{\bM^{\ell}}$ is bounded. In accordance with Lemma \ref{bern_ineq}, we obtain that 
	\begin{equation}\label{mat_proof_4}
		\Prob\LRs{\Pi_{11}\lesssim  pa_nL_n^{3/2}\sqrt{\frac{\log n}{n}} + 2a_n \sqrt{p} L_n n^{-M}} \geq 1 - n^{-\vartheta pL_n}.
	\end{equation}
	Similar bound also holds for $\Pi_{12}$, in conjunction with (\ref{mat_proof_2})-(\ref{mat_proof_4}) and proper choice for $M$, we are guaranteed that
	\begin{equation}\label{mat_proof_5}
		\Prob\LRs{\Pi_1 \lesssim p a_n L_n^{3/2} \sqrt{\frac{\log n}{n}}} \geq 1 - n^{-\vartheta pL_n}.
	\end{equation}
	Now let $W_i = \LRs{\tnu(U_i)^{\top}\bdelta(U_i)}\LRs{\txi(U_i)^{\top}[\tbM^{\top}\bB_i-\bmu(U_i)]}$, then it follows from the dependence between $U_i$ and $Y_i$ that
	\begin{align*}
	    \Pi_2 \leq \LRabs{\frac{1}{n}\sum_{i \in \mathcal{I}_1} W_i - \E[W_i]} + \LRabs{\frac{1}{n}\sum_{i \in \mathcal{I}_2} W_i - \E[W_i]}.
	\end{align*}
	Since $W_i$ is bounded and $\sup_{u}|\tbM_{j}^{\top}\bB(u) - \mu_{j}(u)| \lesssim L_n^{-d} \leq a_n$, we have
	\begin{align*}
	    \E\LRm{\LRs{W_i - \E[W_i]}^2 e^{\eta |W_i - \E[W_i]|}}&\lesssim \E[W_i^2] + (\E[W_i])^2\\
	    &\lesssim \delta_p^2  \E[\twonorm{\tbM^{\top}\bB_i-\bmu(U_i)}^2\twonorm{\txi(U_i)}^2]\\
	    &\leq \delta_p^2 \E[\sup_{u}\twonorm{\tbM^{\top}\bB(u)-\bmu(u)}^2\twonorm{\txi(U_i)}^2]\\
	   &\lesssim   \delta_p^2 p a_n^2 L_n^{-1}.
	\end{align*}
	Applying Lemma \ref{bern_ineq}, we get
	\begin{equation}\label{mat_proof_6}
		\Prob\LRs{\Pi_2 \lesssim p a_n \sqrt{\frac{\log n}{n}}} \geq 1 - n^{-\vartheta pL_n}.
	\end{equation}
	Plugging (\ref{mat_proof_6}) and (\ref{mat_proof_5}) into (\ref{mat_proof_1}), we finish the proof Lemma \ref{lemma_matrix_bound_3}.
\end{proof}

\begin{algorithm}[tb]
   \caption{ISTA with backtracking line-search}
   \label{alg:ISTA}
    \begin{algorithmic}
   \STATE {\bfseries Input:} Initial point $\gamma_0 \in \mathbb{R}^{p L_n}$, number of iterations $T$, shrinking rate $\rho \in (0, 1)$, initial step size $\eta_0 \in (0, 1)$.
 
   \FOR{$t = 0,1,...,T-1$}
   \STATE Compute the gradient: $\nabla g(\bgamma^t) = \bD_n \bgamma^t - \bb_n$.
   \STATE Find the smallest nonnegative integer $i_t$ such that with $\eta = \rho^{i_t} \eta_{t-1}$
   \begin{align*}
       g(Q_{\eta}(\bgamma^t)) \leq g(\bgamma^t) + (Q_{\eta}(\bgamma^t) - \bgamma^t)^{\top}\nabla g(\bgamma^t) + \frac{1}{2\eta}\twonorm{Q_{\eta}(\bgamma^t) - \bgamma^t}^2.
   \end{align*}
   \STATE Set $\eta_{t} = \rho^{i_t} \eta_{t-1}$ and update $\bgamma^{t+1} = Q_{\eta_{t}}(\bgamma^t)$.
   \ENDFOR
   \STATE {\bfseries Output:} The final solution $\bgamma^{\top}$.
    \end{algorithmic}
\end{algorithm}
\section{Iterative Shrinkage Thresholding Algorithm}\label{appen:ISTA}
Next we take ISTA as an example to illustrate the optimization procedure to solve \eqref{lasso_2}. Denote $g(\bgamma) = \frac{1}{2}\bgamma^{\top}\bD_n\bgamma - \bb_n^{\top}\bgamma$. Given a point $\bgamma \in \mathbb{R}^{p L_n}$, ISTA approach updates the solution through solving the following subproblem
\begin{align*}
    \bgamma_{+} &= Q_{\eta}(\bgamma)\\
    &= \arg\min_{\boldsymbol{z} \in \mathbb{R}^{p L_n}}\LRl{g(\bgamma) + (\boldsymbol{z}-\bgamma)^{\top}\nabla g(\bgamma) + \frac{1}{2\eta}\twonorm{\boldsymbol{z} - \bgamma}^2 + \lambda_n \sum_{j=1}^p \twonorm{\boldsymbol{z}_{(j)}}}\\
    &=\arg\min_{\boldsymbol{z}\in \mathbb{R}^{p L_n}}\LRl{\frac{1}{2\eta}\sum_{j=1}^p \twonorm{\boldsymbol{z}_{(j)} - (\bgamma - \eta \nabla g(\bgamma))_{(j)}}^2 + \lambda_n \twonorm{\boldsymbol{z}_{(j)}}},
\end{align*}
where $\eta$ is the step size. The solution of the subproblem is given by the soft-thresholding operator, that is
\begin{align*}
    (\bgamma_{+})_{(j)} = \frac{(\bgamma - \eta \nabla g(\bgamma))_{(j)}}{\twonorm{(\bgamma - \eta \nabla g(\bgamma))_{(j)}}}\max\LRl{0, \twonorm{(\bgamma - \eta \nabla g(\bgamma))_{(j)}} - \eta\lambda_n}.
\end{align*}
The step size $\eta$ is determined by a backtracking line-search \citep{beck2009fast} such that
\begin{align*}
    g(\bgamma_{+}) \leq g(\bgamma) + (\bgamma_{+} - \bgamma)^{\top}\nabla g(\bgamma) + \frac{1}{2\eta}\twonorm{\bgamma_{+} - \bgamma}^2.
\end{align*}
Another simple choice for $\eta$ is $1/\twonorm{\bD_n}$, whereas it usually leads a very small step sizes and slow convergence \citep{qin2013efficient}. For the ease of reference, we provide the detailed procedure in Algorithm \ref{alg:ISTA}.

\end{appendices}

\end{document}